\documentclass[11pt]{amsart}
\pdfoutput=1

\usepackage[utf8]{inputenc}
\usepackage[margin=3cm]{geometry}
\usepackage{amssymb,amsmath,amsthm,amsfonts}
\usepackage{mathtools}
\usepackage{mathrsfs}
\usepackage{bm}
\usepackage{tikz}
\usepackage{tikz-cd}
\usepackage{hyperref}
\hypersetup{hidelinks}
\usepackage{csquotes}
\usepackage[UKenglish]{babel}
\usepackage[UKenglish]{isodate}

\numberwithin{equation}{section}

\newtheorem{theorem}{Theorem}[section]
\newtheorem*{theorem*}{Theorem}
\newtheorem{lemma}[theorem]{Lemma}
\newtheorem{corollary}[theorem]{Corollary}
\newtheorem{defn}[theorem]{Definition}

\newtheorem{prop}[theorem]{Proposition}

\newtheorem*{conjecture*}{Conjecture}
\newtheorem*{idea*}{Main idea}
\theoremstyle{remark}
\newtheorem{remark}[theorem]{Remark}
\theoremstyle{remark}
\newtheorem{example}[theorem]{Example}
\theoremstyle{remark}
\newtheorem*{notation*}{Notation}
\theoremstyle{remark}
\newtheorem*{conventions*}{Conventions}

\newcommand{\R}{\mathbf R}

\newcommand{\Borel}{\mathcal B}
\newcommand{\Fil}{\mathcal F}

\newcommand{\X}{\mathsf X}

\newcommand{\Path}{\mathsf P}
\newcommand{\Prob}{\mathcal P}
\newcommand{\Law}{\mathrm{Law}}

\newcommand{\Range}{\mathrm{Range}}

\newcommand{\dd}{\mathrm d}
\newcommand{\bE}{\mathbf E}
\newcommand{\prob}{\mathbf P}
\newcommand{\Ent}{\mathrm H}

\newcommand{\CMI}{\mathrm I}

\newcommand{\Cov}{\mathrm{Cov}}
\newcommand{\ep}{\mathrm{ep}}
\newcommand{\LL}{\mathscr L}

\newcommand{\one}{\mathbf 1}
\newcommand{\ci}{\perp\!\!\!\perp}

\title{The nonequilibrium statistical mechanics of Markov interacting particles}
\author{Dalton A R Sakthivadivel}
\address{Department of Mathematics, CUNY Graduate Centre, 365 Fifth Avenue, New York, NY 10016}
\email{dsakthivadivel@gc.cuny.edu}
\urladdr{https://darsakthi.github.io}
\date{\today}
\subjclass[2020]{60H07, 60J60, 60J25, 60G15, 60G22, 62F15, 82C31}

\begin{document}

\begin{abstract}
We consider coupled stochastic systems decomposed into exterior, boundary, and interior variables, with the boundary variables sometimes carrying the directed structure of a sensor and actuator.  The central question is when the conditional law of histories factorises, and how this path space statement is detected by log likelihoods, by Girsanov changes of measure, and by information theoretic quantities used in nonequilibrium statistical physics.  The basic object is a regular conditional probability on a path space.  Under domination by clamped reference laws, the boundary property becomes multiplicative separation of a Radon--Nikodym derivative, or equivalently additive separation of a path log likelihood.  For It\=o diffusions this log likelihood is computed by Girsanov's theorem; its expectation is the quadratic control energy appearing in the F\"ollmer entropy identity and in the stochastic control formulation of Schr\"odinger bridge problems.  When exact factorisation fails, the remaining coupling is measured by conditional mutual information, namely the relative entropy between the true boundary-conditioned path law and the product of its conditional marginals.  This gives a common language for boundary screening, path likelihood inference, controlled changes of path law, and the thermodynamic value of mutual information.
\end{abstract}

\maketitle

\setcounter{tocdepth}{1}

\tableofcontents

\section{Introduction}

A Markov boundary is often introduced as a graphical separation condition.  In the present discussion the graph is not the object of study.  The object of study is the conditional law of paths.  One begins with a stochastic process
\[
	Z_t = (Y_t,B_t,X_t),
\]
where $Y$ denotes exterior variables, $X$ denotes interior variables, and $B$ denotes boundary variables.  The instantaneous boundary statement is
\[
	Y_t \ci X_t \mid B_t.
\]
The path space boundary statement is
\[
	Y_{[0,T]} \ci X_{[0,T]} \mid B_{[0,T]}.
\]
These statements are not equivalent.  The second is a statement about a regular conditional probability on a Polish path space.  It rules out dependence through hidden innovations, memory variables, temporal feedback channels not screened off by the boundary history, and other forms of collider effect which generate latent higher order dependency.  The first only rules out instantaneous conditional dependence after the current boundary value has been observed.

The probabilistic formulation is close to the mathematical literature on path likelihoods and measures on function spaces.  The Onsager--Machlup functional began as a path density for fluctuations of irreversible processes, and later became a precise way to define modes, maximum a posteriori estimators, and small ball asymptotics when no Lebesgue reference measure exists on the ambient function space \cite{OnsagerMachlup1953,DemboZeitouni1991,Stuart2010,DashtiLawStuartVoss2013,AyanbayevKlebanovLieSullivan2021}.  Path integral formulations of inverse problems express the same idea in the language of actions and partition functions \cite{ChangSavageChou2014}.  Our use of likelihood is more modest.  The Radon--Nikodym derivative of a boundary-conditioned path law with respect to clamped product reference laws is the object of interest, and the question is whether its logarithm contains an irreducible mixed exterior-interior term.

The control interpretation comes from Girsanov's theorem.  When two diffusion path laws have the same diffusion coefficient and different drifts, their likelihood ratio is an exponential martingale; taking expectation under the controlled law leaves exactly the quadratic drift energy \cite{Girsanov1960}.  F\"ollmer's entropy method, the Bou\'e--Dupuis representation, and the modern Schr\"odinger bridge literature make this into the identity that relative entropy on path space is the least adapted control energy needed to realise the target law from the reference Brownian law \cite{Foellmer1985,BoueDupuis1998,Lehec2013,Leonard2014,ChenGeorgiouPavon2016}.  The boundary problem studied here is obtained by applying this identity after a boundary path has been fixed.

The information theoretic interpretation is equally classical in statistical physics.  Mutual information is a relative entropy between a joint law and the product of its marginals; conditional mutual information is the same comparison performed fibrewise over the boundary.  Stochastic thermodynamics uses mutual information to quantify information flow, prediction, and the thermodynamic effect of statistical dependence \cite{StillSivakBellCrooks2012,HorowitzEsposito2014,ParrondoHorowitzSagawa2015}.  Recent work in complex systems has emphasised that mutual information can separate intrinsic interactions from environmental or multiscale contributions, and can detect how information propagates across timescales \cite{NicolettiBusiello2021,NicolettiBusiello2022,NicolettiBusiello2024}.  This motivates treating the boundary error as a conditional mutual information on path space rather than as a coordinate dependent norm.

The resulting picture is as follows.  A boundary is a disintegration kernel which factorises.  Under domination, this is a log likelihood separation criterion.  In It\=o models, Girsanov computes the log likelihood and identifies its expectation with a control energy.  In nonequilibrium statistical physics, the same relative entropy is read as the information lost by imposing a separated ensemble, and, after multiplying by the thermal scale, as an excess free energy or work lower bound under the usual thermodynamic conventions.

\section{Preliminaries on measure theory}

\subsection{Standard Borel spaces and kernels}

\begin{defn}
Let $(X,\Borel_X)$ and $(Y,\Borel_Y)$ be measurable spaces.  A Markov kernel from $Y$ to $X$ is a map
\[
        K\colon Y\times\Borel_X\longrightarrow [0,1]
\]
such that $K(y,\cdot)$ is a probability measure on $X$ for every $y$, and $y\mapsto K(y,A)$ is $\Borel_Y$-measurable for every $A\in\Borel_X$.  We write $K_y(\dd x)$ or $K(\dd x\mid y)$.
\end{defn}

\begin{defn}
Let $X,Y$ be standard Borel spaces and let $P\in\Prob(X\times Y)$.  A regular conditional probability of $X$ given $Y$ is a kernel $K:Y\to\Prob(X)$ such that
\[
        P(A\times C)=\int_C K_y(A)\,P_Y(\dd y)
\]
for all $A\in\Borel_X$ and $C\in\Borel_Y$.  We write
\[
        P(\dd x,\dd y)=P(\dd x\mid y)P_Y(\dd y).
\]
\end{defn}

The following theorem is standard and establishes the result necessary concerning the existence and essential uniqueness of disintegration. We will give the argument for completeness.

\begin{theorem}
Let $X,Y$ be standard Borel spaces and let $P\in\Prob(X\times Y)$.  Then there exists a regular conditional probability $P(\dd x\mid y)$.  If $K$ and $K'$ are two such kernels, then
\[
        K_y=K'_y
\]
for $P_Y$-almost every $y$.
\end{theorem}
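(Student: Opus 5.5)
The plan is to reduce to the case $X=\R$ (or $X\subseteq[0,1]$) via a Borel isomorphism. I will then build the kernel from conditional distribution functions indexed by rationals, and prove uniqueness using a countable generating $\pi$-system.

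\textbf{Reduction.} Every standard Borel space is Borel isomorphic to a Borel subset of $[0,1]$ (Kuratowski). Fix such an injection $\phi\colon X\to[0,1]$ with Borel image and Borel inverse on the image. Push $P$ forward to $\tilde P=(\phi\times\mathrm{id})_\#P$ on $[0,1]\times Y$. If $\tilde K$ is a regular conditional probability for $\tilde P$, then $\tilde K_y(\phi(X))=1$ for $P_Y$-a.e.\ $y$, since $\int \tilde K_y(\phi(X))\,P_Y(\dd y)=\tilde P(\phi(X)\times Y)=1$. On the exceptional null set, which is Borel, I replace $\tilde K_y$ by a fixed point mass in $\phi(X)$. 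Setting $K_y=\tilde K_y\circ\phi$ then gives the kernel for $P$. So it suffices to treat $X=[0,1]$.

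\textbf{Existence for $X=[0,1]$.} For each rational $q$, the measure $C\mapsto P([0,q]\times C)$ is absolutely continuous with respect to $P_Y$. Let $F_q(y)$ be a Radon--Nikodym derivative. Countably many a.e.\ conditions then hold off a single $P_Y$-null Borel set $N$:
\begin{itemize}
\item $0\le F_q\le 1$, and $F_q\le F_{q'}$ for $q<q'$;
\item $F_q=1$ for $q\ge1$;
\item right continuity $F_q=\lim_n F_{q+1/n}$, which holds a.e.\ by monotone convergence applied to the integrals over each $C$.
\end{itemize}
For $y\notin N$, I define $F(t,y)=\inf_{q>t}F_q(y)$. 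This is a distribution function, and I let $K_y$ be the Lebesgue--Stieltjes measure it determines; for $y\in N$, I set $K_y=\delta_0$.

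Measurability of $y\mapsto K_y(A)$ holds for $A=[0,t]$ and extends to all Borel $A$ by Dynkin's $\pi$-$\lambda$ theorem. The identity $P(A\times C)=\int_C K_y(A)\,P_Y(\dd y)$ holds for $A=[0,q]$ by construction. It extends to intervals $[0,t]$ by monotone convergence, and then to all Borel $A$, again by $\pi$-$\lambda$, since both sides are finite measures in $A$ for fixed $C$.

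\textbf{Uniqueness.} Let $\mathcal A$ be a countable $\pi$-system generating $\Borel_X$; after transport, take the sets $[0,q]$. For each $A\in\mathcal A$, both $K_\cdot(A)$ and $K'_\cdot(A)$ are densities of $C\mapsto P(A\times C)$ with respect to $P_Y$. Hence they agree off a null set $N_A$. Off the null set $\bigcup_{A\in\mathcal A}N_A$, the measures $K_y$ and $K'_y$ agree on $\mathcal A$, so they agree on $\Borel_X$ by the $\pi$-$\lambda$ theorem.

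\textbf{Main obstacle.} The main obstacle is the existence step: the a.e.\ ambiguity of each $F_q$ must be controlled simultaneously. This works only because there are countably many rationals, which is exactly where standard Borelness of $X$ enters. The Borel isomorphism reduction is quoted rather than proved. Standard Borelness of $Y$ is not really needed beyond measurability.
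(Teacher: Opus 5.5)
Your proposal is correct and follows the same route as the paper: reduce to a Borel subset of $[0,1]$, build conditional distribution functions on the countable class of rational intervals, extend by a $\pi$--$\lambda$ (monotone class) argument, and get uniqueness by intersecting the null sets over a countable generating $\pi$-system. You supply details the paper only sketches, namely the simultaneous a.e.\ regularisation of the $F_q$ and the null-set repair so the transported kernel lives on $\phi(X)$.
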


\begin{proof}
This is the standard disintegration theorem for standard Borel spaces.  The existence follows by identifying standard Borel spaces with Borel subsets of $[0,1]$, constructing conditional distribution functions on a countable generating class, and extending by monotone class.  Essential uniqueness follows from the same countable generator.  If $\mathcal C$ is a countable $\pi$-system generating $\Borel_X$, then for each $A\in\mathcal C$ the functions $y\mapsto K_y(A)$ and $y\mapsto K'_y(A)$ are two versions of $\bE[\one_{\{X\in A\}}\mid Y=y]$, hence agree $P_Y$-a.s.  Intersecting the exceptional null sets over the countable class $\mathcal C$, the two kernels agree on $\mathcal C$ for all remaining $y$.  The monotone class theorem then gives equality on all Borel sets.
\end{proof}

\subsection{Conditional independence}

\begin{defn}
Let $X,Y,Z$ be random elements in standard Borel spaces.  We say that $X$ and $Y$ are conditionally independent given $Z$, and write
\[
        X\ci Y\mid Z,
\]
if for every bounded measurable $f$ and $g$,
\[
        \bE[f(X)g(Y)\mid Z]
        =\bE[f(X)\mid Z]\,\bE[g(Y)\mid Z]
\]
almost surely.
\end{defn}

The following lemma gives the kernel factorisation criterion.

\begin{lemma}
Let $X,Y,Z$ be standard Borel random elements with joint law $P$.  Let
\[
        P(\dd x,\dd y,\dd z)=P_{X,Y\mid Z=z}(\dd x,\dd y)P_Z(\dd z)
\]
be a disintegration.  Then
\[
        X\ci Y\mid Z
\]
if and only if
\begin{equation}\label{kernel-factorisation-eq}
        P_{X,Y\mid Z=z}
        =P_{X\mid Z=z}\otimes P_{Y\mid Z=z}
\end{equation}
for $P_Z$-almost every $z$.
\end{lemma}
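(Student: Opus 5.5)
The plan is to translate both sides into statements about versions of conditional expectations, and then pass from a countable generating class to all Borel sets exactly as in the proof of the disintegration theorem above. First I fix the marginal kernels. Starting from the joint kernel $P_{X,Y\mid Z=z}$, I define $P_{X\mid Z=z}$ and $P_{Y\mid Z=z}$ as its two coordinate marginals, that is, its pushforwards under the projections. Each of these is again a Markov kernel and is a regular conditional probability of $X$ (resp.\ $Y$) given $Z$. By the essential uniqueness part of the preceding theorem, any other choice of these kernels differs only on a $P_Z$-null set, so \eqref{kernel-factorisation-eq} does not depend on which versions are used. The basic identity I will use is that, for bounded measurable $h$ on the product space,
\[
z\mapsto \int h\,\dd P_{X,Y\mid Z=z}
\]
is a version of $\bE[h(X,Y)\mid Z=z]$. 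I would prove this first for indicators of rectangles, directly from the definition of a regular conditional probability with the sets $A\times B$ and $C$. The monotone class theorem then extends it to indicators of all Borel sets, and simple-function approximation extends it to bounded measurable $h$.

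For the direction ($\Leftarrow$), take $h=f\otimes g$. The identity then gives $\bE[f(X)g(Y)\mid Z]=\int f\otimes g\,\dd P_{X,Y\mid Z}$ almost surely. On the full-measure set where \eqref{kernel-factorisation-eq} holds, Fubini's theorem factors this integral as $\int f\,\dd P_{X\mid Z}\cdot\int g\,\dd P_{Y\mid Z}$. These two factors are versions of $\bE[f(X)\mid Z]$ and $\bE[g(Y)\mid Z]$, which is exactly the required conditional independence.

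For the direction ($\Rightarrow$), let $\mathcal C_X$ and $\mathcal C_Y$ be countable $\pi$-systems generating $\Borel_X$ and $\Borel_Y$. They exist because the spaces are standard Borel. Apply the definition of conditional independence with $f=\one_A$ and $g=\one_B$ for $A\in\mathcal C_X$ and $B\in\mathcal C_Y$. Combined with the identity, this shows that $P_{X,Y\mid Z=z}(A\times B)=P_{X\mid Z=z}(A)P_{Y\mid Z=z}(B)$ holds outside a $P_Z$-null set $N_{A,B}$. Let $N$ be the countable union of all the $N_{A,B}$. For $z\notin N$, the probability measures $P_{X,Y\mid Z=z}$ and $P_{X\mid Z=z}\otimes P_{Y\mid Z=z}$ agree on the countable $\pi$-system $\{A\times B\}$, which generates $\Borel_X\otimes\Borel_Y$. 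Dynkin's $\pi$–$\lambda$ theorem then forces them to coincide, which is \eqref{kernel-factorisation-eq}.

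The only delicate point is in the direction ($\Rightarrow$). Conditional independence gives an almost-sure equality separately for each pair $(f,g)$, with an exceptional null set depending on the pair, and these cannot be combined over uncountably many sets. This is why I reduce to a countable generating class before taking the union of null sets. That reduction is exactly where the standard Borel hypothesis is used.
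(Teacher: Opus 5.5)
Your proposal is correct and follows essentially the same route as the paper. The converse direction integrates $f\otimes g$ against the factorised kernel. The forward direction restricts conditional independence to indicators of a countable generating class, discards a countable union of null sets, and extends by a monotone class ($\pi$--$\lambda$) argument. Your version supplies details the paper leaves implicit: the marginal kernels as pushforwards, and the kernel-integral identity.

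One point to tidy: include the whole spaces in $\mathcal C_X$ and $\mathcal C_Y$, so that the rectangles $A\times B$ genuinely generate the product $\sigma$-field. This costs nothing, since the rectangle identity is trivial there.
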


\begin{proof}
Assume conditional independence.  For bounded measurable $f,g$,
\[
        \int f(x)g(y)\,P_{X,Y\mid Z=z}(\dd x,\dd y)
        =\int f\,\dd P_{X\mid Z=z}\int g\,\dd P_{Y\mid Z=z}
\]
for $P_Z$-a.e. $z$.  By taking $f=\one_A$ and $g=\one_C$ on a countable generating class and applying a monotone class argument, the conditional measure agrees with the product measure on all rectangles, hence on the product sigma-field.  This proves \eqref{kernel-factorisation-eq}.  Conversely, if \eqref{kernel-factorisation-eq} holds, then integrating $f(x)g(y)$ against the factorised kernel gives the defining identity of conditional independence.
\end{proof}

The following lemma discusses Radon--Nikodym separability.

\begin{lemma}
Let $(X,\mathcal X)$ and $(Y,\mathcal Y)$ be standard Borel spaces.  Let $\lambda\in\Prob(X\times Y)$ and suppose
\[
        \lambda\ll \alpha\otimes\beta
\]
for probability measures $\alpha\in\Prob(X)$, $\beta\in\Prob(Y)$.  Write
\[
        h(x,y)=\frac{\dd\lambda}{\dd(\alpha\otimes\beta)}(x,y).
\]
Then $\lambda=\lambda_X\otimes\lambda_Y$ if and only if there exist non-negative measurable functions $u:X\to[0,\infty)$ and $v:Y\to[0,\infty)$ such that
\begin{equation}\label{rn-sep-eq}
        h(x,y)=u(x)v(y)
\end{equation}
for $\alpha\otimes\beta$-almost every $(x,y)$.
\end{lemma}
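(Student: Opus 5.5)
The plan is to prove the two directions separately, computing the densities of the marginals of $\lambda$ explicitly in terms of $h$ and then using the uniqueness of Radon--Nikodym derivatives.

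\emph{Forward direction.} Suppose $\lambda=\lambda_X\otimes\lambda_Y$. Since $\lambda\ll\alpha\otimes\beta$, the marginals are dominated: $\lambda_X\ll\alpha$ and $\lambda_Y\ll\beta$. Indeed, if $\alpha(A)=0$ then $(\alpha\otimes\beta)(A\times Y)=0$, so $\lambda_X(A)=\lambda(A\times Y)=0$; the same argument works for $\lambda_Y$. Set
\[
u=\frac{\dd\lambda_X}{\dd\alpha},\qquad v=\frac{\dd\lambda_Y}{\dd\beta},
\]
choosing finite non-negative measurable versions. For rectangles $A\times C$ we then have
\[
\int_{A\times C}u(x)v(y)\,(\alpha\otimes\beta)(\dd x,\dd y)=\lambda_X(A)\lambda_Y(C)=\lambda(A\times C),
\]
by Tonelli. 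Rectangles form a $\pi$-system generating the product $\sigma$-field, and both sides are finite measures with equal total mass $1$. The $\pi$--$\lambda$ theorem therefore gives $\lambda=(u\otimes v)\cdot(\alpha\otimes\beta)$, and uniqueness of Radon--Nikodym derivatives yields \eqref{rn-sep-eq} $\alpha\otimes\beta$-a.e.

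\emph{Converse direction.} Suppose $h=u(x)v(y)$ a.e. Set $a=\int u\,\dd\alpha$ and $b=\int v\,\dd\beta$. By Tonelli, $ab=\int h\,\dd(\alpha\otimes\beta)=1$, so $a$ and $b$ are finite and positive. This is the only delicate point: one of the integrals could a priori be infinite while the other is zero. The product $ab=1$ rules this out, provided we adopt the convention $0\cdot\infty=0$ in Tonelli. If $a=0$, then $u=0$ $\alpha$-a.e., so $h=0$ a.e. and the total mass would be $0$, a contradiction; the case $b=0$ is symmetric. Since $a,b>0$ and $ab=1$, neither is infinite.

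Next compute the marginals with Tonelli:
\[
\lambda_X(A)=\int_A u\,\dd\alpha\cdot b,\qquad \lambda_Y(C)=a\int_C v\,\dd\beta.
\]
Hence, using $ab=1$,
\[
\lambda_X(A)\lambda_Y(C)=ab\int_A u\,\dd\alpha\int_C v\,\dd\beta=\lambda(A\times C).
\]
So $\lambda$ and $\lambda_X\otimes\lambda_Y$ agree on rectangles, and the $\pi$--$\lambda$ theorem extends this agreement to the full product $\sigma$-field.

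The main obstacle is only this normalisation bookkeeping, i.e.\ confirming that the product structure of $h$ forces finite, positive normalising constants. The standard Borel hypothesis is not actually needed beyond guaranteeing a good product $\sigma$-field; alternatively, the result is the specialisation of the preceding kernel factorisation lemma to a trivial conditioning variable $Z$.
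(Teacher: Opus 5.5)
Your proof is correct and follows the paper's argument: marginal domination plus uniqueness of Radon--Nikodym derivatives in one direction, and normalisation of the two total masses to $ab=1$ in the other. You also supply the rectangle/$\pi$--$\lambda$ extension and the $0\cdot\infty$ bookkeeping that the paper leaves implicit.

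One small caveat concerns your closing aside, which is off. The kernel factorisation lemma with trivial $Z$ only says that independence is equivalent to $\lambda=\lambda_X\otimes\lambda_Y$. It says nothing about separability of the density $h$, so the lemma is not a specialisation of it.
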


\begin{proof}
If $h=uv$, then
\[
        \lambda(\dd x,\dd y)=u(x)\alpha(\dd x)v(y)\beta(\dd y).
\]
Since $\lambda$ is a probability measure, the product of the two total masses is one.  Its marginals are
\[
        \lambda_X(\dd x)=c_Yu(x)\alpha(\dd x), 
        \lambda_Y(\dd y)=c_Xv(y)\beta(\dd y),
\]
where $c_X=\int u\,\dd\alpha$ and $c_Y=\int v\,\dd\beta$.  Since $c_Xc_Y=1$, $\lambda=\lambda_X\otimes\lambda_Y$.

Conversely, if $\lambda=\lambda_X\otimes\lambda_Y$ and $\lambda\ll\alpha\otimes\beta$, then $\lambda_X\ll\alpha$ and $\lambda_Y\ll\beta$.  Indeed, if $\alpha(A)=0$, then $(\alpha\otimes\beta)(A\times Y)=0$, hence $\lambda_X(A)=\lambda(A\times Y)=0$.  Similarly for $Y$.  Thus
\[
        u=\frac{\dd\lambda_X}{\dd\alpha},
        v=\frac{\dd\lambda_Y}{\dd\beta}
\]
exist and
\[
        \frac{\dd\lambda}{\dd(\alpha\otimes\beta)}(x,y)=u(x)v(y)
\]
by uniqueness of Radon--Nikodym derivatives.
\end{proof}


\begin{lemma}
Let $X,Y,Z$ be standard Borel random elements.  Assume regular conditional probabilities have been chosen.  Then
\begin{equation}\label{cmi-rn-eq}
        \CMI(X;Y\mid Z)
        =\int \Ent\bigl(P_{X,Y\mid Z=z}\mid P_{X\mid Z=z}\otimes P_{Y\mid Z=z}\bigr)\,P_Z(\dd z),
\end{equation}
with the convention that the integrand is $+\infty$ if absolute continuity fails.  In particular, $X\ci Y\mid Z$ if and only if the right hand side vanishes.
\end{lemma}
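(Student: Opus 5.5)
The plan is to fix a definition of $\CMI(X;Y\mid Z)$ that does not already contain the disintegration formula, and then compute it. I would take the chain-rule free definition $\CMI(X;Y\mid Z)=\Ent\bigl(P_{X,Y,Z}\mid Q\bigr)$, where $Q$ is the measure
\[
	Q(\dd x,\dd y,\dd z)=P_{X\mid Z=z}(\dd x)\,P_{Y\mid Z=z}(\dd y)\,P_Z(\dd z).
\]
This is a genuine probability measure on $X\times Y\times Z$ because the chosen regular conditional probabilities are kernels. Concretely, $z\mapsto P_{X\mid Z=z}\otimes P_{Y\mid Z=z}$ is again a Markov kernel, which follows by a monotone class argument on rectangles. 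The identity \eqref{cmi-rn-eq} then amounts to the chain rule for relative entropy with respect to a common $Z$-marginal. Both $P$ and $Q$ have $Z$-marginal $P_Z$, with disintegrations $P_{X,Y\mid Z=z}$ and $P_{X\mid Z=z}\otimes P_{Y\mid Z=z}$ respectively. By the essential uniqueness of disintegrations, the choice of versions only affects a $P_Z$-null set, so the right hand side is well defined.

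\textbf{Case of absolute continuity.} First suppose $P\ll Q$ and set $h=\dd P/\dd Q$. I would show that for $P_Z$-a.e.\ $z$ the section $h(\cdot,\cdot,z)$ is a density of $P_{X,Y\mid Z=z}$ with respect to $P_{X\mid Z=z}\otimes P_{Y\mid Z=z}$. To do this, integrate against test functions $\one_{A\times C}(x,y)\one_D(z)$, using $A,C,D$ from countable generating $\pi$-systems. Comparing the two disintegrations of $P$ then gives $\int_{A\times C}h(\cdot,\cdot,z)\,\dd(P_{X\mid z}\otimes P_{Y\mid z})=P_{X,Y\mid z}(A\times C)$ for $P_Z$-a.e.\ $z$. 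Discarding countably many null sets and applying monotone class, as in the uniqueness part of the disintegration theorem, yields the claim. With this in hand, $\Ent(P\mid Q)=\int h\log h\,\dd Q$ is computed by Tonelli fibrewise. Since $h\log h\ge -e^{-1}$ is bounded below, Tonelli applies to $h\log h+e^{-1}$, and this gives \eqref{cmi-rn-eq}, with both sides possibly equal to $+\infty$.

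\textbf{Case without absolute continuity.} If $P\not\ll Q$, there is a set $N$ with $Q(N)=0$ and $P(N)>0$. Disintegrating along $Z$, the sections $N_z$ have $(P_{X\mid z}\otimes P_{Y\mid z})(N_z)=0$ for $P_Z$-a.e.\ $z$, while $P_{X,Y\mid z}(N_z)>0$ on a set of positive $P_Z$-measure. So absolute continuity fails fibrewise on a set of positive measure. By the stated convention the integrand is then $+\infty$ there, and the right hand side is $+\infty=\Ent(P\mid Q)$. Conversely, fibrewise absolute continuity for a.e.\ $z$ implies $P\ll Q$ by integrating the sections. Hence the two sides are infinite together.

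\textbf{The final claim.} The right hand side vanishes iff the nonnegative integrand vanishes $P_Z$-a.e. By the case of equality in Gibbs' inequality, this holds iff $P_{X,Y\mid Z=z}=P_{X\mid Z=z}\otimes P_{Y\mid Z=z}$ for a.e.\ $z$, which by the kernel factorisation lemma is equivalent to $X\ci Y\mid Z$.

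\textbf{Main obstacle.} I expect the main obstacle to be the measurability and identification step: showing that $z\mapsto\Ent(P_{X,Y\mid z}\mid P_{X\mid z}\otimes P_{Y\mid z})$ is measurable, and that the sections of a global density serve as fibre densities. I would handle it with the countable generator and monotone class technique already used in the disintegration theorem. As a fallback for measurability, I would use the Donsker--Varadhan variational formula: the entropy is a supremum over a countable family of bounded test functions, each of which depends measurably on $z$ through the kernels.
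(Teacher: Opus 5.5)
Your proof is correct and follows the same route as the paper. Both define $\CMI(X;Y\mid Z)$ as the relative entropy of $P_{X,Y,Z}$ against the measure obtained by integrating the product kernel $P_{X\mid Z=z}\otimes P_{Y\mid Z=z}$ over $P_Z$, get the identity from the chain rule for relative entropy with a common $Z$-marginal, and conclude with the equality case of Gibbs' inequality and the kernel factorisation lemma. The only difference is that the paper cites the chain rule, whereas you prove it: you identify the fibrewise densities, apply Tonelli to $h\log h+e^{-1}$, treat the singular case, and check measurability of the integrand.
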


\begin{proof}
The conditional mutual information is defined by the relative entropy
\[
        \Ent(P_{X,Y,Z}\mid P_{X\mid Z}\otimes P_{Y\mid Z}\otimes P_Z),
\]
where the second measure denotes the measure obtained by integrating the product conditional kernel over $P_Z$.  The chain rule for relative entropy under disintegration gives \eqref{cmi-rn-eq}.  If the integral vanishes, the integrand vanishes for $P_Z$-a.e. $z$.  Relative entropy vanishes exactly when the two probability measures coincide.  The kernel factorisation criterion then gives conditional independence.  The converse is immediate.
\end{proof}

\section{Path spaces and conditioning of trajectories}

\begin{defn}
Let $S$ be a Polish space.  The continuous path space over $[0,T]$ is
\[
        \Path_T(S)=C([0,T];S)
\]
with the topology of uniform convergence.  If $Z=(Z_t)_{0\leqslant t\leqslant T}$ is a continuous $S$-valued process, its path law is
\[
        \Law(Z_{[0,T]})\in\Prob(\Path_T(S)).
\]
\end{defn}

Since $S$ is Polish, $\Path_T(S)$ is Polish.  Hence all disintegration results above apply to path laws.

Conditioning one trajectory on another satisfies the following.

\begin{prop}
Let $X$ and $Y$ be continuous processes with Polish state spaces.  Put
\[
        Q=\Law(X_{[0,T]},Y_{[0,T]})
        \in\Prob(\Path_T(S_X)\times\Path_T(S_Y)).
\]
Then there exists a kernel
\[
        y\longmapsto Q^y\in\Prob(\Path_T(S_X))
\]
such that
\[
        Q(\dd x,\dd y)=Q^y(\dd x)Q_Y(\dd y).
\]
The conditional law $Q^y$ is the mathematically precise meaning of
\[
        \Law(X_{[0,T]}\mid Y_{[0,T]}=y).
\]
It is unique for $Q_Y$-almost every $y$.
\end{prop}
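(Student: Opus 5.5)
The plan is to reduce the proposition to the disintegration theorem stated earlier, so the only real work is checking its hypotheses. We may take $X$ and $Y$ to be continuous processes on a common probability space $(\Omega,\Fil,\prob)$ with Polish state spaces $S_X$ and $S_Y$.

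First, I will show that $\Path_T(S_X)$ and $\Path_T(S_Y)$ are Polish, hence standard Borel. As noted after the definition of path space, $C([0,T];S)$ is Polish whenever $S$ is, and I will sketch why. Fix a complete metric $d$ on $S$ and use the uniform metric $\sup_t d(x_t,y_t)$. Completeness follows because a uniform limit of continuous functions is continuous. Separability follows by approximating a path on a finite mesh of times by piecewise-geodesic-free interpolants; more cleanly, one embeds $S$ isometrically into a separable Banach space, where $C([0,T];E)$ is separable by Stone--Weierstrass-type density of polynomials with coefficients in a countable dense set, and $C([0,T];S)$ is a closed-in-relative-sense subspace. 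Alternatively, I would simply cite the standard fact, since the paper already asserts it. The product of two Polish spaces is Polish.

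Second, I will check that $Q$ is well defined as a Borel probability measure. The map $\omega\mapsto (X_{[0,T]}(\omega),Y_{[0,T]}(\omega))$ must be measurable into the Borel $\sigma$-field of the product path space. The Borel $\sigma$-field of $C([0,T];S)$ coincides with the $\sigma$-field generated by the evaluations $x\mapsto x_t$, $t\in[0,T]\cap\mathbf{Q}$. This holds because, by continuity and separability, open uniform balls are countable intersections and unions of evaluation cylinders. Since each $X_t$ is a random variable, the path map is therefore measurable, and $Q$ is its pushforward. Because the product space is separable, its Borel $\sigma$-field is the product of the Borel $\sigma$-fields, so $Q\in\Prob(\Path_T(S_X)\times\Path_T(S_Y))$.

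Third, I will apply the disintegration theorem with $\Path_T(S_X)$ playing the role of $X$ and $\Path_T(S_Y)$ playing the role of $Y$. This yields a kernel $y\mapsto Q^y$ satisfying $Q(A\times C)=\int_C Q^y(A)\,Q_Y(\dd y)$, which is exactly the meaning of $Q(\dd x,\dd y)=Q^y(\dd x)Q_Y(\dd y)$. The same theorem gives $Q_Y$-a.e.\ uniqueness. The interpretation as $\Law(X_{[0,T]}\mid Y_{[0,T]}=y)$ follows because $Q^y(A)$ is a version of $\prob(X_{[0,T]}\in A\mid Y_{[0,T]})$ evaluated at $y$, which is read off directly from the defining identity.

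The main obstacle is the second step, namely the identification of the Borel $\sigma$-field with the cylinder $\sigma$-field. I would prove it by writing $\{x:\sup_t d(x_t,a_t)\le r\}=\bigcap_{t\in\mathbf{Q}\cap[0,T]}\{d(x_t,a_t)\le r\}$ and then using separability to express open sets as countable unions of such balls.
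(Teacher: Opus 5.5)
Your proposal is correct and takes the same approach as the paper, whose proof simply applies the disintegration theorem to the standard Borel spaces $\Path_T(S_X)$ and $\Path_T(S_Y)$, Polishness having been noted just before the proposition. Your extra checks are details the paper leaves implicit, and you handle them correctly: measurability of the path map via the fact that the Borel $\sigma$-field of $C([0,T];S)$ is generated by evaluations at rational times, and the identification of the product Borel $\sigma$-field. The garbled interpolation remark in your separability sketch is unnecessary, since any subspace of a separable metric space is separable.
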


\begin{proof}
Apply the disintegration theorem to the standard Borel spaces $\Path_T(S_X)$ and $\Path_T(S_Y)$.
\end{proof}


\begin{prop}
Let $(X_t)_{t\geqslant 0}$ be a time-homogeneous Markov process on a Polish space $S$, and let $P^x$ be its law on the canonical path space starting at $x$.  Fix $s<T$.  Write $\omega^- = \omega|_{[0,s]}$ and $\omega^+=\omega|_{[s,T]}$.  Then
\[
        P^x(\dd\omega^-,\dd\omega^+)
        =P^x(\dd\omega^-)K_s(\omega^-,\dd\omega^+),
\]
where
\[
        K_s(\omega^-,\cdot)=P^{\omega^-(s)}(\cdot)
\]
up to the harmless convention that the future path starts at time $s$.
\end{prop}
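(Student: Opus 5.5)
The plan is to obtain the stated disintegration from the Markov property by way of the existence and essential uniqueness of disintegration. I work on the canonical path space $\Path_T(S)$, which is Polish, so the disintegration theorem applies. Restriction $\omega\mapsto(\omega^-,\omega^+)$ is a continuous injection into $\Path_s(S)\times C([s,T];S)$, and its image is the closed set $\{\omega^-(s)=\omega^+(s)\}$. So $P^x$ can be regarded as a law on this product, and the theorem gives a kernel $\omega^-\mapsto P^x(\dd\omega^+\mid\omega^-)$, unique $P^x$-a.e. By this uniqueness it suffices to prove two things. First, that $K_s(\omega^-,\cdot):=P^{\omega^-(s)}\circ\theta^{-1}$ is a Markov kernel, where $\theta$ is the time shift that takes a path on $[0,T-s]$ to a path on $[s,T]$. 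Second, that this $K_s$ satisfies the defining identity of a regular conditional probability.

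For measurability, I need $y\mapsto P^y(A)$ to be measurable for Borel $A\subset C([0,T-s];S)$. This is part of the standing meaning of ``Markov process with laws $P^x$'', namely that $(P^x)$ is a measurable family. Composing with the continuous map $\omega^-\mapsto\omega^-(s)$ then gives measurability of $\omega^-\mapsto K_s(\omega^-,A)$.

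For the identity, I have to show that for Borel $C\subset\Path_s(S)$ and $A\subset C([s,T];S)$,
\[
P^x(\omega^-\in C,\ \omega^+\in A)=\int_C P^{\omega^-(s)}(\theta^{-1}A)\,P^x(\dd\omega^-).
\]
I will do this in three stages.
\begin{enumerate}
\item Take $A$ a cylinder set $\{\omega(t_1)\in A_1,\dots,\omega(t_n)\in A_n\}$ with $s\le t_1<\dots<t_n\le T$, and $C$ a cylinder set in $\mathcal F_s=\sigma(\omega(r):r\le s)$. The identity is then the simple Markov property $\bE^x[F\circ\theta_s\mid\mathcal F_s]=\bE^{X_s}[F]$ for bounded cylinder functionals $F$.
\item Time-homogeneity is what identifies the conditional law of the future with $P^{X_s}$, shifted so that it starts at time $s$, rather than with some time-dependent law.
\item Both sides are finite measures in $A$ for fixed $C$, and in $C$ for fixed $A$. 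Cylinder sets form $\pi$-systems generating the Borel $\sigma$-fields of the path spaces, because for continuous paths the evaluation maps generate the Borel $\sigma$-field of the uniform topology. A $\pi$-$\lambda$ argument, applied first in $A$ and then in $C$, therefore extends the identity to all Borel rectangles. Rectangles generate the product, so the two measures agree.
\end{enumerate}

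The main obstacle is the extension from finite-dimensional cylinders to all path events. This is where I must verify that the Borel $\sigma$-field of $C([0,T];S)$ equals the cylinder $\sigma$-field. I would argue it from separability: the uniform metric is a countable supremum over rational times, so it is cylinder-measurable, and every open ball is then a cylinder event. If the process satisfies the Markov property only with respect to its natural filtration, this is still enough, since only the finite-dimensional form is used. Essential uniqueness from the disintegration theorem then identifies any other version of the conditional law with $K_s$ for $P^x$-a.e. $\omega^-$.
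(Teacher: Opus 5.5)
Your proposal is correct and follows essentially the same route as the paper: the Markov property at time $s$ identifies $\bE^x[G(X_{[s,T]})\mid\Fil_s]$ with $\bE^{X_s}[G]$ (in time-shifted notation), which is the defining identity of the kernel, and essential uniqueness of regular conditional probabilities finishes the argument. The difference is only in how much is spelled out. The paper applies the Markov property directly to bounded measurable functionals of the whole future path, whereas you start from cylinder functionals, extend by a $\pi$-$\lambda$ argument using that the Borel and cylinder $\sigma$-fields of the separable path space coincide, and check measurability of $\omega^-\mapsto P^{\omega^-(s)}(\cdot)$ explicitly.
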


\begin{proof}
For bounded measurable functionals $F$ of the past and $G$ of the future,
\[
        \bE^x[F(X_{[0,s]})G(X_{[s,T]})]
        =\bE^x\left[F(X_{[0,s]})\bE^x[G(X_{[s,T]})\mid\Fil_s]\right].
\]
The Markov property gives
\[
        \bE^x[G(X_{[s,T]})\mid\Fil_s]
        =\bE^{X_s}[G(X_{[0,T-s]})],
\]
with time shifted notation.  This is exactly the stated kernel identity.  Uniqueness follows from the uniqueness of regular conditional probabilities.
\end{proof}

\begin{remark}
Conditioning on the full path of another process usually destroys Markovianity.  Even if $(X,Y)$ is jointly Markov, the smoothing law $\Law(X_{[0,T]}\mid Y_{[0,T]}=y)$ is typically a Gibbs factor on the prior path law and depends on the whole observed path $y$, including its future relative to intermediate times.
\end{remark}

\section{Preliminaries on Gaussian processes and abstract Wiener spaces}

\subsection{Gaussian disintegration}


\begin{defn}
An abstract Wiener space is a triple $(H,B,\gamma)$, where $B$ is a separable Banach space, $H\subset B$ is a continuously and densely embedded Hilbert space, and $\gamma$ is a centred Gaussian measure on $B$ whose Cameron--Martin space is $H$.
\end{defn}

Since $B$ is Polish, all regular conditional probabilities exist.  The additional Gaussian structure gives explicit conditional laws for linear observations.

The following theorem establishes Gaussian conditioning by a linear observation.

\begin{theorem}
Let $(X,Y)$ be a centred Gaussian random element in a product of separable Hilbert spaces $H_X\times H_Y$.  Assume the covariance operators are
\[
        \Sigma=\begin{pmatrix}
            \Sigma_{XX} & \Sigma_{XY}\\
            \Sigma_{YX} & \Sigma_{YY}
        \end{pmatrix},
\]
with $\Sigma_{YY}$ injective on the closed support of $Y$.  Then a regular conditional law of $X$ given $Y=y$ is Gaussian with mean
\[
        m_{X\mid y}=\Sigma_{XY}\Sigma_{YY}^{\dagger}y
\]
and covariance
\[
        \Sigma_{X\mid Y}
        =\Sigma_{XX}-\Sigma_{XY}\Sigma_{YY}^{\dagger}\Sigma_{YX},
\]
where $\dagger$ denotes the Moore--Penrose inverse on the support.  The same formula holds in finite-dimensional projections and passes to the Hilbert limit whenever the displayed operators define a Gaussian covariance.
\end{theorem}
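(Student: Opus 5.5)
The plan is the standard residual decomposition: write $X$ as a measurable linear function of $Y$ plus a Gaussian residual that is independent of $Y$. The conditional law then follows from the disintegration theorem stated earlier.

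\emph{Finite-dimensional case.} Suppose $H_X$ and $H_Y$ are finite dimensional and put $A=\Sigma_{XY}\Sigma_{YY}^{\dagger}$ and $R=X-AY$. The pair $(R,Y)$ is a linear image of $(X,Y)$, so it is jointly centred Gaussian. Its cross covariance is
\[
\Cov(R,Y)=\Sigma_{XY}-\Sigma_{XY}\Sigma_{YY}^{\dagger}\Sigma_{YY}=\Sigma_{XY}(I-\Pi),
\]
where $\Pi=\Sigma_{YY}^{\dagger}\Sigma_{YY}$ is the orthogonal projection onto $\overline{\Range\Sigma_{YY}}$. This vanishes because $\Sigma_{XY}$ annihilates $\ker\Sigma_{YY}$. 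Indeed, if $\Sigma_{YY}v=0$ then $\langle v,Y\rangle=0$ almost surely, so $\Sigma_{XY}v=\bE[X\langle Y,v\rangle]=0$. Jointly Gaussian and uncorrelated gives $R\ci Y$.

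\emph{Identifying the conditional law.} The covariance of $R$ is
\[
\Sigma_{XX}-A\Sigma_{YX}-\Sigma_{XY}A^*+A\Sigma_{YY}A^*=\Sigma_{XX}-\Sigma_{XY}\Sigma_{YY}^{\dagger}\Sigma_{YX},
\]
using $\Sigma_{YY}^{\dagger}\Sigma_{YY}\Sigma_{YY}^{\dagger}=\Sigma_{YY}^{\dagger}$. Define the kernel $K_y=\mathcal N(Ay,\Sigma_{X\mid Y})$, the law of $Ay+R$. For bounded measurable $f,g$, independence of $R$ and $Y$ together with Fubini gives
\[
\bE[f(X)g(Y)]=\int g(y)\int f\,\dd K_y\,P_Y(\dd y).
\]
Hence $K$ is a regular conditional probability, and by the disintegration theorem it is the conditional law up to $P_Y$-null sets.

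\emph{Infinite-dimensional case.} This is the main obstacle. When $\Range\Sigma_{YY}$ is not closed, $\Sigma_{YY}^{\dagger}$ is unbounded, so $AY$ is not obviously defined. I would handle it as follows.

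First, fix an orthonormal basis $(e_k)$ of $\overline{\Range\Sigma_{YY}}$ and let $P_n$ be the projection onto its first $n$ vectors. Apply the finite-dimensional result to $(X,P_nY)$; here "finite-dimensional projections" means projecting $Y$, while $X$ may stay in $H_X$, and the argument above is unchanged. This yields $\bE[X\mid P_nY]=A_nP_nY$ with residual covariance $C_n$.

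Second, since $\sigma(P_nY)$ increases to $\sigma(Y)$ (because $Y\in\overline{\Range\Sigma_{YY}}$ almost surely), the martingale convergence theorem in $L^2(\Omega;H_X)$ gives $A_nP_nY\to\bE[X\mid Y]=:m(Y)$. The limit is a measurable, almost surely linear map of $Y$. It is the rigorous meaning of $\Sigma_{XY}\Sigma_{YY}^{\dagger}y$, namely the measurable linear extension on the Cameron--Martin structure.

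Third, the $C_n$ decrease to $C=\Sigma_{XX}-\lim\Sigma_{XY}A_n^*$. Residuals $X-A_nP_nY$ that are Gaussian and independent of $P_nY$ converge in $L^2$ to $X-m(Y)$, which is Gaussian and independent of every $P_nY$. By a $\pi$-system argument it is then independent of $Y$.

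The kernel argument from the finite-dimensional case then applies verbatim. Under the hypothesis that the displayed operators define a Gaussian covariance, $C$ coincides with the stated formula.
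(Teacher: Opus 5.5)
Your proof is correct. In finite dimensions it is the paper's argument: take the least-squares coefficient $A=\Sigma_{XY}\Sigma_{YY}^{\dagger}$, observe that the residual is uncorrelated with $Y$ and hence independent of it, then condition. You do this more carefully. You justify $\Sigma_{XY}(I-\Sigma_{YY}^{\dagger}\Sigma_{YY})=0$ by showing that $\Sigma_{XY}$ kills $\ker\Sigma_{YY}$, where the paper only says the covariance vanishes ``on the support of $Y$''. You also verify the disintegration identity by Fubini.

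The genuine difference is the Hilbert case. The paper disposes of it in one sentence, by appeal to cylindrical projections and consistency of Gaussian finite-dimensional distributions. Passing from conditioning on finitely many coordinates of $Y$ to conditioning on all of $Y$ needs a limit in the conditioning $\sigma$-algebra, and consistency of marginals does not by itself supply that. Your route supplies exactly this step: project only $Y$, use $\sigma(P_nY)\uparrow\sigma(Y)$ and martingale convergence in $L^2(\Omega;H_X)$ to obtain $\bE[X\mid Y]$ as the limit of $A_nP_nY$, then carry independence of the residuals to the limit via a $\pi$-system. It also makes explicit something the statement glosses over. Once $\Sigma_{YY}$ has infinite rank, $\Range\Sigma_{YY}$ is $P_Y$-null, so $\Sigma_{XY}\Sigma_{YY}^{\dagger}y$ can only mean an a.e.-defined measurable linear extension.

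The one loose end is your final sentence identifying $C$ with the displayed formula. It becomes immediate if you choose $(e_k)$ to be an eigenbasis of $\Sigma_{YY}$ with eigenvalues $\lambda_k>0$. Then $(P_n\Sigma_{YY}P_n)^{\dagger}=P_n\Sigma_{YY}^{\dagger}P_n$, so $A_nP_ny=\sum_{k\leqslant n}\lambda_k^{-1}\langle y,e_k\rangle\,\Sigma_{XY}e_k$. Both this and $\Sigma_{XX}-C_n$ are then literally partial sums of the eigen-expansions of $\Sigma_{XY}\Sigma_{YY}^{\dagger}y$ and $\Sigma_{XY}\Sigma_{YY}^{\dagger}\Sigma_{YX}$. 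Since $C=\Cov(X-\bE[X\mid Y])$ does not depend on the choice of basis, nothing is lost by making this choice.
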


\begin{proof}
For finite-dimensional Gaussian vectors this is the Schur-complement formula, obtained by completing the square in the joint Gaussian density.  Equivalently, choose the linear least-squares estimator $AY$ of $X$, where $A=\Sigma_{XY}\Sigma_{YY}^{\dagger}$.  Then
\[
        \Cov(X-AY,Y)=\Sigma_{XY}-\Sigma_{XY}\Sigma_{YY}^{\dagger}\Sigma_{YY}=0
\]
on the support of $Y$.  For Gaussian variables, uncorrelatedness implies independence.  Therefore
\[
        X=AY+\xi,
\]
where $\xi$ is independent of $Y$ and has covariance $\Sigma_{XX}-\Sigma_{XY}\Sigma_{YY}^{\dagger}\Sigma_{YX}$.  Conditioning on $Y=y$ gives the stated Gaussian law.  In the Hilbert case apply the finite-dimensional result to cylindrical projections and use consistency of Gaussian finite-dimensional distributions.
\end{proof}

The following theorem establishes Gaussian conditional independence.

\begin{theorem}
Let $(Y,B,X)$ be a centred Gaussian random element in a finite-dimensional Euclidean space, or more generally in a Hilbert space with non-degenerate covariance on the relevant support.  Let $K=\Sigma^{-1}$ be the precision operator in block form.  Then the following are equivalent:
\begin{enumerate}
\item $Y\ci X\mid B$.
\item The conditional covariance block vanishes:
\[
        \Sigma_{YX\mid B}:=\Sigma_{YX}-\Sigma_{YB}\Sigma_{BB}^{\dagger}\Sigma_{BX}=0.
\]
\item The $(Y,X)$-block of the precision operator vanishes:
\[
        K_{YX}=0.
\]
\end{enumerate}
\end{theorem}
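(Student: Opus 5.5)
The plan is to prove (1)$\Leftrightarrow$(2) with the Gaussian conditioning theorem above, and (2)$\Leftrightarrow$(3) by block inversion of the precision operator.

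For (1)$\Leftrightarrow$(2), apply the Gaussian conditioning theorem to the pair $((Y,X),B)$, taking $(Y,X)$ as the conditioned variable and $B$ as the observation. The regular conditional law of $(Y,X)$ given $B=b$ is Gaussian. Its mean is $\Sigma_{(YX)B}\Sigma_{BB}^{\dagger}b$. Its covariance is the Schur complement
\[
        \Sigma_{(YX)\mid B}=\begin{pmatrix}\Sigma_{YY\mid B}&\Sigma_{YX\mid B}\\ \Sigma_{XY\mid B}&\Sigma_{XX\mid B}\end{pmatrix}.
\]
This covariance does not depend on $b$, and its off-diagonal block is exactly the displayed $\Sigma_{YX\mid B}$. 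By the kernel factorisation criterion, $Y\ci X\mid B$ holds if and only if this Gaussian kernel equals the product of its $Y$- and $X$-marginals for $P_B$-a.e.\ $b$. A Gaussian measure on a product space is the product of its marginals exactly when its cross-covariance vanishes. One direction is trivial. For the other, the characteristic functional of a Gaussian with zero cross-covariance factorises, so the measure is a product. This gives (1)$\Leftrightarrow$(2).

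For (2)$\Leftrightarrow$(3), I would use block inversion. Order the variables as $((Y,X),B)$. The block inverse formula says that the $(Y,X)$-diagonal block of $K=\Sigma^{-1}$ equals $\bigl(\Sigma_{(YX)\mid B}\bigr)^{-1}$. This is where non-degeneracy is used: it makes $\Sigma_{BB}$ invertible, so $\dagger$ is a genuine inverse, and it makes the Schur complement $\Sigma_{(YX)\mid B}$ invertible. So the problem reduces to one fact about the $2\times2$ block matrix $S=\Sigma_{(YX)\mid B}$, which is symmetric positive definite. Its off-diagonal block vanishes if and only if the off-diagonal block of $S^{-1}$ vanishes. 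Both directions follow from applying block inversion once more, now to $S$ itself. The formula gives $(S^{-1})_{YX}=-(S^{-1})_{YY}S_{YX}S_{XX}^{-1}$, where $(S^{-1})_{YY}$ and $S_{XX}^{-1}$ are invertible. Hence $(S^{-1})_{YX}=0$ if and only if $S_{YX}=0$, that is, $K_{YX}=0$ if and only if $\Sigma_{YX\mid B}=0$.

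I expect the main obstacle to be the infinite-dimensional Hilbert case. There $\Sigma$ is trace class and cannot have a bounded inverse, so $K$ and the Schur complements exist only as densely defined or unbounded operators on the Cameron--Martin space. I would handle this by proving the finite-dimensional statement first, then applying it to increasing cylindrical projections, as in the proof of the Gaussian conditioning theorem. Conditional independence for all finite-dimensional projections passes to the limit by a monotone class argument on a countable generating cylinder class. The vanishing of $K_{YX}$ I would interpret as vanishing of its quadratic form on a common core. The step I am least sure of is whether the precision blocks of the projections converge to those of $K$. Rather than prove this in general, I would invoke the standing non-degeneracy hypothesis on the relevant support and argue there.
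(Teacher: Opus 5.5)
Your argument is the paper's proof. You get (1)$\Leftrightarrow$(2) from the Gaussian conditioning theorem together with the fact that a Gaussian law is a product iff its cross-covariance vanishes, and (2)$\Leftrightarrow$(3) by identifying the $(Y,X)$-principal block of $K$ with $(\Sigma_{(YX)\mid B})^{-1}$; you also supply the step the paper only asserts, namely $(S^{-1})_{YX}=-(S^{-1})_{YY}S_{YX}S_{XX}^{-1}$. For the Hilbert case the paper gives no separate argument, and your cylindrical sketch would not work as written: projecting $B$ creates fill-in, because the precision of $(P_nY,P_nB,P_nX)$ is a Schur complement of $K$ rather than its compression, and conditioning on $P_nB$ is not conditioning on $B$, so neither (1) nor (3) passes exactly to the projections.
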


\begin{proof}
By the Gaussian conditioning theorem, $(Y,X)\mid B=b$ is Gaussian with covariance
\[
        \begin{pmatrix}
        \Sigma_{YY\mid B} & \Sigma_{YX\mid B}\\
        \Sigma_{XY\mid B} & \Sigma_{XX\mid B}
        \end{pmatrix}.
\]
A Gaussian vector has independent components if and only if their covariance block is zero.  Thus (1) and (2) are equivalent.  For (2) and (3), invert the block covariance matrix with the $B$-block separated.  The Schur complement of $\Sigma_{BB}$ in $\Sigma$ is the conditional covariance of $(Y,X)$ given $B$.  The inverse of this Schur complement is the $(Y,X)$-principal block of the full precision after eliminating $B$.  In particular, the off-diagonal block of the conditional precision is zero if and only if the off-diagonal block of the conditional covariance is zero.  This off-diagonal block agrees with $K_{YX}$ in the full precision block decomposition.  Equivalently, write the Gaussian density as
\[
        p(y,b,x)\propto
        \exp\left[-\frac12
        \begin{pmatrix}y\\ b\\ x\end{pmatrix}^{\!\top}
        K
        \begin{pmatrix}y\\ b\\ x\end{pmatrix}
        \right].
\]
For fixed $b$, the only term coupling $y$ and $x$ is $y^\top K_{YX}x$.  The conditional density factorises in $(y,x)$ if and only if this term vanishes.  This proves the equivalence.
\end{proof}

\begin{remark}
The last density proof is the most direct way to remember the result.  For Gaussian measures, conditional independence is a statement about zeros in the precision operator, not merely zeros in the covariance operator.  This is especially important on path space, where covariance kernels encode temporal memory and precision kernels encode conditional local constraints.
\end{remark}

\section{Path space boundaries}

\begin{defn}
Let
\[
        Z_t=(Y_t,B_t,X_t)
\]
be a continuous process with state space
\[
        \X=\X_Y\times\X_B\times\X_X.
\]
Its joint path law over $[0,T]$ is
\[
        P_T=\Law(Y_{[0,T]},B_{[0,T]},X_{[0,T]})
        \in \Prob(\Path_T(\X_Y)\times\Path_T(\X_B)\times\Path_T(\X_X)).
\]
\end{defn}

The definition to follow introduces the notion of path space Markov boundary.

\begin{defn}
The boundary path $B_{[0,T]}$ is a path space Markov boundary between $Y_{[0,T]}$ and $X_{[0,T]}$ if
\[
        Y_{[0,T]}\ci X_{[0,T]}\mid B_{[0,T]}.
\]
Equivalently, for $P_B$-almost every $b$,
\begin{equation}\label{pathboundary-eq}
        P_T(\dd y,\dd x\mid b)
        =P_T(\dd y\mid b)P_T(\dd x\mid b).
\end{equation}
\end{defn}

Let $b$ be a boundary path for which the conditional law $P^b$ is dominated by a reference measure $Q^b$ on the hidden path space.  The path log likelihood of $P^b$ relative to $Q^b$ is any measurable representative
\[
	\ell_b(y,x) = \log \frac{\dd P^b}{\dd Q^b}(y,x),
\]
with value $-\infty$ on the set where the derivative vanishes.  When $Q^b = Q_Y^b \otimes Q_X^b$, the absence of a mixed exterior-interior likelihood term means that there are measurable functions $a_b$, $c_b$, and $k(b)$ such that $\ell_b(y,x) = a_b(y) + c_b(x) + k(b)$ outside a null set.  This is the form in which the path space likelihood criterion will be used below.

The following theorem is one of our main results on the necessary and sufficient conditions by which a path space law factorises.

\begin{theorem}
Let $P_T$ be the joint path law of $(Y,B,X)$.  Choose a disintegration
\[
        P_T(\dd y,\dd b,\dd x)=P^b(\dd y,\dd x)P_B(\dd b).
\]
For each $b$, suppose $P^b$ is dominated by a product reference measure
\[
        Q_Y^b\otimes Q_X^b
\]
on $\Path_T(\X_Y)\times\Path_T(\X_X)$, and write
\[
        H_b(y,x)=\frac{\dd P^b}{\dd(Q_Y^b\otimes Q_X^b)}(y,x).
\]
Then $B_{[0,T]}$ is a path space Markov boundary if and only if for $P_B$-almost every $b$ there exist non-negative measurable functionals $U_b$ and $V_b$ such that
\begin{equation}\label{path-rn-factorisation-eq}
        H_b(y,x)=U_b(y)V_b(x)
\end{equation}
for $Q_Y^b\otimes Q_X^b$-almost every $(y,x)$.
\end{theorem}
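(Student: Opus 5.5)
The plan is to reduce the statement to the two finite-level lemmas already proved: the kernel factorisation criterion and the Radon--Nikodym separability lemma. The path space is only a source of standard Borel structure. Since $\X_Y,\X_B,\X_X$ are Polish, the spaces $\Path_T(\X_Y)$, $\Path_T(\X_B)$, $\Path_T(\X_X)$ are Polish, hence standard Borel. So the kernel factorisation lemma applies with $Z=B_{[0,T]}$, $X=Y_{[0,T]}$, $Y=X_{[0,T]}$. By that lemma together with the definition of a path space Markov boundary, $B_{[0,T]}$ is a boundary if and only if
\[
P^b=P^b_Y\otimes P^b_X \qquad\text{for $P_B$-almost every } b,
\]
where $P^b_Y,P^b_X$ are the marginals of $P^b$. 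One point must be checked here: the conditional marginals $P_T(\dd y\mid b)$ may be taken to be the marginals of the chosen kernel $P^b$. This holds because $b\mapsto P^b(\cdot\times\Path_T(\X_X))$ is itself a disintegration of the $(Y,B)$-law, and disintegrations are essentially unique. Hence \eqref{pathboundary-eq} is equivalent to the product statement for the fixed chosen kernel, up to a $P_B$-null set.

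Next, fix $b$ outside this null set and apply the Radon--Nikodym separability lemma with $\lambda=P^b$, $\alpha=Q^b_Y$, $\beta=Q^b_X$ and $h=H_b$. That lemma gives, for each such $b$, that $P^b=P^b_Y\otimes P^b_X$ if and only if $H_b=U_bV_b$ holds $Q^b_Y\otimes Q^b_X$-almost everywhere, for some nonnegative measurable $U_b,V_b$. Combining the two equivalences over $P_B$-almost every $b$ gives both directions of the theorem. In the forward direction we may take $U_b=\dd P^b_Y/\dd Q^b_Y$ and $V_b=\dd P^b_X/\dd Q^b_X$. In the reverse direction the normalising constants $c_X,c_Y$ from the lemma's proof satisfy $c_Xc_Y=1$, so the product structure follows.

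The only delicate step is the measurability bookkeeping across $b$. The statement asks only for existence of $U_b,V_b$ for almost every $b$, with no joint measurability in $b$, so the pointwise argument suffices. One must still make sure the exceptional sets combine into a single $P_B$-null set:
\begin{itemize}
\item the set where the chosen kernel's marginals differ from the conditional laws;
\item the set where the factorisation fails.
\end{itemize}
Both are null by the uniqueness part of the disintegration theorem, and their union is therefore null. If a jointly measurable version were wanted, one would additionally assume $b\mapsto Q^b_Y,Q^b_X$ are kernels and take $U_b,V_b$ as measurable Radon--Nikodym derivatives, via Doob's theorem on standard Borel spaces. This is not needed for the statement as given.
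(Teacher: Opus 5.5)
Your proposal is correct and follows the same route as the paper. The kernel factorisation criterion reduces the boundary property to $P^b=P^b_Y\otimes P^b_X$ for $P_B$-almost every $b$, and the Radon--Nikodym separability lemma, applied fibrewise with $\lambda=P^b$, $\alpha=Q^b_Y$, $\beta=Q^b_X$, turns this into multiplicative separability of $H_b$.

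You also check that the marginals of the chosen kernel $P^b$ are versions of the conditional laws $P_T(\dd y\mid b)$ and $P_T(\dd x\mid b)$. The paper leaves this implicit, and you handle it correctly. One small misattribution: the null set where factorisation fails comes from the kernel factorisation criterion (or from the hypothesis, in the reverse direction), not from uniqueness of disintegration.
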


\begin{proof}
For a fixed $b$, apply the Radon--Nikodym separability lemma to the conditional measure $P^b$.  It says that $P^b$ is a product measure if and only if its density with respect to the product reference $Q_Y^b\otimes Q_X^b$ is multiplicatively separable.  The kernel factorisation criterion says that the path space boundary property is precisely the assertion that $P^b$ is a product measure for $P_B$-almost every $b$.  Combining the two statements gives the theorem.
\end{proof}

\begin{corollary}
Assume that, after fixing a boundary path $b$, there are prior or clamped path measures $Q_Y^b,Q_X^b$ and a boundary likelihood $L_B(b\mid y,x)$ such that
\begin{equation}\label{bayes-path-eq}
        P^b(\dd y,\dd x)
        =\frac{L_B(b\mid y,x)}{Z(b)}Q_Y^b(\dd y)Q_X^b(\dd x).
\end{equation}
Then $B_{[0,T]}$ is a path space Markov boundary if and only if for $P_B$-almost every $b$,
\[
        L_B(b\mid y,x)=U_b(y)V_b(x)
\]
up to a $(Q_Y^b\otimes Q_X^b)$-null set.  Equivalently,
\[
        \log L_B(b\mid y,x)=a_b(y)+c_b(x)+k(b)
\]
whenever the likelihood is strictly positive.
\end{corollary}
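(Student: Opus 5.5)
The plan is to reduce the corollary directly to the path space factorisation theorem. Fix $b$ outside a $P_B$-null set for which \eqref{bayes-path-eq} holds with $0<Z(b)<\infty$. Then $P^b\ll Q_Y^b\otimes Q_X^b$, and by uniqueness of Radon--Nikodym derivatives
\[
        H_b(y,x)=\frac{L_B(b\mid y,x)}{Z(b)}
\]
for $Q_Y^b\otimes Q_X^b$-almost every $(y,x)$. This places us exactly in the setting of the path space factorisation theorem, with $H_b$ identified up to the positive constant $Z(b)^{-1}$.

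Next, we transfer the multiplicative separation between $H_b$ and $L_B$. If $H_b=U_b(y)V_b(x)$ a.e., then $L_B(b\mid y,x)=Z(b)U_b(y)V_b(x)$. The constant can be absorbed into $U_b$, since $\widetilde U_b=Z(b)U_b$ is still non-negative and measurable. Conversely, if $L_B=U_bV_b$ a.e., then $H_b=(Z(b)^{-1}U_b)V_b$. Invoking the theorem for $P_B$-almost every $b$ gives the first equivalence. The exceptional sets are those where \eqref{bayes-path-eq} fails, or where the Radon--Nikodym representative differs; these are null for $P_B$ or for $Q_Y^b\otimes Q_X^b$ respectively, so they do not affect the conclusion.

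Finally, we pass to the logarithmic form. On the set $\{L_B>0\}$ we have $U_b(y)>0$ and $V_b(x)>0$, so taking logarithms gives
\[
        \log L_B=\log U_b(y)+\log V_b(x).
\]
Set $a_b=\log U_b$, $c_b=\log V_b$, and $k(b)=0$; alternatively, split off $\log Z(b)$ into $k(b)$ if one normalises by $H_b$. For the converse, $\log L_B=a_b+c_b+k$ on $\{L_B>0\}$ gives $L_B=e^{k(b)}e^{a_b(y)}e^{c_b(x)}$ there.

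The main obstacle is the zero set. To extend the product form from $\{L_B>0\}$ to all of path space, I need $\{L_B>0\}$ to be a measurable rectangle $A\times C$ up to a null set. Then $U_b=e^{a_b+k}\one_A$ and $V_b=e^{c_b}\one_C$ work. I would read the corollary's additive statement as being asserted on $\{L_B>0\}$ together with this rectangle condition, which follows automatically from the multiplicative form: $\{U_bV_b>0\}=\{U_b>0\}\times\{V_b>0\}$. In the strictly positive case the two forms are plainly equivalent. Measurability of $\log U_b$ and $\log V_b$ as extended-real functions is routine.
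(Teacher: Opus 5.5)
Your proposal is correct and follows the paper's proof: identify $H_b=L_B/Z(b)$ by uniqueness of Radon--Nikodym derivatives, observe that the positive constant $Z(b)^{-1}$ does not affect multiplicative separability, and take logarithms where the likelihood is positive. Your extra point is also valid and is something the paper glosses over: additive separation on $\{L_B>0\}$ recovers the product form only when that set is, up to null sets, a measurable rectangle, which the paper sidesteps by reading the logarithmic statement as applying only when $L_B>0$ almost everywhere.
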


\begin{proof}
In \eqref{bayes-path-eq}, the Radon--Nikodym density of $P^b$ with respect to $Q_Y^b\otimes Q_X^b$ is $H_b=L_B/Z(b)$.  Multiplication by the normalising constant $Z(b)^{-1}$ does not affect multiplicative separability.  If the likelihood is strictly positive, take logarithms.  Multiplicative separation becomes additive separation plus a $b$-dependent constant.
\end{proof}

\begin{remark}
This theorem is the most direct form of the necessary and sufficient condition.  All later stochastic differential equation, Gaussian, and random dynamical system criteria are ways of computing or structurally forcing the likelihood or Radon--Nikodym density in \eqref{path-rn-factorisation-eq}.
\end{remark}

\section{It\=o systems with Brownian innovations}

\subsection{The joint It\=o equation and clamped laws}

Consider an It\=o diffusion on $\R^{d_Y}\times\R^{d_B}\times\R^{d_X}$:
\begin{equation}\label{joint-ito-eq}
\begin{aligned}
        \dd Y_t&=b_Y(Y_t,B_t,X_t)\,\dd t+
                 \sigma_Y(Y_t,B_t,X_t)\,\dd W^Y_t,\\
        \dd B_t&=b_B(Y_t,B_t,X_t)\,\dd t+
                 \sigma_B(Y_t,B_t,X_t)\,\dd W^B_t,\\
        \dd X_t&=b_X(Y_t,B_t,X_t)\,\dd t+
                 \sigma_X(Y_t,B_t,X_t)\,\dd W^X_t.
\end{aligned}
\end{equation}
The Brownian motions may have several components.  The phrase \emph{Brownian innovation boundary model} will mean that the driving Brownian innovations can be chosen so that, after conditioning on the boundary path, the exterior and interior innovation noises are independent unless explicitly coupled through the coefficients.

The most useful formulation fixes a boundary path $b$.  One then compares the conditional law of $(Y,X)$ given $B=b$ with a product of clamped laws.  The clamped exterior law $Q_Y^b$ is the law of the exterior equation with the boundary path inserted and with interior feedback removed or held at the reference structure.  Similarly $Q_X^b$ is the clamped interior law.  The exact construction depends on the model, but the theorem only needs domination by such product references.

The following theorem is another of our main results concerning factorisation of laws of the paths of an It\=o process.

\begin{theorem}
Assume the joint It\=o system \eqref{joint-ito-eq} is weakly well posed on $[0,T]$, and assume that for $P_B$-almost every boundary path $b$ the conditional law
\[
        \Pi^b:=\Law(Y_{[0,T]},X_{[0,T]}\mid B_{[0,T]}=b)
\]
is dominated by a clamped product law $Q_Y^b\otimes Q_X^b$.  Suppose further that the conditional density has the Bayes--Girsanov form
\begin{equation}\label{girsanov-bayes-density-eq}
        \frac{\dd\Pi^b}{\dd(Q_Y^b\otimes Q_X^b)}(y,x)
        =\frac{L_B(b\mid y,x)}{Z(b)}.
\end{equation}
Then
\[
        \Law(Y_{[0,T]},X_{[0,T]}\mid B_{[0,T]}=b)
        =\Law(Y_{[0,T]}\mid B_{[0,T]}=b)
         \otimes
         \Law(X_{[0,T]}\mid B_{[0,T]}=b)
\]
for $P_B$-almost every $b$ if and only if
\begin{equation}\label{white-likelihood-separates-eq}
        L_B(b\mid y,x)=U_b(y)V_b(x)
\end{equation}
for $Q_Y^b\otimes Q_X^b$-almost every $(y,x)$.  If the likelihood is positive, this is equivalent to
\begin{equation}\label{white-loglik-separates-eq}
        \log L_B(b\mid y,x)=A_b(y)+C_b(x)+K(b).
\end{equation}
\end{theorem}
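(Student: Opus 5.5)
The theorem is essentially the Bayes path corollary specialised to the clamped It\=o setting, so I would reduce it to the path space Radon--Nikodym factorisation theorem rather than do any new stochastic calculus. First I fix a disintegration $P_T(\dd y,\dd b,\dd x)=\Pi^b(\dd y,\dd x)P_B(\dd b)$ of the joint path law on the Polish space $\Path_T(\X_Y)\times\Path_T(\X_B)\times\Path_T(\X_X)$. Weak well-posedness of \eqref{joint-ito-eq} guarantees that $P_T$ is a well-defined element of this space, and the disintegration theorem for standard Borel spaces supplies $\Pi^b$, unique for $P_B$-a.e.\ $b$. I will also note that the marginals of $\Pi^b$ are versions of $\Law(Y_{[0,T]}\mid B_{[0,T]}=b)$ and $\Law(X_{[0,T]}\mid B_{[0,T]}=b)$. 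The displayed product identity is therefore exactly \eqref{pathboundary-eq}, and by the kernel factorisation lemma it is equivalent to the statement that $B_{[0,T]}$ is a path space Markov boundary.

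Next I apply the path space Radon--Nikodym factorisation theorem, using $H_b=\dd\Pi^b/\dd(Q_Y^b\otimes Q_X^b)$. By the hypothesis \eqref{girsanov-bayes-density-eq}, $H_b=L_B(b\mid\cdot)/Z(b)$ almost everywhere. The theorem says that factorisation for $P_B$-a.e.\ $b$ holds if and only if $H_b=U_b\otimes V_b$ almost everywhere. Multiplying or dividing by the constant $Z(b)\in(0,\infty)$ moves between this and \eqref{white-likelihood-separates-eq}, since the constant can be absorbed into $U_b$. This step is exactly the argument of the corollary following the factorisation theorem, and I would cite it directly. For the logarithmic form \eqref{white-loglik-separates-eq}, I use positivity of $L_B$ on a full-measure set. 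Separation $L_B=U_bV_b$ then forces $U_b,V_b>0$ almost everywhere for the respective reference laws, by Fubini. Taking logs gives \eqref{white-loglik-separates-eq} with $K(b)=0$, or with a normalisation constant split off. Conversely, exponentiating \eqref{white-loglik-separates-eq} returns \eqref{white-likelihood-separates-eq}.

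The main obstacle is measure-theoretic bookkeeping rather than analysis. The quantifier ``for $P_B$-a.e.\ $b$'' must attach to the separation condition, which is only implicit in the statement. In addition, the families $b\mapsto Q_Y^b,Q_X^b$ and the density must be chosen jointly measurable in $b$, so that changing $\Pi^b$ on a $P_B$-null set does not matter. I would handle this by fixing versions once and invoking essential uniqueness of disintegrations. The Girsanov form of $L_B$ plays no role in the equivalence itself; it only matters for computing $L_B$ later.
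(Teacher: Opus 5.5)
Your proposal is correct and takes the paper's route: apply the path space Radon--Nikodym factorisation theorem (in its Bayes-corollary form) to $\Pi^b$ for each $b$, absorb the constant $Z(b)$ into $U_b$, and take logarithms when the likelihood is positive. Your Fubini argument that $U_b,V_b>0$ almost everywhere supplies a detail the paper leaves implicit; joint measurability of $Q_Y^b,Q_X^b$ in $b$ is not actually needed, since the equivalence is checked separately for each $b$ in a full-measure set.
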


\begin{proof}
This is the Bayes form of the path space factorisation theorem applied to $\Pi^b$.  The role of the It\=o and Brownian innovation assumptions is to justify the existence of the clamped references and the Girsanov density \eqref{girsanov-bayes-density-eq}.  Once \eqref{girsanov-bayes-density-eq} holds, the conditional path law factorises if and only if its density with respect to the clamped product law factorises.  This density is $L_B/Z(b)$, and the normalising constant does not affect separability.  Positivity allows logarithms, converting multiplicative separability into additive separability.
\end{proof}

\subsection{Girsanov form}

To see the content of \eqref{white-loglik-separates-eq}, take the boundary equation in observation form
\begin{equation}\label{boundary-observation-eq}
        \dd B_t=h(Y_t,X_t,B_t)\,\dd t+\Sigma_B(B_t)\,\dd W^B_t,
\end{equation}
with uniformly non-degenerate $a_B=\Sigma_B\Sigma_B^\top$.  Relative to a reference boundary process with drift $h_0(B_t)$, the likelihood of observing $b$ given $(y,x)$ is formally
\begin{equation}\label{girsanov-loglik-eq}
\begin{aligned}
        \log L_B(b\mid y,x)
        &=\int_0^T
        \langle \Sigma_B^{-1}(h(y_t,x_t,b_t)-h_0(b_t)),\dd W^{0,b}_t\rangle\\
        & -\frac12\int_0^T
        \|\Sigma_B^{-1}(h(y_t,x_t,b_t)-h_0(b_t))\|^2\,\dd t,
\end{aligned}
\end{equation}
where $W^{0,b}$ is the innovation path reconstructed under the reference model.

The following proposition demonstrates that there is no mixed likelihood term under factorisation.

\begin{prop}
Assume the observation model \eqref{boundary-observation-eq} is valid and that Novikov's condition holds.  If, for the fixed boundary path $b$, the drift difference decomposes as
\[
        \Sigma_B^{-1}(h(y_t,x_t,b_t)-h_0(b_t))
        =r_Y(y_t,b_t)+r_X(x_t,b_t)
\]
then the likelihood separates if and only if the mixed energy term
\begin{equation}\label{mixed-energy-term-eq}
        \int_0^T \langle r_Y(y_t,b_t),r_X(x_t,b_t)\rangle\,\dd t
\end{equation}
separates additively as a functional of $y$ plus a functional of $x$, up to a constant depending on $b$.  In particular, if
\[
        \langle r_Y(y,b),r_X(x,b)\rangle=0
\]
for all $(y,x,b)$, then the likelihood separates.
\end{prop}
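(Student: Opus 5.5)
The plan is to substitute the decomposition into the Girsanov log likelihood \eqref{girsanov-loglik-eq} and sort the resulting terms by which path variables they depend on. For the fixed boundary path $b$ and integrand $g_t(y,x):=r_Y(y_t,b_t)+r_X(x_t,b_t)$, the stochastic integral is linear in the integrand. I will read it as a functional of the fixed path $b$, through the reconstructed reference innovation $W^{0,b}$, which depends only on $b$ and $h_0$. It therefore splits as
\[
\int_0^T\langle r_Y(y_t,b_t),\dd W^{0,b}_t\rangle+\int_0^T\langle r_X(x_t,b_t),\dd W^{0,b}_t\rangle,
\]
a functional of $y$ alone plus a functional of $x$ alone. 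Novikov's condition guarantees that the exponential of \eqref{girsanov-loglik-eq} is a genuine (mean one) density, so $L_B$ is strictly positive. The logarithmic criterion \eqref{white-loglik-separates-eq} of the preceding theorem therefore applies.

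Next I expand the quadratic term:
\[
\tfrac12\|g_t\|^2=\tfrac12\|r_Y(y_t,b_t)\|^2+\tfrac12\|r_X(x_t,b_t)\|^2+\langle r_Y(y_t,b_t),r_X(x_t,b_t)\rangle .
\]
After integrating in $t$, the first two pieces are again pure $y$- and pure $x$-functionals. This gives the exact identity
\[
\log L_B(b\mid y,x)=A^0_b(y)+C^0_b(x)-\int_0^T\langle r_Y(y_t,b_t),r_X(x_t,b_t)\rangle\,\dd t
\]
outside a null set, with explicit $A^0_b$ and $C^0_b$.

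The equivalence then follows by rearranging. If the mixed energy term \eqref{mixed-energy-term-eq} equals $\alpha(y)+\gamma(x)+\kappa(b)$, absorb these into $A^0_b$ and $C^0_b$; this yields \eqref{white-loglik-separates-eq}. Conversely, if $\log L_B=A_b(y)+C_b(x)+K(b)$, then the mixed term equals $A^0_b-A_b+C^0_b-C_b-K(b)$, which is additively separable. Both directions hold up to $Q_Y^b\otimes Q_X^b$-null sets, as in the preceding theorem. The special case is immediate: pointwise orthogonality makes the mixed integrand, and hence the mixed term, identically zero.

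The main obstacle is justifying that the stochastic integral term is a functional of $y$ plus a functional of $x$ as measurable functions on path space. The integral is a priori defined only almost surely under the joint law, not pathwise for each pair $(y,x)$. I would handle this by treating $b$ as fixed and $(y,x)$ as deterministic (clamped) inputs. The integrands $t\mapsto r_Y(y_t,b_t)$ are then deterministic, given continuity of $r_Y$, and can be integrated against the fixed path $W^{0,b}$ in the sense used to define the formal likelihood. The simplest route is to integrate by parts when $r_Y,r_X$ are $C^1$ along $b$. Failing that, I would define each integral as an $L^2(Q^b)$-limit of Riemann sums. Linearity of those sums then gives the split on a common full-measure set, and measurable versions can be chosen for each piece separately.
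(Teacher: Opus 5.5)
Your proposal follows the paper's proof: substitute the decomposition into \eqref{girsanov-loglik-eq}, split the stochastic integral by linearity, and expand the quadratic energy so that only $-\int_0^T\langle r_Y,r_X\rangle\,\dd t$ couples $y$ and $x$. Then rearrange in both directions, with orthogonality killing the mixed term. In your last paragraph, the integration-by-parts route needs more than $C^1$ regularity of $r_Y,r_X$, because $t\mapsto r_Y(y_t,b_t)$ is not of bounded variation when $y$ is a diffusion path; the paper treats \eqref{girsanov-loglik-eq} formally, and linearity of the stochastic integral in its integrand, which is all it uses, already gives the split almost surely.
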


\begin{proof}
Insert the decomposition into \eqref{girsanov-loglik-eq}.  The stochastic integral is the sum of an $(y,b)$-functional and an $(x,b)$-functional.  The quadratic energy is
\[
        \frac12\int_0^T\|r_Y\|^2\dd t
        +\frac12\int_0^T\|r_X\|^2\dd t
        +\int_0^T\langle r_Y,r_X\rangle\dd t.
\]
The first two terms already separate.  Therefore the whole log-likelihood separates if and only if the mixed term separates.  Orthogonality makes the mixed term vanish, which is a sufficient condition.
\end{proof}

The same computation gives the control identity behind the use of Girsanov likelihoods.

\begin{theorem}\label{girsanov-control-identity-thm}
Let $R$ be Wiener measure on $C([0,T];\mathbf{R}^d)$ started at zero, and let $P \ll R$ be a probability measure of finite entropy.  Suppose the canonical process has the decomposition
\[
	X_t = \int_0^t u_s\,\dd s + W_t^P
\]
under $P$, where $W^P$ is a $P$-Brownian motion and $u$ is progressively measurable.  Assume
\[
	\bE_P\int_0^T \|u_t\|^2\,\dd t < \infty.
\]
Then
\[
	\Ent(P \mid R) = \frac12\bE_P\int_0^T \|u_t\|^2\,\dd t.
\]
Equivalently, the expected log likelihood of the controlled law relative to the uncontrolled law is the quadratic control energy.
\end{theorem}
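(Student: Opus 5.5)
The plan is to compute the density $\dd P/\dd R$ explicitly by Girsanov's theorem, take the $P$-expectation of its logarithm, and then remove the regularity assumptions that make Girsanov directly applicable by a localisation argument.

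\textbf{Step 1 (idealised case).} First assume $u$ is bounded, or at least satisfies Novikov's condition under the appropriate measure. Define on the canonical space the local martingale
\[
\mathcal E_t=\exp\Bigl(-\int_0^t\langle u_s,\dd W^P_s\rangle-\tfrac12\int_0^t\|u_s\|^2\dd s\Bigr).
\]
By Girsanov's theorem, $\tilde R=\mathcal E_T\cdot P$ makes $X$ a Brownian motion, so $\tilde R=R$. Hence $\dd R/\dd P=\mathcal E_T$, and $\dd P/\dd R=\mathcal E_T^{-1}$ on the event where $\mathcal E_T>0$, which is $P$-a.s. Then
\[
\log\frac{\dd P}{\dd R}=\int_0^T\langle u_s,\dd W^P_s\rangle+\tfrac12\int_0^T\|u_s\|^2\dd s \quad P\text{-a.s.}
\]
Taking $\bE_P$ kills the stochastic integral, because it is a true $L^2$ martingale under $\bE_P\int\|u\|^2<\infty$. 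This gives the identity.

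\textbf{Step 2 (general $u$, direction $\geq$).} Localise with the stopping times $\tau_n=\inf\{t:\int_0^t\|u_s\|^2\dd s\geq n\}\wedge T$. Let $P_n$ be the law of the process that follows $P$ up to $\tau_n$ and continues as Brownian motion afterwards; equivalently, use the stopped drift $u\one_{[0,\tau_n]}$. Step 1 applies to $P_n$ and gives $\Ent(P_n\mid R)=\tfrac12\bE_P\int_0^{\tau_n}\|u\|^2$. Restrict to $\Fil_{\tau_n}$. By data processing, $\Ent(P\mid R)\geq\Ent(P|_{\Fil_{\tau_n}}\mid R|_{\Fil_{\tau_n}})$, and this restricted entropy equals $\tfrac12\bE_P\int_0^{\tau_n}\|u\|^2$. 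Letting $n\to\infty$, monotone convergence yields $\Ent(P\mid R)\geq\tfrac12\bE_P\int_0^T\|u\|^2$, using $\tau_n\uparrow T$ $P$-a.s. by the finiteness assumption.

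\textbf{Step 3 (direction $\leq$).} This is the main obstacle: one must show that $\dd P/\dd R$ is really the inverse exponential, so that no singular part or missing mass appears. The plan is as follows. Let $Z=\dd P/\dd R$ and $Z_t=\bE_R[Z\mid\Fil_t]$, a positive $R$-martingale. By martingale representation, $Z_t=1+\int_0^t Z_s\langle v_s,\dd X_s\rangle$ for some progressive $v$. Girsanov then gives $X-\int v\,\dd s$ as a $P$-Brownian motion. Uniqueness of the semimartingale decomposition forces $v=u$ $\dd t\otimes P$-a.e. Consequently $\log Z_{\tau_n}$ equals the Step 1 expression stopped at $\tau_n$, and $\Ent(P\mid R)=\lim_n\bE_P\log Z_{\tau_n}$. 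The limit is justified by martingale convergence together with the lower semicontinuity and monotonicity of entropy along the filtration $\Fil_{\tau_n}\uparrow\Fil_T$. This limit equals $\tfrac12\bE_P\int_0^T\|u\|^2$, which gives equality.

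The final sentence of the statement is a restatement: $\bE_P\log(\dd P/\dd R)$ is the expected log likelihood, and it equals the quadratic control energy.
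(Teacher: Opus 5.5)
Your proposal is correct and follows the paper's route: identify $\dd P/\dd R$ with the Girsanov exponential, substitute $\dd X_t=u_t\,\dd t+\dd W^P_t$ to get $\log(\dd P/\dd R)=\int_0^T\langle u_t,\dd W^P_t\rangle+\frac12\int_0^T\|u_t\|^2\,\dd t$, and use the finite energy assumption to kill the stochastic integral in $P$-expectation. Your Step 3 supplies the justification the paper compresses into ``Girsanov's theorem gives the density'': martingale representation of $\bE_R[\dd P/\dd R\mid\Fil_t]$ plus uniqueness of the semimartingale decomposition (note this martingale is only nonnegative under $R$, though strictly positive $P$-a.s.); once that identifies $\log(\dd P/\dd R)$ $P$-a.s., the $L^2(P)$-martingale property of $\int_0^T\langle u_t,\dd W^P_t\rangle$ finishes directly, so your Step 2 and the entropy-continuity limit are not needed.
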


\begin{proof}
Girsanov's theorem gives the density
\[
	\frac{\dd P}{\dd R}
	= \exp\left(\int_0^T \langle u_t,\dd X_t\rangle - \frac12\int_0^T \|u_t\|^2\,\dd t\right),
\]
where the stochastic integral is read in the usual exponential martingale sense by localisation.  Under $P$ one has $\dd X_t = u_t\,\dd t + \dd W_t^P$, and therefore
\[
	\log \frac{\dd P}{\dd R}
	= \frac12\int_0^T \|u_t\|^2\,\dd t + \int_0^T \langle u_t,\dd W_t^P\rangle.
\]
The stochastic integral has mean zero after localisation and passage to the limit by the finite energy assumption.  Taking $P$-expectations gives the identity.  In F\"ollmer's formulation the drift $u$ is the entropy-optimal adapted drift realising $P$ from the Wiener reference; thus the equality also expresses the least control energy representation of path space relative entropy.
\end{proof}


\begin{theorem}
Let $b\in\Path_T(\X_B)$ be fixed.  Suppose the conditional law $\Pi^b$ of $(Y,X)$ is Markov with time-inhomogeneous generator $\LL_t^b$ on $\X_Y\times\X_X$, defined on smooth compactly supported functions, with local form
\[
\begin{aligned}
        \LL_t^b f(y,x)
        &=\sum_a \beta_Y^a(t,y,x;b)\partial_{y_a}f
          +\sum_r \beta_X^r(t,y,x;b)\partial_{x_r}f\\
        & +\frac12\sum_{a,c} A_{YY}^{ac}(t,y,x;b)\partial_{y_a y_c}^2f
          +\sum_{a,r} A_{YX}^{ar}(t,y,x;b)\partial_{y_a x_r}^2f\\
        & +\frac12\sum_{r,s} A_{XX}^{rs}(t,y,x;b)\partial_{x_r x_s}^2f.
\end{aligned}
\]
Assume uniqueness of the martingale problem.  Then $\Pi^b$ is a product law $\Pi_Y^b\otimes\Pi_X^b$ with independent coordinate processes if and only if, after modification on null sets,
\begin{equation}\label{generator-splitting-eq}
        \LL_t^b=\LL_{Y,t}^b\otimes 1+1\otimes\LL_{X,t}^b,
\end{equation}
that is,
\[
        A_{YX}=0, 
        \beta_Y(t,y,x;b)=\beta_Y(t,y;b), 
        A_{YY}(t,y,x;b)=A_{YY}(t,y;b),
\]
and
\[
        \beta_X(t,y,x;b)=\beta_X(t,x;b), 
        A_{XX}(t,y,x;b)=A_{XX}(t,x;b).
\]
\end{theorem}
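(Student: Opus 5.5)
The argument has two directions. The sufficiency direction (generator splitting $\Rightarrow$ product law) uses uniqueness of the martingale problem. The necessity direction (product law $\Rightarrow$ splitting) identifies the coefficients of $\LL_t^b$ from the law $\Pi^b$ and then reads off the structure of a product. Throughout, $b$ is fixed and the explicit dependence on $b$ is suppressed.

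\emph{Sufficiency.} Assume \eqref{generator-splitting-eq} holds.
\begin{enumerate}
\item Let $\Pi_Y$ be a solution of the martingale problem for $\LL_{Y,t}$ on $\X_Y$, and $\Pi_X$ one for $\LL_{X,t}$ on $\X_X$. These exist, for instance, as the marginals of $\Pi^b$ once we check that the coordinate processes solve the reduced problems. Indeed, for $f=f(y)$ we have $\LL_t f=\LL_{Y,t}f$ as a function of $y$ alone.
\item Form the product measure $\Pi_Y\otimes\Pi_X$ on $\Path_T(\X_Y)\times\Path_T(\X_X)$.
\item For a product test function $f(y)g(x)$, expand
\[
\LL_t(fg)=g\,\LL_{Y,t}f+f\,\LL_{X,t}g+\sum A_{YX}^{ar}\partial_{y_a}f\,\partial_{x_r}g .
\]
The last term vanishes because $A_{YX}=0$.
\item Under the product measure the two coordinate processes are independent. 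A product of independent martingales with zero covariation is a martingale, since $\langle M^f,M^g\rangle=\int A_{YX}\partial f\,\partial g\,\dd s=0$. Itô's product rule then shows that $f(Y_t)g(X_t)-\int_0^t\LL_s(fg)\,\dd s$ is a martingale.
\item Linear combinations of products $fg$ form an algebra that is dense in $C_c^\infty$ in the $C^2$ topology. A standard approximation therefore shows that $\Pi_Y\otimes\Pi_X$ solves the martingale problem for $\LL_t$.
\item Uniqueness of that martingale problem gives $\Pi^b=\Pi_Y\otimes\Pi_X$. Its marginals are then automatically the conditional laws of $Y$ and of $X$ given $b$.
\end{enumerate}

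\emph{Necessity.} Assume $\Pi^b=\Pi_Y\otimes\Pi_X$ with independent coordinates.
\begin{enumerate}
\item Recover the coefficients as the drift and covariation densities of the canonical semimartingale. For a coordinate $y_a$, localised via $C_c^\infty$ cutoffs, the process $M_t^{a}=Y^a_t-\int_0^t\beta_Y^a(s,Y_s,X_s)\,\dd s$ is a $\Pi^b$-local martingale in the joint filtration. Similarly, $\langle M^a,M^c\rangle_t=\int_0^t A_{YY}^{ac}\,\dd s$ and $\langle M^a,N^r\rangle_t=\int_0^t A_{YX}^{ar}\,\dd s$, where $N^r$ is the corresponding martingale part of $X^r$.
\item Independence makes the joint filtration the product of the two coordinate filtrations. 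The canonical decomposition of $Y$ in its own filtration is then also its decomposition in the joint filtration, because the $X$-information is independent of $Y$ and does not change $Y$'s conditional increments.
\item By uniqueness of the Doob--Meyer decomposition, the drift $\beta_Y(s,Y_s,X_s)$ equals a $Y$-adapted process $\tilde\beta_Y(s)$, $\dd s\otimes\Pi^b$-a.e. The same holds for $A_{YY}$, since quadratic variation is a pathwise functional of $Y$.
\item A covariation between independent continuous local martingales vanishes. This gives $A_{YX}=0$ a.e. along paths.
\item The symmetric arguments apply to $\beta_X$ and $A_{XX}$.
\end{enumerate}

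The main obstacle is converting these path-level identities into the function-level statement that $\beta_Y(t,y,x)$ does not depend on $x$. What we actually obtain is that $\beta_Y(t,Y_t,X_t)$ is $\sigma(Y_{[0,t]})$-measurable, and we need this as a function of the current state $(y,x)$. Because $Y_t$ and $X_t$ are independent, $\beta_Y(t,Y_t,X_t)$ being a function of $Y$-information alone forces it, after conditioning on $Y_t$, to be a.e. constant in $X_t$. The argument is: with $G:=\beta_Y(t,Y_t,X_t)$, we have $\bE[G\mid Y_{[0,t]}]=G$, while independence gives
\[
\bE[G\mid Y_{[0,t]}]=\int\beta_Y(t,Y_t,x)\,\Pi_X(X_t\in\dd x).
\]
Equality of these forces $x\mapsto\beta_Y(t,Y_t,x)$ to be constant $\Pi_X$-a.e.

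This gives independence of $x$ only on the support of the time-$t$ marginal of the law, for Lebesgue-a.e. $t$. That is precisely the meaning of ``after modification on null sets'' in the statement. Off the support we may redefine the coefficients freely, since the martingale problem does not see them there.
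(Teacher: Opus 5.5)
\textbf{Gap in the sufficiency direction (Step 6).} Uniqueness for a martingale problem is uniqueness among solutions with a \emph{prescribed initial law}. $\Pi^b$ starts from $\Law(Y_0,X_0\mid B_{[0,T]}=b)$. Your $\Pi_Y\otimes\Pi_X$, built from the marginals of $\Pi^b$, starts from the product of the two time-zero marginals. These agree only if $Y_0\ci X_0$ under $\Pi^b$, so uniqueness cannot be invoked in general.

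Without that hypothesis the ``if'' direction is false. Take $\LL_t=\tfrac12\partial^2_{yy}+\tfrac12\partial^2_{xx}$ on $\R\times\R$. It is trivially split and its martingale problem is well posed for every initial law. Yet started from $Y_0=X_0\sim N(0,1)$ it gives $\Cov(Y_t,X_t)=1$ for all $t$, so the law is not a product.

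The missing hypothesis is that the time-zero law of $\Pi^b$ factorises, or that the theorem is read for the Markov family started from deterministic points. A product path law has a product time-zero marginal, so this hypothesis is also necessary. The correct statement is ``product law iff generator splitting plus product initial law''. The paper's own converse (``with the corresponding initial marginals'') makes the same tacit assumption, so this is a defect of the statement as much as of your proof. Still, Step 6 cannot be asserted as written.

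A smaller point in Step 4: under the product measure, $\langle M^f,M^g\rangle=0$ because $M^f$ and $M^g$ are independent martingales in the product filtration, not because of $A_{YX}$. The only role of $A_{YX}=0$ is to remove the cross term in $\LL_t(fg)$ in your Step 3.

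\textbf{Necessity direction.} This part is correct and takes a different route from the paper. The paper assumes the coordinate processes are Markov with their own generators. It differentiates $\bE[u(Y_t)v(X_t)]$ for product test functions to see that the product process has generator $\LL^b_{Y,t}\otimes1+1\otimes\LL^b_{X,t}$, and then simply ``compares coefficients''.

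You instead work with the semimartingale characteristics of the single law $\Pi^b$:
\begin{enumerate}
\item independence means martingales in the filtration of $Y$ remain martingales in the joint filtration;
\item uniqueness of the canonical decomposition then forces the drift and $\dd\langle Y\rangle$ to be $Y$-adapted;
\item independence of the martingale parts kills $\langle M^a,N^r\rangle$;
\item the freezing step turns adaptedness into $x$-independence $\mu_{Y_t}\otimes\mu_{X_t}$-a.e.\ (time-$t$ marginals) for a.e.\ $t$.
\end{enumerate}
You tacitly use Stricker's theorem to know $Y$ is a semimartingale in its own filtration, and local boundedness of the coefficients for the localisation.

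Your route costs more machinery. In exchange it needs neither the marginal Markov property nor a semigroup started from every point, and it pins down what the ``null sets'' in the statement are, namely null sets of the time marginals. The paper's comparison is shorter but leaves implicit how a generator identity for one law determines the coefficients pointwise.
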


\begin{proof}
If $\Pi^b=\Pi_Y^b\otimes\Pi_X^b$ and the coordinate processes are Markov with generators $\LL_{Y,t}^b$ and $\LL_{X,t}^b$, then for a product test function $f(y,x)=u(y)v(x)$, the product semigroup satisfies
\[
        \frac{\dd}{\dd t}\bE[u(Y_t)v(X_t)]
        =\bE[(\LL_{Y,t}^b u)(Y_t)v(X_t)+u(Y_t)(\LL_{X,t}^b v)(X_t)].
\]
Thus its generator is \eqref{generator-splitting-eq}.  Comparing the local second-order expression with this split generator gives no mixed second derivatives and no coefficient dependence on the opposite variable.

Conversely, assume \eqref{generator-splitting-eq}.  Let $P_Y^b$ and $P_X^b$ be the laws solving the two marginal martingale problems with the corresponding initial marginals.  The product law $P_Y^b\otimes P_X^b$ solves the martingale problem for $\LL_t^b$, because for product test functions It\=o's formula gives the sum of the two marginal martingales, and a monotone-class/core argument extends this to the full domain.  By uniqueness of the martingale problem, $\Pi^b=P_Y^b\otimes P_X^b$.  Hence the conditional path law factorises.
\end{proof}

\section{Sensory-active asymmetry}

The boundary often splits as
\[
        B=(S,A),
\]
where $S$ are sensory states influenced by the exterior and observed by the interior, while $A$ are active states influenced by the interior and acting on the exterior.  The structural picture is
\[
        Y \longrightarrow S \longrightarrow X \longrightarrow A \longrightarrow Y.
\]
This describes directed dynamical separation, not a spatially symmetric Markov random field.


\begin{defn}
An It\=o sensor-actuator system driven by Brownian innovations is a system of the form
\begin{equation}\label{sensor-actuator-system-eq}
\begin{aligned}
        \dd Y_t&=b_Y(Y_t,A_t)\,\dd t+\sigma_Y(Y_t,A_t)\,\dd W^Y_t,\\
        \dd S_t&=b_S(Y_t,S_t,A_t)\,\dd t+\sigma_S(S_t,A_t)\,\dd W^S_t,\\
        \dd X_t&=b_X(X_t,S_t)\,\dd t+\sigma_X(X_t,S_t)\,\dd W^X_t,\\
        \dd A_t&=b_A(X_t,S_t,A_t)\,\dd t+\sigma_A(S_t,A_t)\,\dd W^A_t,
\end{aligned}
\end{equation}
where the Brownian blocks $W^Y,W^S,W^X,W^A$ are independent and the equations are weakly well posed.
\end{defn}

Another main result concerns factorisation of path laws of sensor-actuator systems.

\begin{theorem}
Assume \eqref{sensor-actuator-system-eq}.  Suppose the initial law satisfies
\[
        \Law(Y_0,X_0\mid S_0,A_0)
        =\Law(Y_0\mid S_0,A_0)\otimes\Law(X_0\mid S_0,A_0).
\]
Fix boundary paths $(s,a)$.  Let $Q_Y^a$ be the law of the exterior equation clamped by $a$, and let $Q_X^s$ be the law of the interior equation clamped by $s$.  Assume the sensory and active observation likelihoods exist and are given by $L_S(s\mid y,a)$ and $L_A(a\mid x,s)$.  Then, for $P_{S,A}$-almost every $(s,a)$,
\begin{equation}\label{sensor-active-factorisation-eq}
\begin{aligned}
&\Law(Y_{[0,T]},X_{[0,T]}\mid S_{[0,T]}=s,A_{[0,T]}=a)\\
&  =\Pi_Y^{s,a}\otimes\Pi_X^{s,a},
\end{aligned}
\end{equation}
where
\[
        \Pi_Y^{s,a}(\dd y)\propto L_S(s\mid y,a)Q_Y^a(\dd y),
        \Pi_X^{s,a}(\dd x)\propto L_A(a\mid x,s)Q_X^s(\dd x).
\]
\end{theorem}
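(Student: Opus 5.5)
The plan is to write the joint path law of $(Y,S,X,A)$ as a density with respect to a suitable product reference law. We then read off the conditional law given $(s,a)$ via Bayes' formula and apply the Bayes corollary to the path space factorisation theorem.

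\textbf{Reference law.} Work on canonical path space and take as reference the law $R$ under which $S$ and $A$ are driftless, namely
\[
\dd S=\sigma_S(S,A)\,\dd W^S,\qquad \dd A=\sigma_A(S,A)\,\dd W^A,
\]
while $Y$ and $X$ keep their own equations in \eqref{sensor-actuator-system-eq}. The initial condition is $P_{S_0,A_0}\otimes\Law(Y_0\mid S_0,A_0)\otimes\Law(X_0\mid S_0,A_0)$, which by hypothesis is the true initial law.

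The key structural observation is that, in \eqref{sensor-actuator-system-eq}, $Y$ is driven only by $(Y,A)$ and $W^Y$, and $X$ only by $(X,S)$ and $W^X$. Under $R$ the boundary pair $(S,A)$ evolves autonomously, because its diffusion coefficients depend only on $(S,A)$. Hence, conditionally on $(S,A)=(s,a)$, the process $Y$ solves the exterior equation clamped by $a$ and $X$ solves the interior equation clamped by $s$. These are driven by the independent blocks $W^Y$ and $W^X$, and their initial data are conditionally independent.

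Consequently
\[
R(\dd y,\dd x\mid s,a)=Q_Y^a\otimes Q_X^s .
\]
Making this rigorous requires showing that the clamped equations are weakly well posed for $R_{S,A}$-a.e.\ $(s,a)$ and that the resulting kernel is a regular conditional probability of $R$. I would do this by a pathwise (Yamada--Watanabe type) construction: solve for $(S,A)$ first, then solve for $Y$ and $X$ given those paths, and finally invoke uniqueness of disintegration.

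\textbf{Girsanov step.} The true law $P$ differs from $R$ only in the drifts of $S$ and $A$. Girsanov's theorem, under Novikov's condition as assumed in the previous subsection, gives
\[
\frac{\dd P}{\dd R}=L_S(s\mid y,a)\,L_A(a\mid x,s),
\]
where $L_S$ is the exponential martingale built from $\sigma_S^{-1}b_S(y_t,s_t,a_t)$ against $\dd s$. This factor depends only on $(y,s,a)$ because $b_S$ does not involve $X$ and $\sigma_S$ involves only $(S,A)$. Symmetrically, $L_A$ depends only on $(x,s,a)$ because $b_A$ does not involve $Y$. The independence of the Brownian blocks makes the total density the product of the two exponential martingales, with no cross quadratic covariation term.

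Here one must read the stochastic integrals as measurable functionals of the path $(s,a)$, using $\dd s_t$ and $\dd a_t$ directly. This is exactly the observation likelihood form assumed in the statement.

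\textbf{Bayes disintegration.} Disintegrating $P=(\dd P/\dd R)\,R$ over $(s,a)$ yields, for $P_{S,A}$-a.e.\ $(s,a)$,
\[
P(\dd y,\dd x\mid s,a)=\frac{L_S(s\mid y,a)\,L_A(a\mid x,s)}{Z(s,a)}\,Q_Y^a(\dd y)\,Q_X^s(\dd x),
\]
with normaliser $Z(s,a)=\int L_SL_A\,\dd(Q_Y^a\otimes Q_X^s)$. One checks that $0<Z<\infty$ for $P_{S,A}$-a.e.\ $(s,a)$; this holds because $Z$ is the density of $P_{S,A}$ with respect to $R_{S,A}$. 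The integrand is manifestly of the form $U(y)V(x)$.

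The Radon--Nikodym separability lemma (equivalently, the corollary of the path space factorisation theorem) then shows that the conditional law is the product of its normalised factors. These factors are exactly $\Pi_Y^{s,a}$ and $\Pi_X^{s,a}$, which gives \eqref{sensor-active-factorisation-eq}.

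\textbf{Main obstacle.} I expect the hardest step to be the first one: identifying $R(\cdot\mid s,a)$ as $Q_Y^a\otimes Q_X^s$. This requires the clamped equations to be well posed for almost every fixed rough boundary path, and a measurable selection of their laws in $(s,a)$. I would first try assuming pathwise uniqueness for the clamped equations, which gives a measurable solution map, and then verify the disintegration identity on cylinder functionals.
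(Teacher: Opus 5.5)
Your proposal is correct and follows the same route as the paper. Both arguments use the clamped product prior $Q_Y^a\otimes Q_X^s$ (from the independent Brownian blocks and the factorised initial law), a Bayes formula whose density $L_S(s\mid y,a)L_A(a\mid x,s)$ is a product once $(s,a)$ is fixed, and then the Radon--Nikodym separability lemma. Your driftless-boundary reference law $R$, which breaks the feedback loop $Y\to S\to X\to A\to Y$ so that clamping is a genuine conditioning, together with the Girsanov identity $\dd P/\dd R=L_SL_A$ (no cross-variation term since $W^S$ and $W^A$ are independent), makes explicit the "clamped prior" and "Bayes' formula" steps that the paper asserts without construction.
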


\begin{proof}
Under the clamped boundary paths $(s,a)$, the exterior equation depends on $(y,a)$ and its own Brownian motion $W^Y$, while the interior equation depends on $(x,s)$ and its own Brownian motion $W^X$.  The Brownian blocks are independent, and the initial conditional law factorises.  Hence the clamped prior law of $(Y,X)$ is
\[
        Q_Y^a\otimes Q_X^s.
\]
The law of the observed sensory path $s$ depends on the exterior path and the clamped active path through the likelihood $L_S(s\mid y,a)$.  The law of the observed active path $a$ depends on the interior path and the clamped sensory path through $L_A(a\mid x,s)$.  Therefore Bayes' formula gives
\[
        \Pi^{s,a}(\dd y,\dd x)
        \propto L_S(s\mid y,a)L_A(a\mid x,s)
        Q_Y^a(\dd y)Q_X^s(\dd x).
\]
The density is the product of an $y$-functional and an $x$-functional once $(s,a)$ is fixed.  The Radon--Nikodym separability theorem therefore implies the product decomposition \eqref{sensor-active-factorisation-eq}, with the displayed marginal conditional laws.
\end{proof}

\begin{remark}
The asymmetry changes the interpretation but not the disintegration logic.  The boundary is not a symmetric spatial separator.  It is a pair of directed channels whose likelihoods split once the full boundary path is fixed.  Sensory variables carry exterior information into the boundary; active variables carry interior influence out through the boundary.  Conditional on both histories, there is no remaining exterior-interior likelihood term.
\end{remark}

\section{Fixed-time conditional independence}

\subsection{Instantaneous boundaries}

The following definition introduces instantaneous boundaries.

\begin{defn}
For a process $(Y,B,X)$, the boundary variable $B_t$ is an instantaneous Markov boundary at time $t$ if
\[
        Y_t\ci X_t\mid B_t.
\]
\end{defn}

\begin{prop}
Assume $(Y_t,B_t,X_t)$ has a strictly positive smooth density $p_t(y,b,x)$ with respect to a product reference measure.  Then
\[
        Y_t\ci X_t\mid B_t
\]
if and only if, for almost every $b$,
\[
        p_t(y,x\mid b)=p_t(y\mid b)p_t(x\mid b).
\]
If the density is positive and smooth in $(y,x)$, this is equivalent to
\begin{equation}\label{mixed-hessian-zero-eq}
        \partial_{y_a}\partial_{x_r}\log p_t(y,x\mid b)=0
\end{equation}
for all exterior indices $a$ and interior indices $r$, wherever the derivatives are defined.
\end{prop}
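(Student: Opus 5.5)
The proof has two parts: a density version of the kernel factorisation lemma, and a calculus characterisation of additively separable functions.

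\emph{First equivalence.} With $p_t$ strictly positive, the marginal $p_t(b)=\int p_t(y,b,x)\,\dd y\,\dd x$ is positive. I will define $p_t(y,x\mid b)=p_t(y,b,x)/p_t(b)$, and obtain $p_t(y\mid b)$ and $p_t(x\mid b)$ from it by integrating out the other variable. The measures with these densities give a regular conditional probability of $(Y_t,X_t)$ given $B_t$, because they reproduce the joint law on rectangles. By the essential uniqueness in the disintegration theorem, any other disintegration agrees with this one for almost every $b$. The kernel factorisation lemma says that $Y_t\ci X_t\mid B_t$ holds exactly when $P_{Y,X\mid B=b}=P_{Y\mid B=b}\otimes P_{X\mid B=b}$ for a.e.\ $b$. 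For a fixed $b$, two measures that are absolutely continuous with respect to the product reference coincide exactly when their densities agree a.e. Applied to the conditional law and the product of the conditional marginals, this gives the stated density factorisation. Equivalently, one can invoke the Radon--Nikodym separability lemma with $\alpha,\beta$ the reference factors, after normalising them, or working locally if they are only $\sigma$-finite.

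\emph{Second equivalence.} Fix $b$ and set $\phi(y,x)=\log p_t(y,x\mid b)$, which is smooth because positivity lets us take the logarithm. If the density factorises, then $\phi(y,x)=\log p_t(y\mid b)+\log p_t(x\mid b)$ a.e. Both sides are continuous, so the identity holds everywhere on the open domain, and hence $\partial_{y_a}\partial_{x_r}\phi=0$. Conversely, suppose all mixed partials vanish. Then for each $r$ the function $\partial_{x_r}\phi$ does not depend on $y$, provided the $y$-domain is connected (for instance all of $\R^{d_Y}$, or a connected support). Fix a base point $(y_0,x_0)$ and integrate along paths to get
\[
\phi(y,x)=\phi(y,x_0)+\phi(y_0,x)-\phi(y_0,x_0),
\]
using connectedness of the $x$-domain as well. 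Exponentiating gives $p_t(y,x\mid b)=u(y)v(x)$. Integrating in $x$ and then in $y$ identifies $u$ and $v$ with the conditional marginals up to reciprocal constants $c_X,c_Y$ with $c_Xc_Y=1$, exactly as in the proof of the Radon--Nikodym separability lemma. This recovers the density factorisation.

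\emph{Expected obstacle.} The delicate step is the converse. Vanishing of the mixed Hessian only gives additive separability locally, so the argument needs the domain in $(y,x)$ for fixed $b$ to be a product of connected open sets. I will state this explicitly: "wherever the derivatives are defined" is read as the whole open product domain. The proof relies on it through the path-integration step, and without it a separate constant could appear on each component. A second, minor point is that smoothness is assumed in $(y,x)$ for each $b$, while the set of $b$ is only measurable. The argument is therefore carried out fibrewise for $P_B$-a.e.\ $b$, and no regularity in $b$ is needed.
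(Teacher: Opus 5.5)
Your proposal is correct and follows essentially the same route as the paper: the first equivalence is the kernel factorisation criterion at a single time, and the converse of the Hessian criterion comes from integrating a log-density with vanishing mixed partials over a connected domain. Your explicit product-connectedness hypothesis and your identification of the factors with the conditional marginals are refinements that the paper leaves implicit. One step should be tightened: ``both sides are continuous'' in the forward direction is not automatic, because conditional marginals of a smooth density need not be continuous a priori. It is cleaner to use continuity of $p_t(\cdot,\cdot\mid b)$ alone to upgrade the almost-everywhere identity $p_t(y,x\mid b)\,p_t(y_0,x_0\mid b)=p_t(y,x_0\mid b)\,p_t(y_0,x\mid b)$ to an everywhere identity. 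That identity exhibits $\log p_t(\cdot,\cdot\mid b)$ as a sum of a smooth function of $y$ and a smooth function of $x$.
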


\begin{proof}
The first equivalence is the kernel factorisation criterion applied at the single time $t$.  If the conditional density factorises, then
\[
        \log p_t(y,x\mid b)=\log p_t(y\mid b)+\log p_t(x\mid b),
\]
so all mixed exterior-interior derivatives vanish.  Conversely, if the mixed derivatives vanish on a connected coordinate chart, then $\partial_{y_a}\log p_t$ is independent of $x$ for every $a$.  Integrating in the $y$-variables shows that
\[
        \log p_t(y,x\mid b)=u_b(y)+v_b(x)
\]
up to a $b$-dependent constant.  Exponentiating gives factorisation of the conditional density.
\end{proof}

\subsection{Path-wise independence contrasted with time-wise independence}

The following theorem establishes the mixture obstruction from path to time.

\begin{theorem}
Assume the path space boundary condition
\[
        Y_{[0,T]}\ci X_{[0,T]}\mid B_{[0,T]}.
\]
Fix $t\in[0,T]$.  Let $\mathcal B_T=\sigma(B_{[0,T]})$ and $\mathcal B_t=\sigma(B_t)$.  For bounded measurable $f,g$, define
\[
        m_f(\beta)=\bE[f(Y_t)\mid B_{[0,T]}=\beta],
        n_g(\beta)=\bE[g(X_t)\mid B_{[0,T]}=\beta].
\]
Then
\begin{equation}\label{time-mixture-identity-eq}
\begin{aligned}
&\bE[f(Y_t)g(X_t)\mid B_t]\\
&  =\bE[m_f(B_{[0,T]})n_g(B_{[0,T]})\mid B_t].
\end{aligned}
\end{equation}
Consequently $Y_t\ci X_t\mid B_t$ holds if and only if
\begin{equation}\label{mixture-cov-zero-eq}
\Cov\bigl(m_f(B_{[0,T]}),n_g(B_{[0,T]})\mid B_t\bigr)=0
\end{equation}
for all bounded measurable $f,g$.
\end{theorem}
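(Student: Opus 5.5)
The plan is to prove \eqref{time-mixture-identity-eq} by the tower property, using that $\mathcal B_t\subset\mathcal B_T$ because $B_t$ is a measurable function of the path $B_{[0,T]}$: the evaluation map $\beta\mapsto\beta(t)$ is continuous on $\Path_T(\X_B)$. Then
\[
\bE[f(Y_t)g(X_t)\mid B_t]=\bE\bigl[\bE[f(Y_t)g(X_t)\mid B_{[0,T]}]\,\big|\,B_t\bigr].
\]
The inner conditional expectation is evaluated with the kernel factorisation criterion. The path space boundary hypothesis says that $P^\beta$ is the product of its conditional marginals for $P_B$-almost every $\beta$. Since $y\mapsto f(y(t))$ and $x\mapsto g(x(t))$ are bounded measurable functionals on the path spaces, integrating $f(y(t))g(x(t))$ against $P^\beta=P^\beta_Y\otimes P^\beta_X$ gives $m_f(\beta)n_g(\beta)$ for a.e.\ $\beta$. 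The only care needed is to fix versions of $m_f$ and $n_g$ via the chosen disintegration, so that they are genuine measurable functions of $\beta$. The resulting identity holds almost surely, and substituting it yields \eqref{time-mixture-identity-eq}.

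For the consequence, I would apply the tower property again: $\bE[f(Y_t)\mid B_t]=\bE[m_f(B_{[0,T]})\mid B_t]$, and similarly $\bE[g(X_t)\mid B_t]=\bE[n_g(B_{[0,T]})\mid B_t]$. By definition of conditional independence, $Y_t\ci X_t\mid B_t$ holds exactly when $\bE[f(Y_t)g(X_t)\mid B_t]$ equals the product of these two conditional expectations for all bounded measurable $f,g$. Subtracting that product from the identity \eqref{time-mixture-identity-eq} gives precisely
\[
\bE[m_fn_g\mid B_t]-\bE[m_f\mid B_t]\,\bE[n_g\mid B_t]=\Cov\bigl(m_f(B_{[0,T]}),n_g(B_{[0,T]})\mid B_t\bigr).
\]
Hence conditional independence at time $t$ is equivalent to \eqref{mixture-cov-zero-eq} holding almost surely for all $f,g$. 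This conditional covariance is well defined because $m_f$ and $n_g$ are bounded by $\|f\|_\infty$ and $\|g\|_\infty$.

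No step is deep. The point needing the most attention is the almost-sure bookkeeping in the first step: the factorisation holds only for $P_B$-almost every $\beta$, while conditional independence requires the identity for all bounded measurable $f,g$. This causes no trouble, since for each fixed pair $(f,g)$ the exceptional set is $P_B$-null, and each conditional-expectation identity is only claimed almost surely. If a single null set is preferred, one could restrict to a countable generating class of indicators and use a monotone class argument, as in the kernel factorisation lemma, but this is not needed for the statement as posed.
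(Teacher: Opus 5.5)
Your proposal is correct and follows the paper's own proof. The paper likewise factorises $\bE[f(Y_t)g(X_t)\mid B_{[0,T]}]$ as $m_f(B_{[0,T]})\,n_g(B_{[0,T]})$, conditions down to $\sigma(B_t)\subset\sigma(B_{[0,T]})$ by the tower property, and identifies the conditional covariance as the defect from $\bE[m_f\mid B_t]\,\bE[n_g\mid B_t]$. Your added remarks on the $\sigma$-field inclusion and the per-$(f,g)$ null sets spell out what the paper leaves implicit.
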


\begin{proof}
The path space boundary condition gives
\[
        \bE[f(Y_t)g(X_t)\mid B_{[0,T]}]
        =m_f(B_{[0,T]})n_g(B_{[0,T]}).
\]
Taking conditional expectation with respect to the smaller sigma-field $\sigma(B_t)$ gives \eqref{time-mixture-identity-eq}.  On the other hand,
\[
        \bE[f(Y_t)\mid B_t]
        =\bE[m_f(B_{[0,T]})\mid B_t]
\]
and similarly for $g(X_t)$.  Therefore the instantaneous conditional independence identity
\[
        \bE[f(Y_t)g(X_t)\mid B_t]
        =\bE[f(Y_t)\mid B_t]\bE[g(X_t)\mid B_t]
\]
is equivalent to vanishing of the conditional covariance \eqref{mixture-cov-zero-eq} for all $f,g$.
\end{proof}

The following corollary gives a sufficient collapse condition.

\begin{corollary}
Under the assumptions of the theorem, if for every bounded $f$ and $g$ the functions
\[
        \bE[f(Y_t)\mid B_{[0,T]}],  \bE[g(X_t)\mid B_{[0,T]}]
\]
are $\sigma(B_t)$-measurable, then
\[
        Y_t\ci X_t\mid B_t.
\]
\end{corollary}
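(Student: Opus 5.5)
The plan is to reduce the corollary to the covariance criterion \eqref{mixture-cov-zero-eq} of the preceding theorem. If the conditional covariance vanishes for all bounded measurable $f,g$, that theorem gives $Y_t\ci X_t\mid B_t$. So it is enough to show that the hypothesis forces the covariance to be zero.

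First, I fix bounded measurable $f,g$ and use the hypothesis. It says that $m_f(B_{[0,T]})$ and $n_g(B_{[0,T]})$ are $\sigma(B_t)$-measurable, up to a version. By the Doob--Dynkin lemma, since $B_t$ takes values in a standard Borel space, there are measurable $\phi_f,\psi_g$ with
\[
m_f(B_{[0,T]})=\phi_f(B_t),\qquad n_g(B_{[0,T]})=\psi_g(B_t)
\]
almost surely. Both functions are bounded, because $m_f$ and $n_g$ are bounded by $\sup|f|$ and $\sup|g|$.

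Second, I compute the conditional expectations. Pulling out $\sigma(B_t)$-measurable factors in \eqref{time-mixture-identity-eq} gives
\[
\bE[f(Y_t)g(X_t)\mid B_t]=\phi_f(B_t)\psi_g(B_t).
\]
As in the proof of the theorem, the tower property gives
\[
\bE[f(Y_t)\mid B_t]=\bE[m_f(B_{[0,T]})\mid B_t]=\phi_f(B_t),
\]
and likewise $\bE[g(X_t)\mid B_t]=\psi_g(B_t)$. So the conditional covariance in \eqref{mixture-cov-zero-eq} equals $\phi_f\psi_g-\phi_f\psi_g=0$ almost surely.

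Finally, since this holds for every pair $f,g$, the theorem yields $Y_t\ci X_t\mid B_t$. One could also read this directly off the definition of conditional independence.

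The only point needing care is the meaning of "$\sigma(B_t)$-measurable". The functions $m_f,n_g$ are defined only $P_B$-almost everywhere, so the hypothesis must be read modulo null sets. I will state explicitly that we choose versions that are $\sigma(B_t)$-measurable up to $P$-null sets. This does not affect any of the almost-sure identities above, so I expect no genuine obstacle.
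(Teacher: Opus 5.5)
Your proof is correct and follows the paper's argument. The hypothesis makes both conditional means $\sigma(B_t)$-measurable, so their conditional covariance given $B_t$ vanishes, and the mixture obstruction theorem then gives $Y_t\ci X_t\mid B_t$; your Doob--Dynkin representation and the choice of versions modulo null sets only make explicit what the paper leaves implicit.
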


\begin{proof}
If both conditional means are already $\sigma(B_t)$-measurable, then their conditional covariance given $B_t$ is zero.  Apply the mixture obstruction theorem.
\end{proof}

\begin{remark}
This theorem explains the loss caused by passing from histories to time slices.  Conditioning on the whole boundary path can remove dependence, but averaging over possible boundary histories with the same instantaneous value can reintroduce dependence through a mixture.
\end{remark}

\section{Random dynamical systems and cocycles}

\subsection{Cocycles and path laws}

The following definition introduces random dynamical systems.

\begin{defn}
A random dynamical system over a metric dynamical system $(\Omega,\Fil,\prob,\theta_t)$ is a measurable map
\[
        \varphi:\R_+\times\Omega\times M\to M
\]
such that
\[
        \varphi(0,\omega,x)=x,
        \varphi(t+s,\omega,x)=\varphi(t,\theta_s\omega,\varphi(s,\omega,x)).
\]
\end{defn}

The following proposition shows that conditioning on noise gives a Dirac trajectory.

\begin{prop}
Fix $x\in M$.  Define the solution-path map
\[
        \Phi_x(\omega)=\bigl(t\mapsto \varphi(t,\omega,x)\bigr)
        \in\Path_T(M).
\]
The path law of the solution is $(\Phi_x)_*\prob$.  Conditional on the noise path $\omega$, the state trajectory is deterministic:
\[
        \Law(X_{[0,T]}\mid \omega)=\delta_{\Phi_x(\omega)}.
\]
\end{prop}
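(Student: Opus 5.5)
The plan has three steps: check that $\Phi_x$ is a measurable map into path space, identify the path law as a pushforward, and verify that the Dirac kernel is a regular conditional probability.

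\emph{Measurability.} Since $\varphi$ is jointly measurable, each coordinate map $\omega\mapsto\varphi(t,\omega,x)$ is measurable. Implicitly we assume, as in the path space framework of the paper, that $t\mapsto\varphi(t,\omega,x)$ is continuous on $[0,T]$, so that $\Phi_x(\omega)\in\Path_T(M)$. The Borel $\sigma$-field of $C([0,T];M)$ with the uniform topology is generated by the evaluation maps at rational times, because $M$ is Polish. Measurability of every coordinate therefore gives measurability of $\Phi_x$ into $\Path_T(M)$.

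\emph{Path law.} The solution process is $X_t(\omega)=\varphi(t,\omega,x)$, so $X_{[0,T]}=\Phi_x$ as random elements. By definition $\Law(X_{[0,T]})=(\Phi_x)_*\prob$.

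\emph{Conditional law.} I would consider the joint law of $(X_{[0,T]},\omega)$, namely $(\Phi_x\times\mathrm{id})_*\prob$ on $\Path_T(M)\times\Omega$, and check that the kernel $K_\omega=\delta_{\Phi_x(\omega)}$ is a regular conditional probability in the sense of the definition in the preliminaries.
\begin{itemize}
\item \emph{Kernel property.} For a Borel set $A\subset\Path_T(M)$, the map $\omega\mapsto\delta_{\Phi_x(\omega)}(A)=\one_A(\Phi_x(\omega))$ is measurable by the first step.
\item \emph{Disintegration identity.} For $C\in\Fil$,
\[
\int_C \delta_{\Phi_x(\omega)}(A)\,\prob(\dd\omega)=\prob(\{\Phi_x\in A\}\cap C)=\prob(X_{[0,T]}\in A,\ \omega\in C).
\]
\end{itemize}
Hence the Dirac kernel is a version of $\Law(X_{[0,T]}\mid\omega)$. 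When $\Omega$ is standard Borel, the uniqueness part of the disintegration theorem shows it is the version up to $\prob$-null sets. Equivalently, $X_{[0,T]}$ is $\Fil$-measurable, so $\bE[F(X_{[0,T]})\mid\Fil]=F(\Phi_x)$ for every bounded $F$.

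The only real obstacle is the measurability of $\Phi_x$ into the Polish path space. This needs path continuity, which is not stated in the definition of a random dynamical system and must be assumed, for instance by taking $\varphi$ to be a continuous random dynamical system. Once that is in place, the remaining steps are routine verifications.
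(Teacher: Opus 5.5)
Your proof is correct and takes the same approach as the paper: identify $X_{[0,T]}=\Phi_x$ as a random element, read off its law as the pushforward $(\Phi_x)_*\prob$, and note that a random element which is a measurable function of $\omega$ has the Dirac mass at its value as its conditional law given $\omega$. You add an explicit check of the kernel and disintegration identities. You also correctly point out a hypothesis the paper leaves implicit: path continuity (hence measurability of $\Phi_x$ into $\Path_T(M)$ via evaluations at rational times) must be assumed, because the paper's definition of a random dynamical system only requires $\varphi$ to be measurable.
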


\begin{proof}
The first statement is the definition of the pushforward law.  Since $X_{[0,T]}=\Phi_x(\omega)$ as a measurable function of $\omega$, the regular conditional law of $X_{[0,T]}$ given $\omega$ is the Dirac mass at that value.
\end{proof}

\subsection{Invariant skew products}

The following definition introduces invariant measures for an random dynamical system.

\begin{defn}
A probability measure $\rho$ on $\Omega\times M$ is invariant for the skew product
\[
        \Theta_t(\omega,x)=(\theta_t\omega,\varphi(t,\omega,x))
\]
if $(\Theta_t)_*\rho=\rho$ for all $t\geqslant0$.  If its $\Omega$-marginal is $\prob$, a disintegration has the form
\[
        \rho(\dd\omega,\dd x)=\prob(\dd\omega)\rho_\omega(\dd x).
\]
\end{defn}

The following proposition describes transport of sample measures.

\begin{prop}
Let $\rho(\dd\omega,\dd x)=\prob(\dd\omega)\rho_\omega(\dd x)$ be invariant for the skew product.  Then
\begin{equation}\label{sample-measure-transport-eq}
        \varphi(t,\omega,\cdot)_*\rho_\omega=\rho_{\theta_t\omega}
\end{equation}
for $\prob$-almost every $\omega$ and every $t\geqslant0$, after modification on a null set.
\end{prop}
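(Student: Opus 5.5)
We prove \eqref{sample-measure-transport-eq} by testing the invariance $(\Theta_t)_*\rho=\rho$ against product functions $F(\omega)G(x)$ and then appealing to the essential uniqueness of disintegrations (the disintegration theorem of Section 2). Fix $t\geqslant0$. For bounded measurable $F$ on $\Omega$ and $G$ on $M$, invariance gives
\[
        \int F(\theta_t\omega)\,G(\varphi(t,\omega,x))\,\rho_\omega(\dd x)\,\prob(\dd\omega)
        =\int F(\omega)\,G(x)\,\rho_\omega(\dd x)\,\prob(\dd\omega).
\]
The left side equals $\int F(\theta_t\omega)\bigl[\int G\,\dd(\varphi(t,\omega,\cdot)_*\rho_\omega)\bigr]\prob(\dd\omega)$. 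Since $\theta_t$ preserves $\prob$ (it is a metric dynamical system), the right side can be rewritten by the substitution $\omega\mapsto\theta_t\omega$ as $\int F(\theta_t\omega)\bigl[\int G\,\dd\rho_{\theta_t\omega}\bigr]\prob(\dd\omega)$.

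Thus the two kernels $\omega\mapsto\varphi(t,\omega,\cdot)_*\rho_\omega$ and $\omega\mapsto\rho_{\theta_t\omega}$ give the same integrals against every function of the form $F\circ\theta_t$. We also need measurability in $\omega$. For $\omega\mapsto\int G(\varphi(t,\omega,x))\rho_\omega(\dd x)$ it follows from joint measurability of $\varphi$ together with the kernel property of $\rho_\omega$, via a monotone class argument. For $\omega\mapsto\rho_{\theta_t\omega}$ it follows by composing the kernel with $\theta_t$.

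The main obstacle is that these identities only determine conditional expectations given $\theta_t^{-1}\Fil$, not given $\Fil$. So essential uniqueness, applied directly, only yields equality of the kernels after conditioning on $\theta_t^{-1}\Fil$. I would handle this as follows.

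\emph{Invertible case.} When $\theta_t$ is invertible, which is the standard setting of a metric dynamical system with $\theta$ a two-sided flow, $\theta_t^{-1}\Fil=\Fil$ up to null sets. Then both kernels are disintegrations of the same measure, namely $(\Theta_t)_*\rho=\rho$ disintegrated over $\prob$. Essential uniqueness, using a countable $\pi$-system generating $\Borel_M$ with $M$ standard Borel, gives equality for $\prob$-a.e.\ $\omega$.

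\emph{Non-invertible case.} Here I would need the extra hypothesis that $\rho_\omega$ is measurable with respect to the past $\sigma$-algebra, or accept the statement only modulo $\theta_t^{-1}\Fil$. I expect the intended reading is the invertible case, and I would state this assumption explicitly.

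Finally, the exceptional null set depends on $t$. For every fixed $t$ the statement holds almost surely. To get a single null set, I would take rational $t$ and use the cocycle property $\varphi(t+s,\omega,\cdot)=\varphi(t,\theta_s\omega,\cdot)\circ\varphi(s,\omega,\cdot)$ to propagate. This explains the phrase \enquote{after modification on a null set}, which I read as a modification of $\rho$ on a null set of $\omega$, chosen so that the identity holds for all $t$ along the orbit.
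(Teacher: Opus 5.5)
Your per-$t$ argument is the paper's own: test $(\Theta_t)_*\rho=\rho$ against functions of $(\omega,x)$, use $(\theta_t)_*\prob=\prob$, and conclude by essential uniqueness of disintegrations on a countable determining class in $M$. You are right that this only identifies $\omega\mapsto\varphi(t,\omega,\cdot)_*\rho_\omega$ modulo $\theta_t^{-1}\Fil$. The paper's step ``the conditional measure over $\theta_t\omega$ must be the pushforward of $\rho_\omega$'' passes over exactly this point.

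Because the paper never requires $\theta_t$ to be invertible (only $t\geqslant0$ appears), your restriction to the invertible case is a genuine hypothesis, not a convenience. Without it the proposition is false. On the one-sided Wiener shift, take the Ornstein--Uhlenbeck cocycle $\varphi(t,\omega,x)=e^{-t}x+\int_0^t e^{-(t-s)}\,\dd\omega(s)$, with the integral defined pathwise by integration by parts. Take $\rho=\prob\otimes\mu$ with $\mu=\mathcal N(0,\tfrac12)$. Then $\theta_t\omega$ depends only on increments after time $t$, so it is independent of $\varphi(t,\omega,x)$, whose law is $\mu$. Hence $\rho$ is $\Theta_t$-invariant. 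Yet $\varphi(t,\omega,\cdot)_*\mu$ has variance $\tfrac12e^{-2t}\neq\tfrac12$ for every $\omega$ and every $t>0$.

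Two of your side remarks need correcting.

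First, measurability of $\rho_\omega$ with respect to a past $\sigma$-algebra does not rescue the non-invertible case: in the example above $\rho_\omega$ is constant, hence measurable with respect to everything. The correct replacement is the other option you mention. For each bounded measurable $G$, the $\theta_t^{-1}\Fil$-conditional expectation of $\omega\mapsto\int G(\varphi(t,\omega,x))\,\rho_\omega(\dd x)$ equals $\omega\mapsto\int G\,\dd\rho_{\theta_t\omega}$ almost surely. Unconditional equality holds exactly when $\omega\mapsto\varphi(t,\omega,\cdot)_*\rho_\omega$ is $\theta_t^{-1}\Fil$-measurable up to null sets.

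Second, the passage to a single null set for all $t$ via rational times fails as sketched. The identity $\varphi(t,\omega,\cdot)=\varphi(t-q,\theta_q\omega,\cdot)\circ\varphi(q,\omega,\cdot)$ only moves the irrational time $t-q$ to a new base point. Without some regularity of $t\mapsto\rho_{\theta_t\omega}$, or a genuine perfection argument, you never reach irrational $t$. The paper's proof delivers only the per-$t$ statement with a $t$-dependent null set, and your invertible-case argument already gives that. Drop that last paragraph, or state the uniform version as an additional assumption.
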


\begin{proof}
For bounded measurable $F(\omega,x)$, invariance gives
\[
\begin{aligned}
        \int F(\omega,x)\,\prob(\dd\omega)\rho_\omega(\dd x)
        &=\int F(\theta_t\omega,\varphi(t,\omega,x))\,
          \prob(\dd\omega)\rho_\omega(\dd x).
\end{aligned}
\]
Using $\theta_t$-invariance of $\prob$ and disintegrating over the first coordinate, the conditional measure over $\theta_t\omega$ must be the pushforward of $\rho_\omega$ by $\varphi(t,\omega,\cdot)$.  Essential uniqueness of disintegration gives \eqref{sample-measure-transport-eq}.
\end{proof}

The following definition introduces random path space boundaries.

\begin{defn}
Assume $M=M_Y\times M_B\times M_X$.  A random invariant measure $\rho_\omega$ has a fibrewise Markov boundary if, for $\prob$-almost every $\omega$,
\[
        \rho_\omega(\dd y,\dd b,\dd x)
        =\rho_{B,\omega}(\dd b)\rho_{Y\mid B,\omega}^b(\dd y)
          \rho_{X\mid B,\omega}^b(\dd x).
\]
It has a random path space boundary over $[0,T]$ if the corresponding stationary cocycle path law satisfies the analogous factorisation conditional on $B_{[0,T]}$ and, when retained, on the base noise $\omega$.
\end{defn}

The following theorem establishes random dynamical system boundary factorisation.

\begin{theorem}
Let $\varphi$ be a cocycle on $M_Y\times M_B\times M_X$ with invariant skew-product measure $\rho(\dd\omega,\dd x)=\prob(\dd\omega)\rho_\omega(\dd x)$.  Start the system with $Z_0\sim\rho_\omega$ over the base $\omega$.  If for $\prob$-almost every $\omega$ the sample measure factorises conditionally over $B$, and if the cocycle respects the boundary in the sense that, for fixed $(\omega,b_{[0,T]})$, the exterior and interior solution maps depend on independent fibre variables and no common residual variable outside the boundary, then
\[
        Y_{[0,T]}\ci X_{[0,T]}\mid B_{[0,T]}
\]
under the stationary path law.  Conversely, if the stationary path map is fibrewise injective up to null sets in the initial exterior and interior coordinates, then stationary path space factorisation implies fibrewise conditional factorisation of $\rho_\omega$.
\end{theorem}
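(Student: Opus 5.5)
The plan is to formalise the informal hypothesis ``the cocycle respects the boundary'' as a structural representation of the path map, and then reduce both directions to the kernel factorisation criterion applied to the stationary path law.

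\textbf{Formalising the hypothesis.} Fix $\omega$ and a boundary path $b=b_{[0,T]}$. I read the assumption as follows. There are measurable maps $\Psi_Y^{\omega,b}$ and $\Psi_X^{\omega,b}$ such that, on the event $\{B_{[0,T]}=b\}$,
\[
        Y_{[0,T]}=\Psi_Y^{\omega,b}(Y_0,\eta_Y),\qquad X_{[0,T]}=\Psi_X^{\omega,b}(X_0,\eta_X).
\]
Here $\eta_Y,\eta_X$ are fibre noise variables. Conditionally on $(\omega,b)$, the triple $(Y_0,\eta_Y)$, $(X_0,\eta_X)$ has a law of the form
\[
        \mathrm{Law}(Y_0,\eta_Y\mid\omega,b)\otimes\mathrm{Law}(X_0,\eta_X\mid\omega,b).
\]
``No common residual variable'' means exactly that nothing beyond $(\omega,b)$ enters both maps. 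The main obstacle is this step: establishing the conditional product law of the inputs given the whole boundary path $b$, rather than given only $B_0$.

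\textbf{Forward direction.} For the starting point, the fibrewise factorisation of $\rho_\omega$ gives $Y_0\ci X_0\mid (B_0,\omega)$. To upgrade this to independence given the full path, I would write the conditional law of the inputs given $(\omega,b)$ by Bayes' rule. This is the same mechanism as in the sensor--actuator theorem. The law of $b$ given $(Y_0,\eta_Y,X_0,\eta_X,\omega)$ should be a likelihood that splits as
\[
        \ell_Y(b\mid Y_0,\eta_Y,\omega)\,\ell_X(b\mid X_0,\eta_X,\omega).
\]
Its reference measure is the product measure $\rho^{B_0}_{Y\mid B,\omega}\otimes\mathrm{Law}(\eta_Y)\otimes\rho^{B_0}_{X\mid B,\omega}\otimes\mathrm{Law}(\eta_X)$. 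The Radon--Nikodym separability lemma then makes the conditional input law a product.

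If the boundary path is instead a deterministic function of $(\omega,B_0)$ and the fibre variables, I would phrase the hypothesis directly as this product structure and cite it. Pushing the product law forward through $\Psi_Y^{\omega,b}\times\Psi_X^{\omega,b}$ preserves products. So for almost every $(\omega,b)$,
\[
        \mathrm{Law}(Y_{[0,T]},X_{[0,T]}\mid\omega,b)
\]
is a product measure. If $\omega$ is integrated out, one should retain $\omega$ in the conditioning, as the definition of a random path space boundary allows. By the kernel factorisation criterion, this is the path space boundary property.

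\textbf{Converse.} Assume stationary path space factorisation given $(\omega,b_{[0,T]})$. Evaluation at time $0$ is a measurable map of paths, so $Y_0\ci X_0$ given $(\omega,B_{[0,T]})$. Fibrewise injectivity lets the initial coordinates be recovered from the paths measurably, by Lusin--Souslin on standard Borel spaces. To descend from conditioning on $B_{[0,T]}$ to conditioning on $B_0$, I need $B_{[0,T]}$ to carry no information coupling $Y_0$ and $X_0$ beyond what $B_0$ carries. Otherwise the mixture obstruction theorem shows that factorisation can fail after averaging over boundary histories. So I would use the injectivity assumption to identify $\sigma(\omega,B_{[0,T]})$ with $\sigma(\omega,B_0,\text{fibre noise})$ restricted to the boundary. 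Then I would check that the conditional covariance term in the mixture obstruction theorem vanishes. This yields the factorisation of $\rho_\omega$ over $b$ for $\prob$-almost every $\omega$.

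I expect this last descent to need the hypotheses to be read generously. I would state explicitly the interpretation of ``fibrewise injective'' under which it holds.
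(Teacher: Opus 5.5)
Your forward direction follows the paper's argument: once $(\omega,b)$ is fixed, $Y_{[0,T]}$ and $X_{[0,T]}$ are measurable images of separate fibre variables, and pushforward preserves products. You are right, and more careful than the paper, to insist that this needs the initial fibre variables to be independent given $(\omega,B_{[0,T]})$. Factorisation of $\rho_\omega$ only gives independence given $(\omega,B_0)$, and conditioning on a boundary history driven by both sides is collider conditioning. For example, if $y_0,x_0$ are independent given $b_0$ and $b$ moves with speed $y_0+x_0$, the boundary path reveals $y_0+x_0$ and so couples them.

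Your Bayes upgrade does not apply here, though. With $\omega$ fixed, a random dynamical system has no further noise, so your $\eta_Y,\eta_X$ are trivial and $B_{[0,T]}$ is a deterministic function of $Z_0$. There is no likelihood density to split. The conditional product structure therefore has to be the content of ``respects the boundary''. That is your fallback, and it is what the paper implicitly assumes.

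Retaining $\omega$ in the conditioning also proves a different statement from the one claimed. To reach $Y_{[0,T]}\ci X_{[0,T]}\mid B_{[0,T]}$ with $\omega$ integrated out, you must make the mixture covariance of the hidden-common-cause proposition (with $\omega$ in place of $U$) vanish. The paper assigns exactly this to the ``no common residual variable'' clause.

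The converse has a genuine gap, and it cannot be closed along the lines you suggest. Projecting to time $0$ gives $Y_0\ci X_0\mid(\omega,B_{[0,T]})$ without any use of injectivity. The whole content is the descent to $(\omega,B_0)$, and injectivity cannot supply it, because the implication is false as stated. Here is a counterexample:
\begin{itemize}
\item Take a trivial base, $M_B=\R/\mathbf Z$ and $M_Y=M_X=\R\times\{1,2\}$.
\item Use the flow $\varphi_t(y,b,x)=(y,\,b+t\ell(y)\bmod 1,\,x)$, where $\ell(y)$ is the second coordinate of $y$.
\item Let $\rho$ make $b$ uniform and independent of $(y,x)=((\tilde y,\ell),(\tilde x,\ell))$, where $\ell$ is uniform on $\{1,2\}$ and $\tilde y,\tilde x$ are independent of each other and of $\ell$.
\end{itemize}
Then $\rho$ is invariant, and the path map is injective because the $y$ and $x$ paths are constant. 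For $T>0$ the boundary path reveals its speed $\ell$, so the constant paths at $(\tilde y,\ell)$ and $(\tilde x,\ell)$ are conditionally independent given $B_{[0,T]}$. Given $B_0$ alone, however, $Y_0$ and $X_0$ share the unknown bit $\ell$, so $\rho$ does not factorise over $B$.

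The paper's pull-back through a measurable inverse has the same defect. It only yields factorisation of $\Law(Y_0,X_0\mid\omega,B_{[0,T]})$, which your projection already gives. The missing hypothesis is the one in the paper's lemma on initial conditions: $\Law(Y_0,X_0\mid\omega,B_{[0,T]})$ depends only on $(\omega,B_0)$. This holds, for instance, when $B_{[0,T]}$ is $\sigma(\omega,B_0)$-measurable. Under it the converse follows from the projection and that lemma, with no injectivity needed. Without it, the correct conclusion is only factorisation of the initial law given the whole boundary history.
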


\begin{proof}
For fixed $\omega$, the path map sends an initial condition $(y_0,b_0,x_0)$ to a path $(y,b,x)$.  By assumption, after conditioning on the boundary path $b$, the exterior path is a measurable function of the exterior fibre variable and the fixed data $(\omega,b)$, while the interior path is a measurable function of the interior fibre variable and the fixed data $(\omega,b)$.  The conditional sample measure of the initial exterior and interior variables factorises over the boundary variable.  Measurable pushforwards preserve product structure, so the conditional path law factorises over $(\omega,b)$.  Integrating over the conditional law of $\omega$ given $b$ preserves the factorisation exactly when no common residual dependence through $\omega$ remains outside the observed boundary path; this is the stated cocycle-respecting condition.

For the converse, suppose the stationary path law factorises and the path map is fibrewise injective in the initial exterior and interior coordinates.  Pull the conditional path factorisation back through the inverse path map.  Since injective measurable maps between standard Borel spaces admit measurable inverses on their images, the product disintegration of paths induces a product disintegration of the initial fibre measures.  Hence $\rho_\omega$ factorises conditionally over $B$.
\end{proof}

\begin{remark}
The random dynamical system formulation separates two kinds of conditioning.  Conditioning on the noise $\omega$ makes trajectories deterministic.  Conditioning on the boundary path asks whether, after the observable boundary history has been fixed, the remaining exterior and interior uncertainties are product uncertainties.
\end{remark}

\section{Stationary factorisation and convergence}

\subsection{Equilibrium}

Consider a diffusion on $\R^{d_Y+d_B+d_X}$ with invariant density
\[
        \rho_*(y,b,x)=Z^{-1}e^{-\Phi(y,b,x)}.
\]
In equilibrium gradient Langevin form, the current vanishes and the stationary law determines the reversible drift through the potential.

The following theorem establishes the equilibrium potential-splitting criterion.

\begin{theorem}
Assume $\rho_*(y,b,x)>0$.  Then
\[
        Y\ci X\mid B
\]
under the invariant law if and only if there exist functions $\Phi_Y,\Phi_X,\Phi_B$ such that
\begin{equation}\label{potential-split-eq}
        \Phi(y,b,x)=\Phi_Y(y,b)+\Phi_X(x,b)+\Phi_B(b)
\end{equation}
up to an additive constant and null sets.  If $\Phi$ is $C^2$ in $(y,x)$ on connected fibres, this is equivalent to
\begin{equation}\label{equilibrium-cross-hessian-eq}
        \partial_{y_a}\partial_{x_r}\Phi(y,b,x)=0
\end{equation}
for all $a,r$.
\end{theorem}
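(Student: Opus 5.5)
The proof reduces the statement to the finite-dimensional density case: the Radon--Nikodym separability lemma, or equivalently the fixed-time proposition on mixed Hessians, applied to the conditional density $\rho_*(y,x\mid b)$.

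First, disintegrate the invariant law over $b$. Since $\rho_*>0$ is a density with respect to Lebesgue measure $\dd y\,\dd b\,\dd x$, the marginal is $\rho_B(b)=\int\rho_*(y,b,x)\,\dd y\,\dd x>0$. The conditional density with respect to the product reference $\dd y\otimes\dd x$ is
\[
\rho_*(y,x\mid b)=\frac{e^{-\Phi(y,b,x)}}{Z\rho_B(b)}.
\]
By the kernel factorisation lemma, $Y\ci X\mid B$ holds if and only if, for almost every $b$, the conditional law is a product. By the Radon--Nikodym separability lemma, this holds if and only if $e^{-\Phi(y,b,x)}=u_b(y)v_b(x)$ for almost every $(y,x)$, with $u_b,v_b\geq0$ after absorbing the $b$-dependent normaliser.

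For the forward direction, positivity of $\rho_*$ forces $u_b,v_b>0$ almost everywhere. The natural choice is $u_b=\rho(y\mid b)$ and $v_b=\rho(x\mid b)$ up to constants, which are strictly positive because they are integrals of a positive function. Taking logarithms gives
\[
\Phi(y,b,x)=-\log\rho(y\mid b)-\log\rho(x\mid b)-\log\rho_B(b)-\log Z.
\]
Setting $\Phi_Y=-\log\rho(y\mid b)$, $\Phi_X=-\log\rho(x\mid b)$ and $\Phi_B=-\log\rho_B(b)$, each is a jointly measurable function of the indicated variables. This gives \eqref{potential-split-eq} up to a constant and null sets. The point to check carefully is joint measurability in $b$, and that the exceptional null sets in $(y,x)$ for each $b$ assemble into a null set in $(y,b,x)$. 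Choosing the versions explicitly as marginal integrals of $\rho_*$ handles both, by Fubini--Tonelli.

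For the converse, the splitting gives $e^{-\Phi}=e^{-\Phi_Y(y,b)}e^{-\Phi_X(x,b)}e^{-\Phi_B(b)}$. The conditional density is then multiplicatively separable for almost every $b$ (Fubini again), so the separability lemma applies.

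For the $C^2$ statement, \eqref{potential-split-eq} implies \eqref{equilibrium-cross-hessian-eq} by direct differentiation, once the null-set identity is upgraded to a pointwise one. Continuity of $\Phi$ ensures that equality holds everywhere off a closed null set, hence everywhere. Conversely, assume the cross Hessian vanishes on each connected fibre $\{b\}\times(\text{domain in }(y,x))$. Then $\partial_{y_a}\Phi$ is locally independent of $x$, and by connectedness the function $\Phi(y,b,x)-\Phi(y,b,x_0)-\Phi(y_0,b,x)+\Phi(y_0,b,x_0)$ vanishes identically. This is the integration step in the fixed-time proposition \eqref{mixed-hessian-zero-eq}, and it yields the splitting with $\Phi_Y(y,b)=\Phi(y,b,x_0)$, $\Phi_X(x,b)=\Phi(y_0,b,x)$ and $\Phi_B(b)=-\Phi(y_0,b,x_0)$.

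The main obstacle is this last step in the general case. The vanishing-difference argument needs fibres with a product-connected structure, such as convex or rectangular domains in $(y,x)$, or else a base point reachable along axis-parallel paths. I would state it for the full Euclidean fibre $\R^{d_Y}\times\R^{d_X}$, where this is immediate, and remark that the same argument applies on any product of connected domains.
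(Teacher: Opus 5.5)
Your proposal is essentially the paper's proof. You write the conditional density as $e^{-\Phi(y,b,x)}$ normalised in $b$, combine kernel factorisation with Radon--Nikodym separability, take logarithms, and integrate the vanishing mixed Hessian along fibres; your Tonelli bookkeeping of versions and your remark that the integration step needs product-connected fibres (such as the full $\R^{d_Y}\times\R^{d_X}$ here), not merely connected ones, sharpen the paper's sketch.

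The one step to tighten is the pointwise upgrade. Since $\Phi_Y,\Phi_X$ are only measurable, continuity of $\Phi$ alone does not make the almost-everywhere splitting hold pointwise. Instead, note that the continuous cross-difference $\Phi(y,b,x)-\Phi(y,b,x')-\Phi(y',b,x)+\Phi(y',b,x')$ vanishes for almost every $(y,x,y',x')$, hence identically, and differentiate that.
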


\begin{proof}
The invariant conditional density is
\[
        \rho_*(y,x\mid b)=\frac{e^{-\Phi(y,b,x)}}{
        \int e^{-\Phi(y',b,x')}\,\dd y'\dd x'}.
\]
Conditional independence is equivalent to factorisation of this density in $(y,x)$.  Since the exponential is positive, this is equivalent to additive separation of $-\Phi$, hence of $\Phi$, up to a function of $b$ absorbed into $\Phi_B$.  If $\Phi$ is $C^2$, additive separation implies the mixed Hessian condition.  Conversely, vanishing mixed Hessian implies $\partial_{y_a}\Phi$ is independent of $x$, so integrating over $y$ and then over $x$ gives \eqref{potential-split-eq} on connected fibres.
\end{proof}

\subsection{Nonequilibrium stationary currents}

For a diffusion
\[
        \dd X_t=b(X_t)\,\dd t+\sigma(X_t)\,\dd W_t,
          D=\frac12\sigma\sigma^\top,
\]
with invariant density $\rho$, the stationary Fokker--Planck equation is
\[
        0=-\nabla\cdot(b\rho)+\nabla\cdot\nabla\cdot(D\rho).
\]
The stationary current is
\[
        J=b\rho-\nabla\cdot(D\rho),
\]
and stationarity is $\nabla\cdot J=0$.  Equilibrium is the special case $J=0$.  A nonequilibrium steady state has $J\neq0$ but $\nabla\cdot J=0$.

The following theorem establishes the NESS boundary criterion.

\begin{theorem}
Let a non-degenerate diffusion have smooth positive invariant density $\rho_*$ and stationary current $J_*$.  Its invariant law has an instantaneous Markov boundary,
\[
        Y\ci X\mid B,
\]
if and only if
\begin{equation}\label{ness-density-factor-eq}
        \rho_*(y,b,x)=\rho_B(b)\rho_{Y\mid B}(y\mid b)\rho_{X\mid B}(x\mid b).
\end{equation}
The process has a stationary path space boundary over $[0,T]$ if, in addition, the stationary path law satisfies
\[
        \Law(Y_{[0,T]},X_{[0,T]}\mid B_{[0,T]}=b)
        =\Law(Y_{[0,T]}\mid B_{[0,T]}=b)\otimes
         \Law(X_{[0,T]}\mid B_{[0,T]}=b)
\]
for $P_B$-almost every $b$.  A sufficient generator-level condition for this stronger statement is the conditional generator splitting \eqref{generator-splitting-eq} under the stationary clamped law for almost every boundary path.
\end{theorem}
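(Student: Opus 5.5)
The theorem bundles three assertions, and I will prove each in turn by reducing it to results already established.

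\emph{Step 1: the instantaneous criterion \eqref{ness-density-factor-eq}.}
Let $(Y,B,X)\sim\rho_*$. The current $J_*$ plays no role here, since the statement concerns only a single-time law.
Because $\rho_*$ is smooth and strictly positive, the marginal $\rho_B(b)=\int\rho_*(y,b,x)\,\dd y\,\dd x$ is strictly positive. The regular conditional law of $(Y,X)$ given $B=b$ therefore has density $\rho_*(y,b,x)/\rho_B(b)$. Its marginals are the conditional densities $\rho_{Y\mid B}(y\mid b)$ and $\rho_{X\mid B}(x\mid b)$, obtained by integrating out $x$ and $y$ respectively.
By the kernel factorisation criterion, $Y\ci X\mid B$ holds if and only if
\[
\rho_*(y,b,x)/\rho_B(b)=\rho_{Y\mid B}(y\mid b)\,\rho_{X\mid B}(x\mid b)
\]
for almost every $b$ and almost every $(y,x)$. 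Multiplying by $\rho_B(b)$ gives \eqref{ness-density-factor-eq} almost everywhere.
Conversely, suppose a factorisation of the form \eqref{ness-density-factor-eq} holds with some nonnegative functions. Integrating shows these functions are the true marginal and conditional densities, up to normalisation; equivalently, one can invoke the Radon--Nikodym separability lemma fibrewise with Lebesgue reference measure. Smoothness and positivity then let us upgrade the almost-everywhere identity to a pointwise one by continuity. This is the same argument as in the instantaneous-boundary proposition, now applied to $\rho_*$.

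\emph{Step 2: the path-space clause.}
This part is definitional. The stationary path law is $P_T$ with initial law $\rho_*$. A stationary path space boundary means exactly the path space Markov boundary definition \eqref{pathboundary-eq} applied to $P_T$, which is what the displayed factorisation states.
Nothing needs proving beyond checking two things. First, regular conditional laws given $B_{[0,T]}$ exist, which holds by the path-conditioning proposition because the path spaces are Polish. Second, the path condition is genuinely stronger, which I will note via the mixture obstruction theorem: the path condition at $t=0$, together with the mixture identity, does not yield \eqref{ness-density-factor-eq} unless the covariance \eqref{mixture-cov-zero-eq} vanishes. The clause ``in addition'' is therefore consistent with the paper's framework.

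\emph{Step 3: sufficiency of generator splitting.}
Fix $b$ for which the clamped conditional law $\Pi^b$ of $(Y,X)$ is Markov with generator $\LL^b_t$ satisfying \eqref{generator-splitting-eq}, and for which the martingale problem is well posed.
Two ingredients are needed. The first is the initial conditional factorisation, $\Law(Y_0,X_0\mid B_{[0,T]}=b)$ being a product. The second is the converse direction of the generator-splitting theorem, which gives $\Pi^b=\Pi^b_Y\otimes\Pi^b_X$. Applying this for $P_B$-almost every $b$ yields the displayed path factorisation.

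The main obstacle is the first ingredient in Step 3. The generator-splitting theorem yields a product law only when the initial marginal of the martingale problem is itself a product. Under conditioning on the entire boundary path, that initial law is $\Law(Y_0,X_0\mid B_{[0,T]}=b)$, not $\rho_*(\cdot\mid b_0)$.
I would handle this by reading the hypothesis ``under the stationary clamped law'' as including the initial time. That is, the conditional law of $\Pi^b$ is required to be a product at $t=0$; the product structure at later times is then propagated by uniqueness of the martingale problem, started from the product initial law. If one insists on deriving the initial product from \eqref{ness-density-factor-eq}, a Bayes/$h$-transform argument would be needed, and I would state that as an explicit extra assumption rather than attempt it.
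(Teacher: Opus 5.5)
Your proposal follows the paper's proof step for step: the time-slice kernel factorisation criterion for \eqref{ness-density-factor-eq}, the path clause as the definition of a path space Markov boundary, and the splitting-implies-product half of the generator-splitting theorem. The one place you depart from it, Step 3, is an improvement. The paper's proof simply cites that theorem. Its converse builds $P^b_Y\otimes P^b_X$ from the initial marginals of $\Pi^b$ and then appeals to uniqueness of the martingale problem. Uniqueness identifies this product with $\Pi^b$ only if $\Law(Y_0,X_0\mid B_{[0,T]}=b)$ is already a product, which is exactly the condition you isolate.

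This is not a technicality. Take the stationary Ornstein--Uhlenbeck system $\dd B_t=-B_t\,\dd t+\dd W^B_t$, $\dd Y_t=(B_t-Y_t)\,\dd t+\dd W^Y_t$, $\dd X_t=(B_t-X_t)\,\dd t+\dd W^X_t$. Here $B$ is autonomous, so $B_{(0,T]}\ci(Y_0,X_0)\mid B_0$, and $(W^Y,W^X)|_{[0,T]}$ is independent of $(Y_0,X_0,B_{[0,T]})$. Hence, given $B_{[0,T]}=b$, the pair $(Y,X)$ solves the split equations $\dd Y_t=(b_t-Y_t)\,\dd t+\dd W^Y_t$, $\dd X_t=(b_t-X_t)\,\dd t+\dd W^X_t$, started from $\Law(Y_0,X_0\mid B_0=b_0)$. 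The conditional generator therefore splits and the martingale problem is well posed for every $b$. Yet the Lyapunov equation gives $\Sigma_{BB}=\tfrac12$ and $\Sigma_{YB}=\Sigma_{XB}=\Sigma_{YX}=\tfrac14$, so $\Cov(Y_0,X_0\mid B_{[0,T]}=b)=\tfrac18\neq0$ and $\Pi^b$ is not a product.

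So the sufficiency clause, as stated, genuinely needs your initial-law hypothesis, or a separate argument (such as the Bayes/$h$-transform one you mention) that produces it. As you note, \eqref{ness-density-factor-eq} concerns $\rho_*(\cdot\mid b_0)$ rather than the law conditioned on the whole boundary path, so it does not supply this without further work.

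Two small points. In Step 2, the mixture obstruction theorem shows that the path condition does not imply \eqref{ness-density-factor-eq}, and the converse also fails. So ``in addition'' marks an independent hypothesis, not a logically stronger one, and your remark that the path condition is ``genuinely stronger'' should be dropped. In Step 1, keep the identity almost everywhere. Smoothness of $\rho_*$ alone does not make the marginal and conditional densities continuous, so the pointwise upgrade is unjustified, and it is not needed anyway.
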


\begin{proof}
The first statement is exactly the fixed-time kernel factorisation criterion applied to the invariant density.  For the path space statement, apply the path space factorisation theorem to the stationary path measure.  The generator-level condition is sufficient by the generator splitting theorem: if, after conditioning on the boundary path, the exterior and interior martingale problem splits and is unique, the conditional path law is a product.
\end{proof}

The following proposition shows convergence to a factorised steady state.

\begin{prop}
Let $(X_t)_{t\geqslant0}$ be a Markov process with invariant probability measure $\mu$.  Suppose
\[
        \Law(X_t)\Longrightarrow \mu
\]
as $t\to\infty$.  If $\mu$ satisfies $Y\ci X\mid B$, then every weak limit of the time-slice laws satisfies the instantaneous boundary factorisation.  If the process is stationary and the stationary path law satisfies the path space boundary condition, then every finite window sampled from stationarity satisfies the same path space factorisation.
\end{prop}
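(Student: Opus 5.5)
The first assertion is almost tautological, since every weak limit of $\Law(X_t)$ equals $\mu$. I would begin by recording that weak limits in $\Prob$ of a Polish space are unique: the Portmanteau theorem identifies a measure through its integrals of bounded continuous functions. So if $\Law(X_t)\Longrightarrow\mu$, then every subsequential weak limit of the time-slice laws $\Law(X_{t_n})$ with $t_n\to\infty$ is $\mu$ itself. The instantaneous boundary factorisation of any such limit is then the hypothesis on $\mu$. To put it in kernel form, I would invoke the kernel factorisation criterion (the lemma giving \eqref{kernel-factorisation-eq}) applied to $\mu$ on $\X_Y\times\X_B\times\X_X$. This gives
\[
\mu(\dd y,\dd b,\dd x)=\mu_B(\dd b)\,\mu(\dd y\mid b)\,\mu(\dd x\mid b).
\]
I would also remark that no continuity of the conditional kernels is being claimed for finite $t$. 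Conditional independence is not weakly closed in general, which is exactly why the statement is phrased about the limit and not about approximants.

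For the second assertion, I would read \enquote{finite window sampled from stationarity} as the law of $(Y,B,X)_{[s,s+T]}$ for some $s\geqslant0$. Stationarity means the shifted process $(Z_{s+r})_{r\in[0,T]}$ has the same path law $P_T$ on $\Path_T(\X)$ as $(Z_r)_{r\in[0,T]}$. The restriction map from paths on $[s,s+T]$ to paths on $[0,T]$, given by $\omega\mapsto\omega(s+\cdot)$, is a Borel isomorphism of path spaces. It acts coordinatewise on the $Y$, $B$ and $X$ components. So the window law is the pushforward of $P_T$ under a product of measurable bijections $\tau_Y\times\tau_B\times\tau_X$.

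The remaining step is that conditional independence is preserved under such coordinatewise bijective pushforwards. If $P^b$ is a disintegration of $P_T$ over $B$, then $b'\mapsto(\tau_Y\otimes\tau_X)_*P^{\tau_B^{-1}b'}$ is a disintegration of the pushed law over the new boundary. This follows from the defining identity of regular conditional probability together with a change of variables. Product measures push forward to product measures. The essential uniqueness theorem then gives the factorisation \eqref{pathboundary-eq} for almost every window boundary path.

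If windows of length $T'<T$ are meant, an additional step is needed, since restriction from $[0,T]$ to a subinterval is no longer injective. Here I would use that $Y_{[0,T]}\ci X_{[0,T]}\mid B_{[0,T]}$ implies independence of measurable functions of $Y$ and $X$ given $B_{[0,T]}$. That does not give conditioning on the shorter $B_{[0,T']}$: as the mixture obstruction theorem shows, coarsening the boundary can reintroduce dependence. I expect this to be the main obstacle. I would therefore interpret the window as having the same length $T$ on which the hypothesis is stated. Windows of other length would need a separate hypothesis for that length.
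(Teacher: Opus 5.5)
Your proposal is correct and follows essentially the same route as the paper. For the first claim, uniqueness of weak limits forces every limit as $t\to\infty$ to be $\mu$, whose conditional factorisation over $B$ is the hypothesis; for the second, stationarity gives each shifted window the same law $P_T$, so the disintegration identity carries over. Your caveat that windows of shorter length $T'<T$ are not covered, because coarsening the boundary history can reintroduce dependence, is a correct refinement that the paper's one-line argument leaves implicit.
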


\begin{proof}
The first assertion is immediate from weak convergence to $\mu$ together with the assumption that the limiting invariant law factorises conditionally.  More explicitly, along any convergent subsequence the limit is $\mu$, and its disintegration over $B$ has product conditional kernels.  The second assertion is just stationarity: every finite window has the same law as the window starting at time zero under the stationary path measure, so the same disintegration identity holds.
\end{proof}

\subsection{Ao or Helmholtz form}

The following proposition gives the finite-dimensional stationary Helmholtz decomposition.

\begin{prop}
Let $\rho>0$ be a smooth invariant density for a diffusion with diffusion tensor $D$.  Define
\[
        J=b\rho-\nabla\cdot(D\rho).
\]
Then $\nabla\cdot J=0$ and
\begin{equation}\label{helmholtz-finite-eq}
        b=\rho^{-1}\nabla\cdot(D\rho)+\rho^{-1}J.
\end{equation}
The first term is reversible with respect to $
ho$, while the second term is stationary-current preserving.  If $\rho=e^{-\Phi}/Z$, the reversible part is
\[
        b_{\mathrm{rev}}=\nabla\cdot D-D\nabla\Phi.
\]
\end{prop}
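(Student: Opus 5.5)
The plan is to unpack the definition of $J$ algebraically and then verify the two structural claims separately.

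\emph{Divergence-free current and the decomposition.} I will first write the stationary Fokker--Planck equation in conservation form,
\[
0=-\nabla\cdot(b\rho)+\nabla\cdot\nabla\cdot(D\rho)=-\nabla\cdot\bigl(b\rho-\nabla\cdot(D\rho)\bigr)=-\nabla\cdot J.
\]
This gives $\nabla\cdot J=0$ immediately from invariance of $\rho$. Since $\rho>0$, I can rearrange the definition of $J$ as $b\rho=\nabla\cdot(D\rho)+J$ and divide by $\rho$, which yields \eqref{helmholtz-finite-eq}.

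\emph{The reversible part.} For $\rho=e^{-\Phi}/Z$, I will expand componentwise with the product rule, using the symmetry of $D$:
\[
\rho^{-1}\partial_j(D_{ij}\rho)=\partial_jD_{ij}+D_{ij}\partial_j\log\rho=\partial_jD_{ij}-D_{ij}\partial_j\Phi.
\]
This is $b_{\mathrm{rev}}=\nabla\cdot D-D\nabla\Phi$.

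\emph{Interpretation of the two terms.} To justify calling the first term reversible, I will show that the diffusion with drift $b_{\mathrm{rev}}$ and tensor $D$ has zero stationary current with respect to $\rho$. Indeed, $b_{\mathrm{rev}}\rho-\nabla\cdot(D\rho)=0$ by construction, so the generator $\LL_{\mathrm{rev}}f=\rho^{-1}\nabla\cdot(\rho D\nabla f)$ is symmetric in $L^2(\rho)$. This is a one-line integration by parts on compactly supported test functions.

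For the second term, $\rho^{-1}J\cdot\nabla$ leaves $\rho$ invariant, because its adjoint action on $\rho$ is $-\nabla\cdot(\rho\,\rho^{-1}J)=-\nabla\cdot J=0$. It is also antisymmetric in $L^2(\rho)$. Together these are what "stationary-current preserving" should mean.

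\emph{Main obstacle.} The only delicate point is making the informal phrases "reversible" and "current preserving" precise. I will fix them as $L^2(\rho)$-symmetry and antisymmetry of the corresponding generator parts on $C_c^\infty$. The only step needing care is the integration by parts, which requires smoothness and positivity of $\rho$ and compactly supported test functions.
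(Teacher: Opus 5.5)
Your proposal is correct and follows the paper's proof essentially step by step. You write the stationary Fokker--Planck equation in divergence form to get $\nabla\cdot J=0$, rearrange the definition of $J$ to get the decomposition, use integration by parts for the $L^2(\rho)$-symmetry of the reversible part and $\nabla\cdot J=0$ for skewness of $\rho^{-1}J\cdot\nabla$, and substitute $\rho=e^{-\Phi}/Z$ to obtain $b_{\mathrm{rev}}$. Your explicit identification $\LL_{\mathrm{rev}}f=\rho^{-1}\nabla\cdot(\rho D\nabla f)$, which uses the symmetry of $D$, just fills in the integration-by-parts step that the paper leaves implicit.
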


\begin{proof}
The stationary Fokker--Planck equation is exactly $\nabla\cdot J=0$.  Rearranging the definition of $J$ gives \eqref{helmholtz-finite-eq}.  The generator associated with $\rho^{-1}\nabla\cdot(D\rho)$ is symmetric in $L^2(\rho\dd x)$ by integration by parts.  The vector field $\rho^{-1}J$ is skew at the density level because $\nabla\cdot J=0$.  Substituting $\rho=e^{-\Phi}/Z$ gives $\rho^{-1}\nabla\cdot(D\rho)=\nabla\cdot D-D\nabla\Phi$.
\end{proof}

The following proposition gives the boundary-compatible Ao sparsity criterion.

\begin{prop}
Suppose a stationary diffusion admits an Ao-type representation
\[
        b=-(\Gamma-Q)\nabla\Phi+\nabla\cdot(\Gamma-Q),
          \Gamma=\Gamma^\top\geqslant0,
          Q=-Q^\top,
\]
with invariant density $\rho\propto e^{-\Phi}$.  A sufficient condition for the invariant law to satisfy $Y\ci X\mid B$ and for the generator to contain no direct exterior-interior second-order coupling is:
\begin{enumerate}
\item $\Phi(y,b,x)=\Phi_Y(y,b)+\Phi_X(x,b)+\Phi_B(b)$;
\item the $(Y,X)$-blocks of $\Gamma$ vanish;
\item the $(Y,X)$-blocks of the solenoidal coupling $Q$ vanish or are mediated only through $B$ in the sense that the conditional generator given $B$ splits.
\end{enumerate}
\end{prop}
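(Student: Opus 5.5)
The proof splits into two parts: first the statement about the invariant law, then the statement about the generator.

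For the invariant law, condition (1) says $\Phi$ splits additively as in \eqref{potential-split-eq}. Since $\rho\propto e^{-\Phi}$ is assumed to be the invariant density, the equilibrium potential-splitting criterion applies verbatim and gives $Y\ci X\mid B$. It is worth noting explicitly why this works even with $Q\neq0$. The splitting criterion uses only positivity and the exponential form of $\rho_*$, not the vanishing of the current. So its proof carries over unchanged to the nonequilibrium Ao setting, where the current is $J=\rho\,Q\nabla\Phi-\rho\nabla\cdot Q$, up to sign conventions. I will state this remark rather than reprove it. Equivalently, I can cite the first clause of the NESS boundary criterion, with \eqref{ness-density-factor-eq} read off from the product $e^{-\Phi_Y}e^{-\Phi_X}e^{-\Phi_B}$.

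For the generator, I write it in the form
\[
\LL f=\langle b,\nabla f\rangle+\operatorname{tr}(D\nabla^2 f).
\]
Here the diffusion tensor is the symmetric part of $\Gamma-Q$. Since $Q$ is antisymmetric, its contribution to $\operatorname{tr}\bigl((\Gamma-Q)\nabla^2 f\bigr)$ vanishes against the symmetric Hessian. So $D=\Gamma$, possibly up to a convention factor. The mixed second-order coefficient $A_{YX}$ in the local form of the generator is therefore the $(Y,X)$-block of $\Gamma$, and condition (2) sets it to zero. This gives the claim that there is no direct exterior-interior second-order coupling.

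Next I check that the drift does not reintroduce coupling. The $Y$-component of $b$ is
\[
-\sum_{j}(\Gamma-Q)_{Yj}\partial_j\Phi+\sum_j\partial_j(\Gamma-Q)_{Yj}.
\]
Under (2) and (3), with the $(Y,X)$-blocks vanishing, the sum runs only over $j\in Y\cup B$. By (1), $\partial_y\Phi$ and $\partial_b\Phi$ contain no $x$-dependence coming from the $\Phi_X$ part, except that $\partial_b\Phi$ includes $\partial_b\Phi_X(x,b)$. That cross term is mediated through $B$. This is exactly where the ``mediated through $B$'' alternative in (3) is needed: I will invoke it as the hypothesis that the conditional generator given $B$ splits as in \eqref{generator-splitting-eq}.

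The main obstacle is making this last step honest. Pure block sparsity does not automatically remove the $x$-dependence of the $Y$-drift through the $(Y,B)$-blocks multiplying $\partial_b\Phi_X$. So I will phrase the proposition's sufficiency for the generator part as following directly from (2) together with the splitting alternative in (3). No stronger path-space claim is asserted, since the statement only concerns the invariant law and the absence of second-order coupling.
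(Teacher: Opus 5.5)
Your proposal is correct and follows essentially the paper's own argument. Condition (1) gives $Y\ci X\mid B$ for $\rho\propto e^{-\Phi}$ through the potential-splitting criterion, which uses only the exponential form of the density and not the vanishing of the current. Condition (2) removes the mixed block of the second-order part of the generator, which is $\Gamma$ because $Q$ is antisymmetric.

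The paper's proof goes further: it asserts that under (1)--(3) no drift coefficient depends on the opposite variable, and then invokes generator splitting to conclude conditional path factorisation. Your observation that the term $-(\Gamma-Q)_{YB}\,\partial_b\Phi_X(x,b)$ in the $Y$-drift survives the block sparsity is correct. It shows that this extra claim, which is not part of the statement, rests on the ``conditional generator splits'' alternative in (3), so you are right not to assert it.
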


\begin{proof}
The first condition gives invariant-density factorisation by the equilibrium potential-splitting criterion, independently of whether the stationary current vanishes.  The second condition removes direct mixed second-order diffusion between $Y$ and $X$.  The third removes direct solenoidal transport coupling between exterior and interior variables after the boundary is fixed.  Under these sparsity conditions, the conditional generator has no mixed exterior-interior second-order term and no drift coefficient depending on the opposite variable except through the fixed boundary.  The generator splitting criterion then gives conditional path factorisation when the corresponding martingale problems are unique.
\end{proof}

\section{Conditional mutual information and likelihood ratios}

The following definition introduces the conditional information gap on path space.

\begin{defn}
For a joint path law $P_T$, define
\[
        \mathcal G_T
        :=\CMI(Y_{[0,T]};X_{[0,T]}\mid B_{[0,T]}).
\]
Equivalently,
\[
        \mathcal G_T
        =\int \Ent\bigl(P^b_{Y,X}\mid P^b_Y\otimes P^b_X\bigr)P_B(\dd b).
\]
\end{defn}

The following theorem establishes the product projection property of conditional mutual information.

\begin{theorem}
The conditional information gap satisfies
\[
        \mathcal G_T\geqslant0,
\]
with equality if and only if $B_{[0,T]}$ is a path space Markov boundary between $Y_{[0,T]}$ and $X_{[0,T]}$.  Moreover, for each fixed $b$, the product measure closest to the true conditional law $P^b_{Y,X}$ in the inclusive relative entropy
\[
        \Ent(P^b_{Y,X}\mid Q_Y\otimes Q_X)
\]
over product measures $Q_Y\otimes Q_X$ is the product of the true conditional marginals $P^b_Y\otimes P^b_X$, and the minimum value is
\[
        \Ent(P^b_{Y,X}\mid P^b_Y\otimes P^b_X).
\]
\end{theorem}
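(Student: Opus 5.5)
The proof has two parts: the sign and equality statement, and the projection statement.

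\textbf{Sign and equality.} Here I would invoke the lemma expressing $\CMI$ as a fibrewise relative entropy \eqref{cmi-rn-eq}. Apply it to the standard Borel path spaces $\Path_T(\X_Y)$, $\Path_T(\X_B)$, $\Path_T(\X_X)$. Relative entropy between probability measures is non-negative by Jensen's inequality (Gibbs' inequality), with the usual $+\infty$ convention when absolute continuity fails. Hence the integrand $\Ent(P^b_{Y,X}\mid P^b_Y\otimes P^b_X)$ is non-negative for every $b$, and so $\mathcal G_T\geqslant0$.

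If $\mathcal G_T=0$, the integrand vanishes for $P_B$-almost every $b$. Vanishing relative entropy forces equality of measures, so $P^b_{Y,X}=P^b_Y\otimes P^b_X$ for almost every $b$. The kernel factorisation criterion then gives exactly the defining condition \eqref{pathboundary-eq}. The converse is immediate: under factorisation each integrand is $\Ent(\mu\mid\mu)=0$.

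\textbf{Projection property.} I would fix $b$ and write $P=P^b_{Y,X}$, with marginals $P_Y, P_X$. Let $Q_Y\otimes Q_X$ be an arbitrary product measure. If $P\not\ll Q_Y\otimes Q_X$, the value is $+\infty$ and there is nothing to prove. Otherwise $P_Y\ll Q_Y$ and $P_X\ll Q_X$, by the marginal argument used in the Radon--Nikodym separability lemma. The same argument also gives $P\ll P_Y\otimes P_X$: if $(P_Y\otimes P_X)(E)=0$, then $(Q_Y\otimes Q_X)$ restricted to the support of the densities gives $Q_Y\otimes Q_X(E\cap\{u v>0\})=0$, where $u=\dd P_Y/\dd Q_Y$ and $v=\dd P_X/\dd Q_X$. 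Since $P$ lives on $\{u(y)v(x)>0\}$ up to a null set, we get $P(E)=0$.

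The chain rule for Radon--Nikodym derivatives then gives
\[
        \frac{\dd P}{\dd(Q_Y\otimes Q_X)}(y,x)=\frac{\dd P}{\dd(P_Y\otimes P_X)}(y,x)\,u(y)\,v(x).
\]
Taking logarithms and integrating against $P$ yields the Pythagorean identity
\[
        \Ent(P\mid Q_Y\otimes Q_X)=\Ent(P\mid P_Y\otimes P_X)+\Ent(P_Y\mid Q_Y)+\Ent(P_X\mid Q_X).
\]
The last two terms are non-negative and vanish exactly when $Q_Y=P_Y$ and $Q_X=P_X$. This identifies both the minimiser and the minimum value.

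\textbf{Main obstacle.} The delicate step is justifying the splitting of the integral of the logarithm when some terms may be infinite. That is, $\int\log(uv)\,\dd P$ must equal $\Ent(P_Y\mid Q_Y)+\Ent(P_X\mid Q_X)$ without an $\infty-\infty$ ambiguity. I would handle this as follows. The negative parts of $\log u$ under $P_Y$, and of $\log\frac{\dd P}{\dd(P_Y\otimes P_X)}$ under $P$, are integrable, since $x\log x\geqslant -e^{-1}$ after rewriting each term as the integral of $\phi(h)=h\log h$ against the reference measure. Hence every term is a well-defined element of $[0,\infty]$, and linearity holds in the extended sense.

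Alternatively, I would first prove the identity for $Q_Y\otimes Q_X$ with bounded log-densities. I would then pass to the limit by monotone convergence, or use the Donsker--Varadhan variational formula to get the lower bound $\Ent(P\mid Q_Y\otimes Q_X)\geqslant\Ent(P\mid P_Y\otimes P_X)$ directly.
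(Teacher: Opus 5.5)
Your proposal is correct and follows the paper's proof. Non-negativity and the equality case come from the fibrewise relative-entropy representation of $\CMI$ together with the kernel factorisation criterion, and the projection property comes from the same three-term decomposition $\Ent(P\mid Q_Y\otimes Q_X)=\Ent(P\mid P_Y\otimes P_X)+\Ent(P_Y\mid Q_Y)+\Ent(P_X\mid Q_X)$. Your extra checks fill in details the paper leaves implicit: that $P\ll Q_Y\otimes Q_X$ forces $P\ll P_Y\otimes P_X$, and that each logarithmic term has $P$-integrable negative part, so no $\infty-\infty$ ambiguity arises.
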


\begin{proof}
Non-negativity and the equality condition follow from the conditional relative entropy identity.  For the projection statement, fix $b$ and let $P=P^b_{Y,X}$.  For any product $Q_Y\otimes Q_X$, the chain rule gives
\[
\begin{aligned}
        \Ent(P\mid Q_Y\otimes Q_X)
        &=\int \log\frac{\dd P}{\dd(P_Y\otimes P_X)}\,\dd P
          +\int \log\frac{\dd(P_Y\otimes P_X)}{\dd(Q_Y\otimes Q_X)}\,\dd P\\
        &=\Ent(P\mid P_Y\otimes P_X)+\Ent(P_Y\mid Q_Y)+\Ent(P_X\mid Q_X).
\end{aligned}
\]
The last two terms are non-negative and vanish exactly at $Q_Y=P_Y$, $Q_X=P_X$.  Hence the closest product approximation in this direction is the product of marginals, and the minimal gap is the mutual information.
\end{proof}

The following theorem records the Gibbs identity associated with a path likelihood.

\begin{theorem}
Fix a boundary path $b$ and write the exact conditional density as
\[
        p(y,x\mid b)=\frac{p(y,x,b)}{p(b)}.
\]
For any density $q(y,x)$, define
\[
        \mathcal F_b(q)
        :=\bE_q[-\log p(Y,X,b)]-\mathrm H(q),
\]
where $\mathrm H(q)=-\bE_q\log q$ is Shannon entropy.  Then
\begin{equation}\label{vfe-identity-eq}
        \mathcal F_b(q)+\log p(b)=\Ent(q\mid p(\cdot,\cdot\mid b)).
\end{equation}
Consequently minimising $\mathcal F_b$ over all $q$ recovers the exact conditional law, while minimising over product densities $q_Yq_X$ gives the best factorised conditional law.  If the product family contains the conditional marginals, the excess of the best product free energy over the exact free energy is the conditional mutual information at $b$.
\end{theorem}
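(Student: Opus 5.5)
The plan is to prove the identity \eqref{vfe-identity-eq} by direct expansion of the logarithm, and then to read both minimisation statements off that identity. Fix $b$ with $p(b)\in(0,\infty)$, so that $p(\cdot,\cdot\mid b)=p(\cdot,\cdot,b)/p(b)$ is a probability density. For a density $q$ with $\bE_q\lvert\log q\rvert<\infty$ and $\bE_q\lvert\log p(Y,X,b)\rvert<\infty$, I will write
\[
\log\frac{q(y,x)}{p(y,x\mid b)}=\log q(y,x)-\log p(y,x,b)+\log p(b)
\]
and integrate against $q$. This gives $\Ent(q\mid p(\cdot,\cdot\mid b))=-\mathrm H(q)+\bE_q[-\log p(Y,X,b)]+\log p(b)=\mathcal F_b(q)+\log p(b)$. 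Where $q$ charges a set on which $p(\cdot,\cdot\mid b)=0$, both sides are $+\infty$, which matches the convention of the conditional mutual information lemma. The only care needed is integrability, so that no $\infty-\infty$ appears; I will restrict $\mathcal F_b$ to densities for which both expectations are finite and treat the remaining cases by this convention.

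Since $\log p(b)$ does not depend on $q$, minimising $\mathcal F_b$ over any family is the same as minimising $q\mapsto\Ent(q\mid p(\cdot,\cdot\mid b))$ over that family. Over all densities, relative entropy is non-negative and vanishes only at $q=p(\cdot,\cdot\mid b)$, exactly as used in the proof of the conditional mutual information lemma. Hence the unique minimiser is the exact conditional law, and $\min_q\mathcal F_b=-\log p(b)$. Over product densities, the minimiser is by definition the product law closest to $p(\cdot,\cdot\mid b)$ in the exclusive direction $\Ent(q_Yq_X\mid p(\cdot,\cdot\mid b))$. The excess free energy is therefore
\[
\min_{q_Yq_X}\mathcal F_b-\min_q\mathcal F_b=\min_{q_Yq_X}\Ent\bigl(q_Yq_X\mid p(\cdot,\cdot\mid b)\bigr).
\]

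The final sentence is the main obstacle, and I will treat it explicitly rather than substitute the inclusive projection theorem. Taking $q_Y=p(\cdot\mid b)$ and $q_X=p(\cdot\mid b)$, the conditional marginals, which the product family contains by hypothesis, bounds the excess above by $\Ent(p_Yp_X\mid p(\cdot,\cdot\mid b))$. That quantity is the reversed relative entropy from the previous theorem, not $\Ent(p\mid p_Yp_X)=\CMI$ at $b$. Moreover, the exclusive minimiser is generally not the pair of marginals: its first-order conditions are the mean-field equations $q_Y\propto\exp\bE_{q_X}\log p(\cdot,X\mid b)$, and similarly for $q_X$.

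So I would prove the following exact statements. The excess is always non-negative. It is zero if and only if $p(\cdot,\cdot\mid b)$ is a product, and in that case it coincides with the conditional mutual information at $b$, both vanishing; this uses the conditional mutual information lemma and the Radon--Nikodym separability lemma. In general, the excess satisfies the upper bound displayed above. To obtain literal equality with $\Ent(p\mid p_Yp_X)$, one would have to evaluate the free energy in the inclusive direction, i.e.\ replace $\mathcal F_b$ by the cross term $\bE_p[\log p(Y,X\mid b)-\log q_Y(Y)q_X(X)]$. The previous theorem's chain-rule decomposition then gives the conditional mutual information at the marginals.

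I would flag that the final sentence holds as an equality only in this zero-gap case or under the inclusive reading. I would attempt to verify whether a Gaussian $2\times2$ example separates the two gaps, and I expect it does, so the statement's last claim should be qualified accordingly.
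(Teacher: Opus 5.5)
Your proof of the identity and of the unconstrained minimisation follows the paper's own argument. You expand $\log\bigl(q/p(\cdot,\cdot\mid b)\bigr)$, integrate against $q$, and use that relative entropy vanishes only at equality.

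On the final sentence you part company with the paper, and you are right to. The paper's proof ends by ``applying the previous theorem'', but that theorem minimises the inclusive divergence $\Ent(P\mid Q_Y\otimes Q_X)$. The identity instead shows $\mathcal F_b(q)+\log p(b)=\Ent(q\mid p(\cdot,\cdot\mid b))$, which is the exclusive direction, so that step does not apply. Your replacement statements are all correct:
\begin{itemize}
\item the excess equals $\inf_{q_Yq_X}\Ent(q_Yq_X\mid p(\cdot,\cdot\mid b))\geqslant 0$;
\item it vanishes exactly when the conditional law is a product;
\item it is bounded above by the reversed divergence at the marginals;
\item its minimisers satisfy the mean-field equations;
\item it becomes the conditional mutual information only under the inclusive reading.
\end{itemize}

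What needs changing is your planned test case: no Gaussian example, $2\times 2$ or otherwise, separates the two gaps. Take $p(\cdot,\cdot\mid b)=N(m,\Sigma)$ with precision $K$.
\begin{itemize}
\item The best product law is $N(m_Y,K_{YY}^{-1})\otimes N(m_X,K_{XX}^{-1})$. Gaussian factors are optimal because $\bE_q[-\log p]$ depends only on the first two moments of $q$.
\item This minimiser is indeed not the product of marginals, but its value is $\tfrac12\log(\det K_{YY}\det K_{XX}/\det K)$.
\item The Schur complement identities $\det K_{YY}=\det\Sigma_{XX}/\det\Sigma$ and $\det K_{XX}=\det\Sigma_{YY}/\det\Sigma$ turn this into $\tfrac12\log(\det\Sigma_{YY}\det\Sigma_{XX}/\det\Sigma)$, which is exactly the mutual information.
\item In the scalar case both equal $-\tfrac12\log(1-\rho^2)$.
\end{itemize}
So for Gaussians the paper's claimed value is right even though its claimed minimiser is wrong. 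The same coincidence is why the log-determinant formula in the later linear Gaussian example is correct.

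To show the last sentence genuinely fails you need a non-Gaussian law. Take the density $\tfrac13\one_A$ with $A=[0,2]^2\setminus(1,2]^2$.
\begin{itemize}
\item Its mutual information is $\log 3-\tfrac43\log 2\approx 0.17$.
\item Any product $q_Yq_X\ll p$ has both factors supported in $[0,2]$ and at least one, say $q_Y$, supported in $[0,1]$.
\item Then $\Ent(q_Yq_X\mid p)=\log 3+\int q_Y\log q_Y+\int q_X\log q_X\geqslant\log 3-\log 2\approx 0.41$, with equality for uniform factors.
\end{itemize}
With an example of this kind in place of the Gaussian check, your qualified version of the theorem stands.
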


\begin{proof}
Compute
\[
\begin{aligned}
        \Ent(q\mid p(\cdot,\cdot\mid b))
        &=\int q(y,x)\log\frac{q(y,x)}{p(y,x\mid b)}\,\dd y\dd x\\
        &=\int q\log q\,\dd y\dd x-\int q\log p(y,x,b)\,\dd y\dd x+\int q\log p(b)\,\dd y\dd x\\
        &=\mathcal F_b(q)+\log p(b).
\end{aligned}
\]
Since relative entropy is non-negative and vanishes exactly at equality of measures, the unconstrained minimiser is the exact conditional law.  Restricting to product densities gives the product projection.  Applying the previous theorem to the exact conditional law shows that the product of conditional marginals is the inclusive relative entropy product projection and that its gap is the conditional mutual information.
\end{proof}

The following corollary gives the likelihood-ratio interpretation.

\begin{corollary}
Let
\[
        R_b(y,x)=\frac{\dd P^b_{Y,X}}{\dd(P^b_Y\otimes P^b_X)}(y,x)
\]
where this derivative exists.  Then
\[
        \mathcal G_T=\bE[\log R_{B_{[0,T]}}(Y_{[0,T]},X_{[0,T]})].
\]
Thus making the likelihood ratio close to one in relative entropy is exactly minimising the path space factorisation gap.  In particular, Pinsker's inequality gives
\[
        \bE_{B}\|P^b_{Y,X}-P^b_Y\otimes P^b_X\|_{\mathrm{TV}}^2
        \leqslant \frac12\mathcal G_T.
\]
\end{corollary}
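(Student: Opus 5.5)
The plan is to reduce both assertions to the fibrewise definition of $\mathcal G_T$ and the disintegration formula for conditional mutual information (the lemma giving \eqref{cmi-rn-eq}).

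\emph{Likelihood-ratio identity.} Fix a disintegration $b\mapsto P^b_{Y,X}$ and its marginals $P^b_Y, P^b_X$. On the set of $b$ where $P^b_{Y,X}\ll P^b_Y\otimes P^b_X$, the relative entropy is by definition
\[
\Ent(P^b_{Y,X}\mid P^b_Y\otimes P^b_X)=\int \log R_b\,\dd P^b_{Y,X}.
\]
If absolute continuity fails on a set of $b$ of positive $P_B$-measure, both sides are $+\infty$ under the stated convention. The one genuine technicality is joint measurability of $(b,y,x)\mapsto R_b(y,x)$. I would handle it with a measurable version of the Radon--Nikodym theorem for kernels on standard Borel spaces. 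Concretely, approximate $R_b$ by ratios of kernel masses over a countable generating sequence of finite partitions, and pass to the limit by the martingale convergence theorem. This step is the main obstacle; everything else is bookkeeping.

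Once a jointly measurable $R$ is available, integrate over $P_B(\dd b)$. Since $P_T(\dd y,\dd b,\dd x)=P^b_{Y,X}(\dd y,\dd x)P_B(\dd b)$, the disintegration identity for integrals gives
\[
\mathcal G_T=\bE[\log R_{B_{[0,T]}}(Y_{[0,T]},X_{[0,T]})].
\]
This uses Tonelli, which is legitimate because the negative part of $\log R_b$ is integrable: $\int(\log R_b)^-\,\dd P^b\le e^{-1}$, since $\int(\log R_b)^- R_b\,\dd(P^b_Y\otimes P^b_X)\le e^{-1}$. The sentence about making the likelihood ratio close to one is then a restatement combined with the equality case of the product-projection theorem.

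\emph{Pinsker bound.} Apply Pinsker's inequality fibrewise. With the normalisation $\|\mu-\nu\|_{\mathrm{TV}}=\sup_A|\mu(A)-\nu(A)|$, it reads
\[
\|P^b_{Y,X}-P^b_Y\otimes P^b_X\|_{\mathrm{TV}}^2\leqslant\tfrac12\Ent(P^b_{Y,X}\mid P^b_Y\otimes P^b_X).
\]
This is trivially true when the right-hand side is infinite. Measurability of $b\mapsto\|\cdot\|_{\mathrm{TV}}$ follows by taking the supremum over a countable algebra generating the product $\sigma$-field, which suffices by monotone class approximation. Integrating against $P_B$ and using the definition of $\mathcal G_T$ gives the stated inequality.
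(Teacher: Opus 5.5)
Your proposal is correct and follows the paper's route. The identity is the fibrewise definition of $\mathcal G_T$ integrated against $P_B$, and the bound is Pinsker's inequality applied for each $b$ and then integrated over $P_B$; you rightly make explicit that this uses the $\sup_A|\mu(A)-\nu(A)|$ normalisation of total variation.

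The joint measurability you flag as the main obstacle actually comes for free. Take $Q_T=P_{Y\mid B}\otimes P_{X\mid B}\otimes P_B$. Then $P_T\ll Q_T$ holds exactly when $P^b_{Y,X}\ll P^b_Y\otimes P^b_X$ for $P_B$-almost every $b$, and in that case $(y,b,x)\mapsto\frac{\dd P_T}{\dd Q_T}(y,b,x)$ restricts, for almost every $b$, to a version of $R_b$. This is why the paper can treat the identity simply as the definition of conditional relative entropy.
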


\begin{proof}
The identity is the definition of conditional relative entropy.  Pinsker's inequality gives
\[
        \|P^b_{Y,X}-P^b_Y\otimes P^b_X\|_{\mathrm{TV}}^2
        \leqslant \frac12\Ent(P^b_{Y,X}\mid P^b_Y\otimes P^b_X)
\]
for each $b$.  Integrating over $P_B$ proves the bound.
\end{proof}

\section{Mixing, distinguishability, and individuation}

The following definition introduces distinguishability of coupled and separated path ensembles.

\begin{defn}
The distinguishability between the true conditional path ensemble and the separated ensemble is
\[
        \mathcal D_T
        :=\Ent\bigl(P_{Y,B,X}\mid P_{Y\mid B}\otimes P_{X\mid B}\otimes P_B\bigr)
        =\mathcal G_T.
\]
\end{defn}

The following proposition gives data processing for observed paths.

\begin{prop}
Let $O=\Psi(Y_{[0,T]},X_{[0,T]},B_{[0,T]})$ be any measurable observation.  Then the distinguishability between the observed true law and the observed separated law is at most $\mathcal G_T$.  In particular, any coarse-grained observer can distinguish the coupled model from the separated model with relative-entropy power no larger than the path space factorisation gap.
\end{prop}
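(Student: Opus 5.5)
This is the data processing inequality for relative entropy, applied to the two path laws whose relative entropy defines $\mathcal D_T$.

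Write $P:=P_{Y,B,X}$ for the true joint path law on $\Path_T(\X_Y)\times\Path_T(\X_B)\times\Path_T(\X_X)$. Write $\tilde P:=P_{Y\mid B}\otimes P_{X\mid B}\otimes P_B$ for the separated law, that is
\[
\tilde P(\dd y,\dd b,\dd x)=P^b_Y(\dd y)\,P^b_X(\dd x)\,P_B(\dd b).
\]
This is a probability measure on the same Polish product because $b\mapsto P^b_Y\otimes P^b_X$ is a Markov kernel. Let $\Psi$ take values in a standard Borel space $\mathsf O$. The observed true and separated laws are the pushforwards $\Psi_*P$ and $\Psi_*\tilde P$, and the claim is
\[
\Ent(\Psi_*P\mid\Psi_*\tilde P)\leqslant\Ent(P\mid\tilde P)=\mathcal D_T=\mathcal G_T.
\]

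First I check that the equality $\mathcal D_T=\mathcal G_T$ stated in the definition is the conditional mutual information identity. This follows from the chain rule for relative entropy under disintegration over $b$, exactly as in the lemma expressing $\CMI(X;Y\mid Z)$ as an integral of fibrewise relative entropies, since $P$ and $\tilde P$ share the $B$-marginal $P_B$.

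Next I dispose of the infinite case: if $\mathcal G_T=+\infty$ there is nothing to prove. Otherwise $P\ll\tilde P$ with density $R=\dd P/\dd\tilde P$. Concretely, $R(y,b,x)=R_b(y,x)$ as in the likelihood-ratio corollary, after choosing a jointly measurable version. Then $\Psi_*P\ll\Psi_*\tilde P$, because $\Psi_*\tilde P(A)=0$ gives $\tilde P(\Psi^{-1}A)=0$ and hence $P(\Psi^{-1}A)=0$. The density of the pushforward is
\[
r(o)=\bE_{\tilde P}[R\mid\Psi=o],
\]
obtained via a regular conditional probability of $\tilde P$ given $\Psi$, which exists by the disintegration theorem on standard Borel spaces.

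Then I apply Jensen's inequality to the convex function $\phi(t)=t\log t$ fibrewise under $\tilde P(\cdot\mid\Psi=o)$:
\[
\phi(r(o))\leqslant\bE_{\tilde P}[\phi(R)\mid\Psi=o].
\]
Integrating against $\Psi_*\tilde P$ gives
\[
\int\phi(r)\,\dd\Psi_*\tilde P\leqslant\int\phi(R)\,\dd\tilde P,
\]
which is the desired inequality. Alternatively I can cite the Donsker--Varadhan variational formula,
\[
\Ent(\mu\mid\nu)=\sup_g\Bigl\{\int g\,\dd\mu-\log\int e^g\,\dd\nu\Bigr\},
\]
where restricting the supremum to test functions of the form $g\circ\Psi$ immediately yields the bound.

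The main obstacle is integrability in the Jensen step, since $\phi$ is unbounded below only mildly and $\phi\geqslant-e^{-1}$. I would handle it by truncation, $\phi_n=\phi\wedge n$, and monotone convergence, or simply by taking the Donsker--Varadhan route, which avoids these issues entirely. A final sentence reads "relative-entropy power" as this relative entropy, and notes, via Pinsker as in the likelihood-ratio corollary, that the total variation distinguishability of any coarse-grained observer is at most $\sqrt{\mathcal G_T/2}$.
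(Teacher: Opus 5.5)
Your proposal is correct and follows the paper's proof: identify $\mathcal G_T=\Ent(P\mid\tilde P)$, where $\tilde P$ is the separated law, then use monotonicity of relative entropy under the measurable pushforward $\Psi$, which you additionally justify via conditional Jensen or the Donsker--Varadhan formula. One small slip is that the truncation $\phi\wedge n$ is not convex, so it would break Jensen; no truncation is needed, because $\phi\geqslant -e^{-1}$ and $R\in L^1(\tilde P)$ already make conditional Jensen valid.
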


\begin{proof}
Let $P$ be the true joint path law and let
\[
        Q=P_{Y\mid B}\otimes P_{X\mid B}\otimes P_B
\]
be the separated law.  Then $\mathcal G_T=\Ent(P\mid Q)$.  The observation map $\Psi$ pushes these measures forward to $\Psi_*P$ and $\Psi_*Q$.  Relative entropy decreases under measurable pushforward, so
\[
        \Ent(\Psi_*P\mid \Psi_*Q)\leqslant \Ent(P\mid Q)=\mathcal G_T.
\]
\end{proof}

The following theorem compares mixing with boundary distinguishability.

\begin{theorem}
Assume the stationary path law of $(Y,B,X)$ is ergodic.  Let $P_T$ be the length-$T$ stationary path law and $Q_T=P_{Y\mid B,T}\otimes P_{X\mid B,T}\otimes P_{B,T}$ the separated path law.  Then
\[
        \frac1T\Ent(P_T\mid Q_T)
\]
is the asymptotic per-time distinguishability rate between the true coupled stationary process and the process in which exterior and interior histories are conditionally separated over the boundary.  If this rate vanishes along a sequence $T_n\to\infty$, no test based on those windows can distinguish the coupled and separated ensembles with exponential error rate bounded away from zero.
\end{theorem}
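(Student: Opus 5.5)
The statement has two parts. The first is essentially interpretive: it says the normalised relative entropy is a per-time distinguishability rate. The second is quantitative: vanishing of that rate along $T_n\to\infty$ rules out tests with an exponential error exponent bounded away from zero.

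I would start by fixing what ``asymptotic rate'' means. Stationarity gives, via the chain rule for relative entropy under disintegration (used in the CMI lemma), a decomposition of $\Ent(P_T\mid Q_T)$ over successive windows. I would try to show that $T\mapsto\Ent(P_T\mid Q_T)$ is superadditive, or at least that $\frac1T\Ent(P_T\mid Q_T)$ has a limit or limsup.

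Here I must be careful: $Q_T$ is itself built from conditionals on the whole boundary window, so it is not the restriction of a single stationary process. Additivity may therefore fail. In that case I would simply \emph{define} the rate as $\liminf_T\frac1T\Ent(P_T\mid Q_T)$ and treat the first sentence of the theorem as this definition. Ergodicity is what makes it a deterministic rate rather than a random one: by the Shannon--McMillan--Breiman/Barron-type argument, $\frac1T\log R_{B}(Y,X)$ converges $P$-a.s. to the same constant, where $R$ is the likelihood ratio of the likelihood-ratio corollary. That step is where I would invoke ergodicity, and I regard it as heuristic unless $Q$ is a genuine stationary process.

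The main content is the testing statement. Fix windows $T_n$ and any tests $A_n\subset$ path space, that is, events. I would apply the data processing proposition to the two-point observation $\Psi=\one_{A_n}$, which gives
\[
d\bigl(P_{T_n}(A_n)\,\|\,Q_{T_n}(A_n)\bigr)\le \Ent(P_{T_n}\mid Q_{T_n}),
\]
where $d$ is the binary relative entropy. I would then use the standard Stein-type lower bound: if $P_{T_n}(A_n)\ge 1-\varepsilon$, this gives
\[
-\log Q_{T_n}(A_n)\le \frac{\Ent(P_{T_n}\mid Q_{T_n})+\log 2}{1-\varepsilon}.
\]
Dividing by $T_n$, the type II exponent $-\frac1{T_n}\log Q_{T_n}(A_n)$ tends to zero when the rate vanishes. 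So no sequence of tests with type I error bounded away from one achieves a positive type II exponent. Symmetrically, Pinsker (the likelihood-ratio corollary) shows that the total variation, hence the best total error, cannot decay exponentially.

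The main obstacle is the existence of the limit and the ergodic identification. I would sidestep it by phrasing everything along the given subsequence $T_n$, where only the data-processing inequality is needed, and leave the limit interpretation as a definition.
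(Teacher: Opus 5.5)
Your proposal is correct, and for the testing claim it takes a more elementary route than the paper. Like the paper, you treat the first sentence as a definition. For the second sentence, the paper invokes the full Chernoff--Stein lemma, ``provided the stationary ergodic relative-entropy rate exists'', to say that the optimal type-II exponent is governed by $T^{-1}\Ent(P_T\mid Q_T)$.

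You use only the converse half of Stein's lemma. You obtain it at each fixed window by applying the data-processing proposition to the binary observation $\one_{A_n}$ and using $d(p\,\|\,q)\geqslant p\log(1/q)-\log 2$. This needs no stationarity, no ergodicity and no existence of a limit. It is therefore immune to the issue you rightly flag: $Q_T$ is not the restriction of a single stationary process. That is exactly why the paper's proviso, and the applicability of Chernoff--Stein (usually proved for i.i.d. or suitably stationary ergodic pairs with an AEP for the log-likelihood ratio), are nontrivial to verify here. Since the statement only asserts that no test achieves a positive exponent, the converse is all that is needed, and your argument proves it along the given subsequence. Your invocation of ergodicity is not needed for the conclusion.

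What the paper's route would add, if its hypotheses were established, is achievability: that the exponent $T^{-1}\Ent(P_T\mid Q_T)$ is actually attained. That is what would make it \emph{the} distinguishability rate rather than an upper bound on it. Neither argument proves this.

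One correction. Pinsker does not show that the best total error fails to decay exponentially. It bounds the total variation only by $\sqrt{\mathcal G_{T_n}/2}$, which is vacuous when $\mathcal G_{T_n}\to\infty$ sublinearly, and $\mathcal G_{T_n}/T_n\to0$ permits exactly that. The paper's closing Pinsker sentence has the same defect. The claim is nonetheless true and follows from the inequality you already derived. If the type-I error is at most $1/2$, the type-II error $\beta_n$ satisfies $-T_n^{-1}\log\beta_n\leqslant 2(\mathcal G_{T_n}+\log 2)/T_n\to0$, so the total error cannot decay exponentially. Alternatively, use the Bretagnolle--Huber bound $P(A^c)+Q(A)\geqslant\tfrac12 e^{-\Ent(P\mid Q)}$.
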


\begin{proof}
The first statement is a definition of relative-entropy rate for stationary processes.  For the testing statement, the Chernoff--Stein lemma says that, in simple hypothesis testing between $P_T$ and $Q_T$, the optimal type-II error exponent at fixed type-I error is governed by $T^{-1}\Ent(P_T\mid Q_T)$, provided the stationary ergodic relative-entropy rate exists.  If the rate tends to zero along a subsequence, the exponential distinguishability along those windows vanishes.  Even without invoking the full Chernoff--Stein theorem, Pinsker's inequality implies that small relative entropy gives small total variation on average, hence no uniformly powerful bounded test can separate the ensembles on those windows.
\end{proof}

\begin{remark}
This is the formal content of boundary individuation.  The exterior and interior may still be strongly coupled through the boundary.  What becomes small is the residual distinguishability between the true path ensemble and the ensemble in which all remaining exterior-interior dependence has been removed after the boundary history is fixed.
\end{remark}

\subsection{Time-reversal distinguishability and housekeeping heat}

The preceding relative entropy compares the true coupled path ensemble with the conditionally separated ensemble.  Stochastic thermodynamics introduces a second, logically independent, comparison: the comparison of a stationary path ensemble with its time reversal.  These two comparisons should not be conflated.  The boundary gap measures residual exterior-interior mixing after the boundary has been observed.  Housekeeping heat measures the cost of maintaining a time-oriented nonequilibrium current, even when the conditional boundary factorisation is exact.

Let $\Theta_T$ denote the time-reversal map on path space,
\[
        (\Theta_T x)_s=x_{T-s},  0\leqslant s\leqslant T,
\]
possibly composed with a parity involution when momenta or odd variables are present.  Given a stationary law $P_T$ on paths, write
\[
        P_T^\dagger := (\Theta_T)_*P_T.
\]

The following definition introduces two distinguishability functionals.

\begin{defn}
Let
\[
        Q_T=P_{Y\mid B,T}\otimes P_{X\mid B,T}\otimes P_{B,T}
\]
be the conditionally separated path law.  The \emph{boundary distinguishability} is
\[
        \mathcal G_T=\Ent(P_T\mid Q_T).
\]
The \emph{time-reversal distinguishability} is
\[
        \mathcal E_T=\Ent(P_T\mid P_T^\dagger).
\]
When the limit exists, the entropy production rate is
\[
        \ep=\lim_{T\to\infty}\frac1T\mathcal E_T.
\]
In thermodynamic units the corresponding entropy production rate is $k_B\ep$.  If the environment is an isothermal bath at temperature $T_{\mathrm{bath}}$, the housekeeping heat rate at a stationary nonequilibrium state is
\[
        \dot Q_{\mathrm{hk}}=k_B T_{\mathrm{bath}}\,\ep,
\]
with the convention that $\ep$ is measured in nats per unit time.
\end{defn}

The following proposition distinguishes two orthogonal notions of indistinguishability.

\begin{prop}
The identities
\[
        \mathcal G_T=\Ent(P_T\mid Q_T), 
        \mathcal E_T=\Ent(P_T\mid P_T^\dagger)
\]
compare $P_T$ with two different reference laws.  The first vanishes exactly when
\[
        Y_{[0,T]}\ci X_{[0,T]}\mid B_{[0,T]},
\]
whereas the second vanishes exactly when the stationary path law is invariant under time reversal.  Consequently an exact path space boundary does not imply zero entropy production, and zero entropy production does not imply a boundary factorisation.
\end{prop}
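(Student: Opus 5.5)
The proof splits into two vanishing characterisations and two counterexamples, and I will handle them in that order.

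\emph{First characterisation.} The claim about $\mathcal G_T$ is immediate from earlier results. By definition, $\mathcal G_T=\Ent(P_T\mid Q_T)$ with $Q_T=P_{Y\mid B,T}\otimes P_{X\mid B,T}\otimes P_{B,T}$. The chain rule gives $\mathcal G_T=\CMI(Y_{[0,T]};X_{[0,T]}\mid B_{[0,T]})$. The conditional mutual information lemma and the product projection theorem then say this vanishes if and only if $P^b=P^b_Y\otimes P^b_X$ for $P_B$-a.e.\ $b$. By the kernel factorisation criterion, that is the path space boundary property.

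\emph{Second characterisation.} Relative entropy between probability measures on a standard Borel space vanishes exactly when the two measures coincide. Hence $\mathcal E_T=0$ if and only if $P_T=(\Theta_T)_*P_T$, which is time-reversal invariance. One could add that $\Theta_T$ is a measurable involution of the Polish path space, so $P_T^\dagger$ is a well-defined probability measure, but nothing more is needed.

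\emph{Independence of the two notions.} Here the plan is to exhibit two explicit examples, one for each direction.

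\emph{Boundary without reversibility.} Take $Y,B,X$ to be mutually independent. Let $B$ be a stationary nonreversible diffusion on $\R^2$, for instance the Ornstein--Uhlenbeck process $\dd B_t=-(I-Q)B_t\,\dd t+\sqrt2\,\dd W_t$ with $Q$ skew and nonzero. Let $Y,X$ be independent stationary Ornstein--Uhlenbeck processes. Then $P^b=P_Y\otimes P_X$ for every $b$, so $\mathcal G_T=0$. Because $P_T$ is the product of the three marginal laws and $\Theta_T$ acts coordinatewise, $\mathcal E_T=\Ent(P_B\mid P_B^\dagger)>0$. This is positive because the stationary cross-covariance $\bE[B^1_0B^2_t]$ is not symmetric in $t\mapsto -t$ when $Q\neq0$. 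Equivalently, one can use the nonzero stationary current $J_*$ from the NESS discussion.

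\emph{Reversibility without boundary.} Take an equilibrium gradient Langevin system $\dd Z_t=-\nabla\Phi(Z_t)\,\dd t+\sqrt2\,\dd W_t$ started at stationarity. Choose the quadratic potential $\Phi(y,b,x)=\tfrac12(y^2+b^2+x^2)+\varepsilon yx$ with $0<|\varepsilon|<1$. This diffusion is reversible, so $P_T^\dagger=P_T$ and $\mathcal E_T=0$. By the equilibrium potential-splitting criterion, the cross-Hessian $\partial_y\partial_x\Phi=\varepsilon\neq0$ shows that $Y_t\ci X_t\mid B_t$ fails at each time. This failure lifts to path space: path space factorisation implies time-slice factorisation whenever $B$ is independent of $(Y,X)$. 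That holds here, because $b$ decouples, so $B_{[0,T]}$ is independent of $(Y,X)_{[0,T]}$. The conditional law given $b$ is then the joint law of $(Y,X)$, which is not a product. Thus $\mathcal G_T>0$.

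\emph{Main obstacle.} The only delicate points are verifying the two positivity claims rigorously. For $\mathcal E_T>0$, I would argue that a Gaussian process whose law is invariant under $\Theta_T$ must have a time-symmetric covariance, and the nonreversible Ornstein--Uhlenbeck process violates this. For $\mathcal G_T>0$, I would reduce to the one-time Gaussian criterion from the Gaussian conditional independence theorem: $K_{YX}=\varepsilon\neq0$ in the stationary precision, so $\Sigma_{YX\mid B}\neq0$, which gives $\CMI>0$.
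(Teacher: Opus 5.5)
Your proposal is correct and follows the paper's own proof. Both vanishing statements come from the fact that relative entropy is zero exactly when the two laws coincide, with $P_T=Q_T$ read as the kernel factorisation. The two non-implications use the same kinds of witness as the paper: a product of independent diffusions carrying a stationary current, and a reversible equilibrium Gaussian system with an exterior--interior cross-coupling.

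Your examples are more explicit than the paper's sketches. In the second one, your observation that $B_{[0,T]}$ decouples from $(Y,X)_{[0,T]}$ is what actually lifts the fixed-time failure to a path-space failure. The paper's ``equilibrium Gaussian vector'' glosses over this point.
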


\begin{proof}
The first statement is the defining property of conditional mutual information: $\Ent(P_T\mid Q_T)=0$ if and only if $P_T=Q_T$, which is precisely conditional factorisation over $B_{[0,T]}$.  The second statement follows from non-negativity of relative entropy: $\Ent(P_T\mid P_T^\dagger)=0$ if and only if $P_T=P_T^\dagger$.  These equalities involve different comparison measures.  There is no implication between them in general.  For example, a product of two independent driven diffusions can have an exact conditional product structure and positive stationary current.  Conversely, an equilibrium Gaussian vector can be time reversible while failing to factorise conditionally over a proposed boundary.
\end{proof}

The following remark explains the individuation interpretation.

\begin{remark}
The pair $(T^{-1}\mathcal G_T,T^{-1}\mathcal E_T)$ separates two senses in which a system can be individuated.  Small $T^{-1}\mathcal G_T$ says that, after the boundary history is observed, exterior and interior histories are hard to distinguish from conditionally separated histories.  Positive $T^{-1}\mathcal E_T$ says that the individuated system is maintained by a time-oriented current.  Thus an equilibrium boundary has both quantities small.  A nonequilibrium steady-state boundary has small boundary distinguishability but positive housekeeping heat.  A poorly individuated mixing regime has large boundary distinguishability, whether or not it is time reversible.
\end{remark}

\subsection{Stochastic entropy production and fluctuation identity}

The following definition introduces stochastic entropy production.

\begin{defn}
Assume $P_T\ll P_T^\dagger$.  The total stochastic entropy production along a realised path is
\[
        \Sigma_T(x):=\log\frac{\dd P_T}{\dd P_T^\dagger}(x).
\]
If $\nu_t$ is the one-time law and $\mu$ is a reference stationary law, the relative system entropy is $-\log(\dd\nu_t/\dd\mu)(X_t)$, and the medium entropy production is obtained by subtracting the boundary change of this system entropy from $\Sigma_T$.  At stationarity the expected boundary term vanishes, so the mean total entropy production rate equals the mean medium entropy production rate.
\end{defn}

The following theorem establishes the integral fluctuation theorem and the second law.

\begin{theorem}
If $P_T\ll P_T^\dagger$, then
\[
        \bE_{P_T}\exp(-\Sigma_T)=1,
\]
and hence
\[
        \bE_{P_T}\Sigma_T=\Ent(P_T\mid P_T^\dagger)\geqslant0.
\]
If $(P_T)_{T>0}$ is stationary and the entropy production rate exists, then
\[
        \lim_{T\to\infty}\frac1T\bE\Sigma_T=\ep.
\]
\end{theorem}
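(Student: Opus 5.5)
The plan is to prove the three assertions separately, with almost everything coming from the definition $\Sigma_T=\log\frac{\dd P_T}{\dd P_T^\dagger}$ and the elementary change-of-measure identity. Write $\varphi=\frac{\dd P_T}{\dd P_T^\dagger}$, so that $P_T=\varphi\,P_T^\dagger$ and $\Sigma_T=\log\varphi$, which is finite $P_T$-almost surely.

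For the integral fluctuation theorem I will compute
\[
\bE_{P_T}e^{-\Sigma_T}=\int_{\{\varphi>0\}}\varphi^{-1}\,\dd P_T=\int \one_{\{\varphi>0\}}\,\dd P_T^\dagger=P_T^\dagger(\varphi>0).
\]
This gives at most one, with equality exactly when $P_T^\dagger\ll P_T$. So the step that needs care is to establish mutual absolute continuity from the one-sided hypothesis. I would use that $\Theta_T$ is a measurable involution: possibly with a parity map $\pi$ that commutes with time reversal, one has $\Theta_T\circ\Theta_T=\mathrm{id}$. Pushing $P_T\ll P_T^\dagger$ forward by $\Theta_T$ then gives $P_T^\dagger\ll (\Theta_T)_*P_T^\dagger=P_T$. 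Hence $P_T^\dagger(\varphi>0)=1$ and $\bE_{P_T}e^{-\Sigma_T}=1$. This involution argument is the main point of the proof, and I would state the involutivity assumption on the parity map explicitly.

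For the second law, $\bE_{P_T}\Sigma_T=\int\log\frac{\dd P_T}{\dd P_T^\dagger}\,\dd P_T=\Ent(P_T\mid P_T^\dagger)$ by the definition of relative entropy, with value in $[0,\infty]$. The negative part of $\Sigma_T$ is integrable because $x\log x\ge -e^{-1}$ under $P_T^\dagger$, so this expectation is well defined. Non-negativity follows either from Gibbs' inequality or from Jensen's inequality applied to the fluctuation identity: $e^{-\bE\Sigma_T}\le\bE e^{-\Sigma_T}=1$.

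Finally, the rate statement follows by combining the second-law identity $\bE\Sigma_T=\mathcal E_T$ with the definition $\ep=\lim_{T\to\infty}T^{-1}\mathcal E_T$, assumed to exist. Stationarity is used only to ensure that this limit is the entropy production rate of the definition.
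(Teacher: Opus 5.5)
Your proof is correct and follows the same route as the paper. You use a direct change-of-measure computation for $\bE_{P_T}e^{-\Sigma_T}=1$, the definition of relative entropy together with Jensen or Gibbs for $\bE_{P_T}\Sigma_T=\Ent(P_T\mid P_T^\dagger)\geqslant0$, and the definition of $\ep$ for the rate.

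Your involution step $P_T^\dagger=(\Theta_T)_*P_T\ll(\Theta_T)_*P_T^\dagger=P_T$ is a genuine improvement. The paper's line $\int \frac{\dd P_T^\dagger}{\dd P_T}\,\dd P_T=\int \dd P_T^\dagger$ silently assumes this reverse absolute continuity. Without $\Theta_T\circ\Theta_T=\mathrm{id}$, the one-sided hypothesis $P_T\ll P_T^\dagger$ would only give $P_T^\dagger(\dd P_T/\dd P_T^\dagger>0)\leqslant1$.
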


\begin{proof}
Since $\Sigma_T=\log(\dd P_T/\dd P_T^\dagger)$,
\[
        \bE_{P_T}e^{-\Sigma_T}
        =\int \frac{\dd P_T^\dagger}{\dd P_T}\,\dd P_T
        =\int \dd P_T^\dagger=1.
\]
The identity $\bE\Sigma_T=\Ent(P_T\mid P_T^\dagger)$ is the definition of relative entropy.  Non-negativity follows from Jensen's inequality or from Gibbs' inequality.  Dividing by $T$ and passing to the stationary limit gives the last formula.
\end{proof}

The following proposition gives data processing for thermodynamic observations.

\begin{prop}
Let $O=\Psi(X_{[0,T]})$ be any measurable observation of the full path.  Then
\[
        \Ent(\Psi_*P_T\mid \Psi_*P_T^\dagger)\leqslant \Ent(P_T\mid P_T^\dagger).
\]
Thus entropy production is the maximal relative-entropy distinguishability, among all observations of the path, between the forward movie and the reversed movie.
\end{prop}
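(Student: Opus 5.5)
The inequality is the data processing inequality for relative entropy under a measurable map. The argument is the one used for the earlier data processing proposition for the boundary gap, now applied to the pair $(P_T,P_T^\dagger)$.

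\emph{Trivial case.} If $P_T\not\ll P_T^\dagger$, the right-hand side is $+\infty$ and there is nothing to prove.

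\emph{Main case.} Assume $P_T\ll P_T^\dagger$ and write $R=\dd P_T/\dd P_T^\dagger$. Let $\mathcal O=\sigma(\Psi)$ be the sigma-field generated by the observation.
\begin{enumerate}
\item First I check that $\Psi_*P_T\ll\Psi_*P_T^\dagger$. If $\Psi_*P_T^\dagger(A)=0$, then $P_T^\dagger(\Psi^{-1}A)=0$, hence $P_T(\Psi^{-1}A)=0$.
\item Next I identify the density of the pushforwards. The function $r(o):=\bE_{P_T^\dagger}[R\mid\Psi=o]$ is a version of $\dd\Psi_*P_T/\dd\Psi_*P_T^\dagger$. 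This follows by testing against indicators $\one_A(\Psi)$ and using the defining property of conditional expectation. A regular version exists because the target is standard Borel, by the disintegration theorem.
\item Then I write
\[
\Ent(P_T\mid P_T^\dagger)=\bE_{P_T^\dagger}[\phi(R)],\qquad \Ent(\Psi_*P_T\mid\Psi_*P_T^\dagger)=\bE_{P_T^\dagger}[\phi(r(\Psi))],
\]
with $\phi(u)=u\log u$. Since $\phi$ is convex and bounded below, the conditional Jensen inequality gives $\phi(\bE[R\mid\mathcal O])\leqslant\bE[\phi(R)\mid\mathcal O]$. Taking $P_T^\dagger$-expectations proves the inequality.
\end{enumerate}

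\emph{Alternative route.} The chain rule for relative entropy under disintegration, already used in the lemma identifying conditional mutual information as an integrated relative entropy, gives another proof. Disintegrate $P_T$ and $P_T^\dagger$ over $\Psi$. This yields
\[
\Ent(P_T\mid P_T^\dagger)=\Ent(\Psi_*P_T\mid\Psi_*P_T^\dagger)+\int\Ent\bigl(P_T(\cdot\mid o)\mid P_T^\dagger(\cdot\mid o)\bigr)\,\Psi_*P_T(\dd o).
\]
The second term is non-negative. This form also shows exactly when equality holds: the fibre laws must agree $\Psi_*P_T$-almost surely, i.e. $\Psi$ is sufficient for the pair.

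\emph{Interpretive sentence.} The claim that entropy production is the \emph{maximal} distinguishability over all observations has two halves. The inequality gives the upper bound. The identity observation $\Psi=\mathrm{id}$ attains it. So the supremum over measurable $\Psi$ equals $\mathcal E_T$.

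\emph{Main obstacle.} The only delicate step is integrability in the Jensen argument when $\phi(R)$ is not integrable. I would handle this by noting that $\phi\geqslant -e^{-1}$, so both expectations are well defined in $(-e^{-1},+\infty]$. Alternatively, one can truncate $R$ and pass to the limit by monotone convergence. A further option is to use the Donsker--Varadhan variational formula $\Ent(P\mid Q)=\sup_g\{\bE_Pg-\log\bE_Qe^g\}$: restricting the supremum to $g=f\circ\Psi$ with bounded measurable $f$ gives the inequality directly and avoids densities altogether.
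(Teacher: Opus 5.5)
Your proposal is correct and follows the paper's route. The paper's proof is the one-line appeal to monotonicity of relative entropy under the measurable pushforward $\Psi$, applied to $P_T$ and $P_T^\dagger$; you invoke the same data processing inequality and also prove it (via conditional Jensen, the chain rule, or Donsker--Varadhan), and your observation that $\Psi=\mathrm{id}$ attains the bound makes explicit the ``maximal'' reading that the paper leaves implicit.
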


\begin{proof}
Apply monotonicity of the relative entropy under measurable pushforward to $P_T$ and $P_T^\dagger$.
\end{proof}

\subsection{Housekeeping heat for stationary diffusions}

We now give the finite-dimensional formula which should be read as the thermodynamic companion to the boundary distinguishability formula.  Let
\begin{equation}\label{st-diffusion-eq}
        \dd X_t=b(X_t)\,\dd t+\sigma(X_t)\,\dd W_t,
          D=\frac12\sigma\sigma^\top,
\end{equation}
be a diffusion on $\R^d$.  Suppose it has a smooth positive invariant density $\pi$, and assume either no boundary or boundary conditions removing all integration-by-parts terms.  The stationary probability current is
\begin{equation}\label{stationary-current-eq}
        J_\pi=b\pi-\nabla\cdot(D\pi),
          \nabla\cdot J_\pi=0.
\end{equation}
The current velocity is $v_\pi=J_\pi/\pi$.

The following theorem establishes stationary entropy production and housekeeping heat.

\begin{theorem}
Assume $v_\pi(x)\in\Range D(x)$ for $\pi$-almost every $x$, and
\[
        \int v_\pi^\top D^\dagger v_\pi\,\pi\,\dd x<\infty.
\]
Let $P_T^\pi$ be the stationary path law of \eqref{st-diffusion-eq}.  Then
\begin{equation}\label{epr-current-eq}
        \ep
        =\lim_{T\to\infty}\frac1T
          \Ent(P_T^\pi\mid (P_T^\pi)^\dagger)
        =\int_{\R^d}\frac{J_\pi^\top D^\dagger J_\pi}{\pi}\,\dd x.
\end{equation}
Consequently, in an isothermal environment,
\[
        \dot Q_{\mathrm{hk}}
        =k_B T_{\mathrm{bath}}
          \int_{\R^d}\frac{J_\pi^\top D^\dagger J_\pi}{\pi}\,\dd x.
\]
If the range condition fails on a set of positive $\pi$-measure, the forward and reversed stationary path laws are singular at the corresponding local level and the entropy production is infinite.
\end{theorem}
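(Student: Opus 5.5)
The plan is to compute $\Ent(P_T^\pi\mid(P_T^\pi)^\dagger)$ exactly for each finite $T$ and show it equals $T$ times the current integral. The main tool is Girsanov's theorem, used in the same way as in Theorem~\ref{girsanov-control-identity-thm}.

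\emph{Step 1: identify the reversed law.} Under stationarity, $(P_T^\pi)^\dagger$ is the stationary path law of the time-reversed diffusion. By the classical Haussmann--Pardoux/Nelson time-reversal formula, this diffusion has the same diffusion tensor $D$ and drift
\[
b^\dagger=-b+2\pi^{-1}\nabla\cdot(D\pi).
\]
Using the Helmholtz decomposition \eqref{helmholtz-finite-eq}, $b=\pi^{-1}\nabla\cdot(D\pi)+v_\pi$. Hence
\[
b^\dagger=\pi^{-1}\nabla\cdot(D\pi)-v_\pi,
\qquad
b-b^\dagger=2v_\pi .
\]
I would verify the reversal formula weakly: for smooth compactly supported $f,g$, check that the generator $\LL^\dagger$ built from $b^\dagger$ is the $L^2(\pi)$-adjoint of $\LL$. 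This is an integration by parts, and the hypothesis that there are no boundary terms is exactly what makes it valid. Uniqueness of the martingale problem then identifies $(P_T^\pi)^\dagger$.

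\emph{Step 2: Girsanov.} Both laws have initial law $\pi$ and the same diffusion tensor, and their drifts differ by $2v_\pi$. The assumption $v_\pi\in\Range D$ lets me write $2v_\pi=\sigma w$ with
\[
\|w\|^2=4\,v_\pi^\top(\sigma\sigma^\top)^\dagger v_\pi=2\,v_\pi^\top D^\dagger v_\pi .
\]
The finite-energy hypothesis gives $\bE\int_0^T\|w\|^2\,\dd t<\infty$, so Girsanov applies after localisation, as in Theorem~\ref{girsanov-control-identity-thm}. The initial marginals coincide, so the relative entropy reduces to half the expected squared Girsanov drift:
\[
\Ent\bigl(P_T^\pi\mid(P_T^\pi)^\dagger\bigr)
=\tfrac12\,\bE\int_0^T\|w(X_t)\|^2\,\dd t
=T\int v_\pi^\top D^\dagger v_\pi\,\pi\,\dd x .
\]
The last equality uses stationarity. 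Since $v_\pi=J_\pi/\pi$, the integrand $v_\pi^\top D^\dagger v_\pi\,\pi$ equals $J_\pi^\top D^\dagger J_\pi/\pi$. Dividing by $T$ gives \eqref{epr-current-eq}, and the limit is trivial because the expression is exactly linear in $T$. Multiplying by $k_BT_{\mathrm{bath}}$ gives the housekeeping heat formula by definition.

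\emph{Step 3: singular case.} Suppose $v_\pi\notin\Range D$ on a set $A$ with $\pi(A)>0$. Then the drift difference has a component in $\ker D$, along which neither process has noise. The quadratic variation is the same for both laws, but the finite-variation part of the projection onto $(\Range D)^\perp$ differs on $A$. Since these finite-variation parts are path functionals, the event that they agree has probability one under one law and zero under the other. This gives $P_T^\pi\not\ll(P_T^\pi)^\dagger$, so the relative entropy is $+\infty$.

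\emph{Main obstacle.} The delicate step is Step 1: rigorously identifying the reversed path law, including the generator-adjoint computation and uniqueness. The degenerate case is also delicate, because $D^\dagger$ is only measurable and the projection argument needs care. I would simply cite the time-reversal theorem and handle degeneracy on the range of $D$ pointwise.
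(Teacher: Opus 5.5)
Your proposal is correct and follows the paper's proof essentially step for step. You identify the reversed stationary drift as $b^\dagger=b-2v_\pi$ and apply Girsanov to the drift difference $2v_\pi=\sigma w$, with $\tfrac12\|w\|^2=v_\pi^\top D^\dagger v_\pi$. Stationarity then gives exactly $T\int J_\pi^\top D^\dagger J_\pi/\pi\,\dd x$, and when the range condition fails you argue singularity from a drift component in $\ker D$.

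Two of your steps are slightly more careful than the paper's. The $L^2(\pi)$-adjoint check of the reversed generator, and the remark that the equal initial marginals $\pi$ remove the initial-law term, firm up the paper's justification that ``the current of $b^\dagger$ is $-J_\pi$''. In Step 3, ``probability zero under the other law'' overstates what you can show. You only need the agreement event to have full measure under $(P_T^\pi)^\dagger$ and measure strictly less than one under $P_T^\pi$, which stationarity and $\pi(A)>0$ give. That already yields $P_T^\pi\not\ll(P_T^\pi)^\dagger$ and hence infinite relative entropy.
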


\begin{proof}
The time-reversed stationary diffusion has the same diffusion tensor and drift
\[
        b^\dagger=b-2v_\pi.
\]
Indeed the current associated with $b^\dagger$ is $-J_\pi$, while the invariant density remains $\pi$.  The drift difference between the forward and reversed equations is therefore $2v_\pi$.  If $v_\pi\in\Range D$, this drift difference is a Cameron--Martin/Girsanov shift of the driving noise.  The relative entropy per unit time for changing a diffusion drift by $2v_\pi$ is
\[
        \frac12\int \bigl|\sigma^\dagger(2v_\pi)\bigr|^2\pi\,\dd x.
\]
Since $\sigma\sigma^\top=2D$, the Moore--Penrose identity gives
\[
        \frac12\bigl|\sigma^\dagger(2v_\pi)\bigr|^2
        =v_\pi^\top D^\dagger v_\pi.
\]
Stationarity then yields \eqref{epr-current-eq}.  If the range condition fails, the forward and reversed laws differ by a drift component outside the Cameron--Martin directions generated by the noise.  The Cameron--Martin support dichotomy gives singularity rather than a finite likelihood ratio, so the relative entropy is infinite.
\end{proof}

The following remark explains what the formula says.

\begin{remark}
The current $J_\pi$ preserves the stationary density because $\nabla\cdot J_\pi=0$.  It is therefore conservative at the density level.  Nevertheless it produces entropy because it gives the stationary process an orientation in time.  Housekeeping heat is exactly the heat required to maintain this orientation.  Cancelling $J_\pi$ removes the stationary time arrow and removes the housekeeping heat, even though it need not change the invariant density.
\end{remark}

\subsection{Excess dissipation and housekeeping dissipation}

The stationary formula is the special case of a decomposition valid away from stationarity.  Let $p_t$ be the time-dependent density, write $q_t=p_t/\pi$, and define the instantaneous current
\[
        J_t=bp_t-\nabla\cdot(Dp_t).
\]
Using \eqref{stationary-current-eq}, one has the algebraic splitting
\begin{equation}\label{current-splitting-hk-eq}
        J_t=q_tJ_\pi-p_tD\nabla\log q_t.
\end{equation}
The first term is the transported stationary current.  The second term is the relaxation current down the relative-entropy gradient.

The following theorem establishes the housekeeping-excess decomposition.

\begin{theorem}
Assume $D$ is symmetric positive definite, the preceding integration by parts is justified, and all terms below are finite.  Define
\[
        \sigma_t=\int \frac{J_t^\top D^{-1}J_t}{p_t}\,\dd x,
\]
\[
        \sigma_t^{\mathrm{hk}}
        =\int p_t\, v_\pi^\top D^{-1}v_\pi\,\dd x
        =\int q_t\frac{J_\pi^\top D^{-1}J_\pi}{\pi}\,\dd x,
\]
and
\[
        \sigma_t^{\mathrm{ex}}
        =\int p_t\,\nabla\log q_t^\top D\nabla\log q_t\,\dd x.
\]
Then
\begin{equation}\label{hk-excess-decomp-eq}
        \sigma_t=\sigma_t^{\mathrm{hk}}+\sigma_t^{\mathrm{ex}}.
\end{equation}
Moreover
\begin{equation}\label{freeenergy-dissipation-eq}
        \frac{\dd}{\dd t}\Ent(p_t\mid\pi)
        =-\sigma_t^{\mathrm{ex}}.
\end{equation}
At stationarity, $p_t=\pi$, the excess term vanishes and
\[
        \sigma_t^{\mathrm{hk}}=\ep.
\]
\end{theorem}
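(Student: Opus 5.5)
The plan is to work entirely from the current splitting \eqref{current-splitting-hk-eq}, which I verify first.  Since $p_t=q_t\pi$,
\[
\nabla\cdot(Dp_t)=q_t\nabla\cdot(D\pi)+D\pi\nabla q_t=q_t\nabla\cdot(D\pi)+p_tD\nabla\log q_t.
\]
Hence
\[
J_t=q_t\bigl(b\pi-\nabla\cdot(D\pi)\bigr)-p_tD\nabla\log q_t=q_tJ_\pi-p_tD\nabla\log q_t .
\]
Dividing by $p_t$ gives the velocity decomposition $J_t/p_t=v_\pi-D\nabla\log q_t$, because $q_tJ_\pi/p_t=J_\pi/\pi=v_\pi$.

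For \eqref{hk-excess-decomp-eq}, I will expand the quadratic form $\sigma_t=\int p_t\,(J_t/p_t)^\top D^{-1}(J_t/p_t)\,\dd x$.  This gives three terms:
\begin{itemize}
\item the diagonal term $\int p_t v_\pi^\top D^{-1}v_\pi\,\dd x=\sigma_t^{\mathrm{hk}}$;
\item the diagonal term $\int p_t\nabla\log q_t^\top DD^{-1}D\nabla\log q_t\,\dd x=\sigma_t^{\mathrm{ex}}$, using symmetry of $D$;
\item the cross term $-2\int p_t v_\pi^\top\nabla\log q_t\,\dd x$.
\end{itemize}
The cross term is where the work is.  Writing $p_tv_\pi=q_tJ_\pi$ and $\nabla\log q_t=\nabla q_t/q_t$, it becomes
\[
-2\int J_\pi\cdot\nabla q_t\,\dd x=2\int q_t\,\nabla\cdot J_\pi\,\dd x=0
\]
by integration by parts and $\nabla\cdot J_\pi=0$.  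This orthogonality of the stationary current to gradients is the key step, and I expect it to be the main obstacle.  It needs the standing assumption that boundary terms vanish, i.e.\ enough decay of $q_tJ_\pi$ at infinity.  I will simply invoke the hypothesis that the integrations by parts are justified.  The second expression for $\sigma_t^{\mathrm{hk}}$ then follows from $p_tv_\pi^\top D^{-1}v_\pi=q_t\pi\,(J_\pi/\pi)^\top D^{-1}(J_\pi/\pi)$.

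For \eqref{freeenergy-dissipation-eq}, I will differentiate $\Ent(p_t\mid\pi)=\int p_t\log q_t\,\dd x$.  Since $\int\partial_tp_t\,\dd x=0$, the $\partial_t\log q_t$ contribution drops, leaving $\int\partial_tp_t\log q_t\,\dd x$.  Using the Fokker--Planck equation $\partial_tp_t=-\nabla\cdot J_t$ and integrating by parts gives
\[
\frac{\dd}{\dd t}\Ent(p_t\mid\pi)=\int J_t\cdot\nabla\log q_t\,\dd x .
\]
Substituting \eqref{current-splitting-hk-eq} yields $\int q_tJ_\pi\cdot\nabla\log q_t\,\dd x-\int p_t\nabla\log q_t^\top D\nabla\log q_t\,\dd x$.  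The first integral equals $\int J_\pi\cdot\nabla q_t\,\dd x$, which vanishes by the same divergence-free argument.  The second integral is $\sigma_t^{\mathrm{ex}}$, which proves \eqref{freeenergy-dissipation-eq}.

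At stationarity $q_t\equiv1$, so $\nabla\log q_t=0$ and $\sigma_t^{\mathrm{ex}}=0$.  Also $\sigma_t^{\mathrm{hk}}=\int J_\pi^\top D^{-1}J_\pi/\pi\,\dd x$.  Since $D$ is positive definite, $D^\dagger=D^{-1}$ and the range condition holds automatically, so the preceding stationary entropy production theorem \eqref{epr-current-eq} identifies this integral with $\ep$.
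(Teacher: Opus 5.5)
Your proposal is correct and follows essentially the same route as the paper: substitute the current splitting into $\sigma_t$, remove the cross term $\int J_\pi\cdot\nabla q_t\,\dd x$ using $\nabla\cdot J_\pi=0$ and the integration-by-parts hypothesis, differentiate the relative entropy via the Fokker--Planck equation with the same cancellation, and reduce to the stationary formula \eqref{epr-current-eq} at $q_t\equiv1$. The only differences are additions the paper leaves implicit: you verify the current splitting explicitly, and you note that positive definiteness of $D$ gives $D^\dagger=D^{-1}$ and makes the range condition of the stationary theorem automatic.
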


\begin{proof}
Substitute \eqref{current-splitting-hk-eq} into $\sigma_t$.  Since $p_t=q_t\pi$,
\[
\begin{aligned}
        \sigma_t
        &=\int \frac{(q_tJ_\pi-p_tD\nabla\log q_t)^\top D^{-1}
        (q_tJ_\pi-p_tD\nabla\log q_t)}{p_t}\,\dd x\\
        &=\int q_t\frac{J_\pi^\top D^{-1}J_\pi}{\pi}\,\dd x
          +\int p_t\nabla\log q_t^\top D\nabla\log q_t\,\dd x
          -2\int J_\pi\cdot\nabla q_t\,\dd x.
\end{aligned}
\]
The cross term vanishes because $\nabla\cdot J_\pi=0$ and the boundary contribution is zero:
\[
        \int J_\pi\cdot\nabla q_t\,\dd x
        =-\int q_t\nabla\cdot J_\pi\,\dd x=0.
\]
This proves \eqref{hk-excess-decomp-eq}.  For the entropy identity, use the Fokker--Planck equation $\partial_t p_t=-\nabla\cdot J_t$:
\[
\begin{aligned}
        \frac{\dd}{\dd t}\Ent(p_t\mid\pi)
        &=\int (1+\log q_t)\partial_t p_t\,\dd x\\
        &=-\int (1+\log q_t)\nabla\cdot J_t\,\dd x\\
        &=\int \nabla\log q_t\cdot J_t\,\dd x.
\end{aligned}
\]
Using \eqref{current-splitting-hk-eq}, the stationary-current term again vanishes by $\nabla\cdot J_\pi=0$, while the gradient term gives
\[
        -\int p_t\nabla\log q_t^\top D\nabla\log q_t\,\dd x.
\]
This is \eqref{freeenergy-dissipation-eq}.  At $p_t=\pi$, $q_t=1$, so $\sigma_t^{\mathrm{ex}}=0$ and $\sigma_t^{\mathrm{hk}}$ reduces to \eqref{epr-current-eq}.
\end{proof}

The following remark explains why this belongs in the distinguishability section.

\begin{remark}
The excess term is relaxation distinguishability: it is the rate at which the current law becomes less distinguishable from the stationary law $\pi$.  The housekeeping term is time-orientation distinguishability: it remains after relaxation has ended.  Therefore an individuated nonequilibrium system can be perfectly steady, and can even have an exact path space boundary, while still dissipating housekeeping heat through the currents that maintain its separation from the environment.
\end{remark}

\subsection{Counter-rotational environmental forcing}

The preceding formula also shows the precise sense in which an environmental or control force can oppose a nonequilibrium circulation.  Suppose a stationary density $\pi$ is held fixed and an additional drift $u$ changes the stationary current to
\[
        J_\pi^u=J_\pi+\pi u,
\]
with $\nabla\cdot J_\pi^u=0$.  The new housekeeping entropy production is
\[
        \ep(u)=\int \frac{(J_\pi+\pi u)^\top D^\dagger(J_\pi+\pi u)}{\pi}\,\dd x.
\]

The following theorem establishes cancellation of housekeeping heat by a counter-current.

\begin{theorem}
Assume $u$ is an admissible current-preserving perturbation and all terms are finite.  Then
\begin{equation}\label{ep-control-expansion-eq}
        \ep(u)=\ep(0)
        +2\int u^\top D^\dagger J_\pi\,\dd x
        +\int \pi\,u^\top D^\dagger u\,\dd x.
\end{equation}
Consequently $u$ reduces housekeeping heat to first order precisely when
\[
        \int u^\top D^\dagger J_\pi\,\dd x<0.
\]
Over the unrestricted affine class of perturbations preserving $\pi$, the unique minimiser in the noise metric is
\[
        u_*=-\frac{J_\pi}{\pi},
\]
for which $J_\pi^{u_*}=0$ and $\ep(u_*)=0$.
\end{theorem}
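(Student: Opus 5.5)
The plan is a direct algebraic expansion of the quadratic form defining $\ep(u)$, followed by a completion-of-squares argument for the minimisation. No earlier limit theorem is needed beyond the formula \eqref{epr-current-eq}, which identifies $\ep(0)$ with $\int J_\pi^\top D^\dagger J_\pi/\pi\,\dd x$.

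\emph{Step 1: the expansion.} Since $D$ is symmetric, its Moore--Penrose inverse $D^\dagger$ is symmetric too. Expanding $(J_\pi+\pi u)^\top D^\dagger(J_\pi+\pi u)$ and dividing by $\pi>0$ gives the pointwise identity
\[
\frac{(J_\pi+\pi u)^\top D^\dagger(J_\pi+\pi u)}{\pi}
=\frac{J_\pi^\top D^\dagger J_\pi}{\pi}+2u^\top D^\dagger J_\pi+\pi\,u^\top D^\dagger u .
\]
The two cross terms combine into $2u^\top D^\dagger J_\pi$ precisely because $D^\dagger$ is symmetric. Under the finiteness assumption each term is integrable, so we may integrate term by term; this yields \eqref{ep-control-expansion-eq}.

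\emph{Step 2: the first-order statement.} Replace $u$ by $\varepsilon u$. Admissibility is preserved, since $\nabla\cdot(\pi\varepsilon u)=\varepsilon\nabla\cdot(\pi u)=0$. The expansion then makes $\varepsilon\mapsto\ep(\varepsilon u)$ a quadratic polynomial in $\varepsilon$ with derivative $2\int u^\top D^\dagger J_\pi\,\dd x$ at $\varepsilon=0$. Moreover its leading coefficient $\int\pi u^\top D^\dagger u\,\dd x$ is nonnegative, because $D^\dagger\geqslant0$. So $\ep$ decreases to first order exactly when the linear coefficient is negative.

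\emph{Step 3: the minimiser, by completing the square.} Writing $v_\pi=J_\pi/\pi$, we have
\[
\ep(u)=\int \pi\,(v_\pi+u)^\top D^\dagger(v_\pi+u)\,\dd x\geqslant0 .
\]
Take $u_*=-v_\pi$. Then $J_\pi^{u_*}=0$, which is trivially divergence free, so $\pi$ remains invariant and $u_*$ is admissible. It attains the value $0$, which is the global minimum.

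For uniqueness, equality $\ep(u)=0$ forces $D^\dagger(v_\pi+u)=0$ $\pi$-a.e., that is, $v_\pi+u\in\ker D$. The quadratic form $w\mapsto w^\top D^\dagger w$ is only a seminorm, so uniqueness has to be read in the noise metric, i.e.\ modulo $\ker D$. Equivalently, it is exact uniqueness among perturbations taking values in $\Range D$, where the form is strictly convex. The standing range condition $v_\pi\in\Range D$ then gives $u=u_*$ in that class.

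I expect this uniqueness step to be the only delicate point. The work there is to make precise that ``unique in the noise metric'' means equality up to $\ker D$-valued components, which carry no Girsanov cost. With that reading, uniqueness is immediate from strict positivity of $D^\dagger$ on $\Range D$.
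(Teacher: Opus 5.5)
Your proposal is correct and follows the paper's proof essentially step for step. You expand the quadratic form using symmetry of $D^\dagger$, read off the first variation, and complete the square to $\ep(u)=\int\pi\,(u+v_\pi)^\top D^\dagger(u+v_\pi)\,\dd x$, which holds exactly with no leftover constant. Your uniqueness discussion is a correct sharpening of a point the paper passes over with the phrase ``in the noise metric'': in general the minimiser is unique only up to $\ker D$-valued, $\pi$-divergence-free components, and it is exactly unique among $\Range D$-valued perturbations once $v_\pi\in\Range D$.
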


\begin{proof}
Expand the square:
\[
\begin{aligned}
        \ep(u)
        &=\int \frac{J_\pi^\top D^\dagger J_\pi}{\pi}\,\dd x
          +2\int u^\top D^\dagger J_\pi\,\dd x
          +\int \pi u^\top D^\dagger u\,\dd x.
\end{aligned}
\]
This is \eqref{ep-control-expansion-eq}.  The first variation at $u=0$ is the middle term.  Completing the square gives
\[
        \ep(u)=\int \pi\left(u+\frac{J_\pi}{\pi}\right)^\top
        D^\dagger
        \left(u+\frac{J_\pi}{\pi}\right)\dd x,
\]
up to the zero constant obtained after cancellation.  Hence the unconstrained minimiser is $u_*=-J_\pi/\pi$.  This perturbation sets the current to zero and therefore removes housekeeping entropy production.
\end{proof}

The following remark discusses ports, boundaries, and admissibility.

\begin{remark}
The theorem is an algebraic statement about a fixed density and an admissible current perturbation.  In an open system the admissible $u$'s are not arbitrary: they may enter only through sensory, active, reservoir, or boundary ports.  Then the optimal counter-current is the orthogonal projection of $-J_\pi/\pi$ onto the allowed control subspace in the Hilbert space $L^2(\pi;D^\dagger)$.  This is the precise version of the statement that an environment can reduce housekeeping heat when it pushes against the stationary circulation.
\end{remark}

\subsection{Factorisation gap, excess heat, and architectural housekeeping}

We now make explicit the relation between the factorisation gap and heat dissipation.  An information theoretic error becomes thermodynamic after one specifies a Gibbs path law and local detailed balance.  Under these hypotheses, the same relative entropy which measures the failure of conditional independence is the excess nonequilibrium free energy of the best factorised architecture.  Under an isothermal convention, that excess free energy is the minimum dissipated work required to enforce the factorisation; under a controlled steady convention, it is a lower bound on the additional housekeeping burden required to maintain it.

Fix a boundary path $b$.  Let $\Pi_b$ denote the exact conditional law of $\omega=(y,x)$ given $B=b$, and let
\[
        \Pi_b^0=Q_Y^b\otimes Q_X^b
\]
be the clamped product reference law.  Suppose for $\beta=(k_B T_{\mathrm{bath}})^{-1}$ one has
\begin{equation}\label{gibbs-boundary-law-thermo-eq}
        \Pi_b(\dd\omega)=\frac{1}{Z_b}\exp[-\beta A_b(\omega)]\Pi_b^0(\dd\omega),
\end{equation}
where $A_b$ is the boundary-conditioned path action or negative log-likelihood in energy units.  For any $q\ll \Pi_b^0$, define
\begin{equation}\label{thermo-freeenergy-boundary-eq}
        G_b(q)=\bE_q A_b+k_B T_{\mathrm{bath}}\Ent(q\mid \Pi_b^0).
\end{equation}
The exact free energy is
\[
        G_b^*:=-k_B T_{\mathrm{bath}}\log Z_b.
\]
Let $\mathsf{Prod}_b$ be the family of product laws $q_Y\otimes q_X$ on $\Omega_Y\times\Omega_X$, and define
\begin{equation}\label{factorisation-gap-boundary-thermo-eq}
        \Delta_{\mathrm{fac}}(b)
        :=\inf_{q_Y\otimes q_X\in\mathsf{Prod}_b}
        \Ent(q_Y\otimes q_X\mid \Pi_b).
\end{equation}

\begin{theorem}\label{factorisation-gap-excess-freeenergy-thm}
Under \eqref{gibbs-boundary-law-thermo-eq}, for every $q\ll\Pi_b^0$,
\begin{equation}\label{gibbs-freeenergy-identity-thermo-eq}
        G_b(q)=G_b^*+k_B T_{\mathrm{bath}}\Ent(q\mid\Pi_b).
\end{equation}
Consequently
\begin{equation}\label{factorisation-freeenergy-gap-thermo-eq}
        \inf_{q_Y\otimes q_X\in\mathsf{Prod}_b}G_b(q_Y\otimes q_X)-G_b^*
        =k_B T_{\mathrm{bath}}\Delta_{\mathrm{fac}}(b).
\end{equation}
Thus $k_B T_{\mathrm{bath}}\Delta_{\mathrm{fac}}(b)$ is precisely the irreducible excess nonequilibrium free energy caused by insisting that the hidden exterior-interior architecture factorises after the boundary path has been fixed.
\end{theorem}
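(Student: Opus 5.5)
The argument is a direct Gibbs variational computation. We work with densities relative to the clamped reference $\Pi_b^0$. Fix $q\ll\Pi_b^0$ with $\bE_q|A_b|<\infty$ and $\Ent(q\mid\Pi_b^0)<\infty$, and note that \eqref{gibbs-boundary-law-thermo-eq} makes $\Pi_b$ and $\Pi_b^0$ mutually absolutely continuous wherever $A_b<\infty$. Hence $q\ll\Pi_b$, and the chain rule for Radon--Nikodym derivatives gives
\[
\log\frac{\dd q}{\dd\Pi_b}=\log\frac{\dd q}{\dd\Pi_b^0}-\log\frac{\dd\Pi_b}{\dd\Pi_b^0}
=\log\frac{\dd q}{\dd\Pi_b^0}+\beta A_b+\log Z_b,
\]
$q$-almost surely. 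Integrating against $q$ yields
\[
\Ent(q\mid\Pi_b)=\Ent(q\mid\Pi_b^0)+\beta\bE_qA_b+\log Z_b .
\]
Multiplying by $k_BT_{\mathrm{bath}}=\beta^{-1}$ and recalling $G_b^*=-k_BT_{\mathrm{bath}}\log Z_b$ gives \eqref{gibbs-freeenergy-identity-thermo-eq}. This is the same computation as in the proof of the Gibbs identity \eqref{vfe-identity-eq}, now carried out against the reference $\Pi_b^0$ rather than Lebesgue measure.

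For \eqref{factorisation-freeenergy-gap-thermo-eq}, apply \eqref{gibbs-freeenergy-identity-thermo-eq} to each product $q=q_Y\otimes q_X$. Subtracting $G_b^*$ and taking the infimum over $\mathsf{Prod}_b$ commutes with multiplication by the positive constant $k_BT_{\mathrm{bath}}$. The result is $k_BT_{\mathrm{bath}}\Delta_{\mathrm{fac}}(b)$ by the definition \eqref{factorisation-gap-boundary-thermo-eq}.

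The main obstacle is bookkeeping of infinities. We must decide whether $\mathsf{Prod}_b$ in \eqref{factorisation-gap-boundary-thermo-eq} ranges over all products or only over those with $q_Y\otimes q_X\ll\Pi_b^0$. Products that are not absolutely continuous with respect to $\Pi_b^0$ have $G_b=+\infty$ by convention, so they do not affect the left infimum. For such products we must also check that $\Ent(\,\cdot\mid\Pi_b)=+\infty$, which holds because $\Pi_b\ll\Pi_b^0$. We must further handle the case $\bE_qA_b=+\infty$ with finite entropy, where both sides of \eqref{gibbs-freeenergy-identity-thermo-eq} are $+\infty$. I would treat this by splitting $A_b$ into positive and negative parts and using $Z_b<\infty$ to keep $\bE_qA_b^-$ controlled. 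The step may also need the mild assumption $0<Z_b<\infty$, which is implicit in \eqref{gibbs-boundary-law-thermo-eq}. Finally, I would remark that \eqref{factorisation-gap-boundary-thermo-eq} is the reverse (exclusive) projection. Its value is therefore bounded above by the conditional mutual information only when the conditional marginals are admissible, and it is distinct from the inclusive projection of the product projection theorem. No such comparison is needed for the stated equality.
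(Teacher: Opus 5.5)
Your proof is the paper's own: the identity $\log\frac{\dd q}{\dd\Pi_b}=\log\frac{\dd q}{\dd\Pi_b^0}+\beta A_b+\log Z_b$ is integrated against $q$, multiplied by $k_BT_{\mathrm{bath}}$, and then minimised over $\mathsf{Prod}_b$, and your integrability bookkeeping (notably $\bE_qA_b^-<\infty$ from $Z_b<\infty$ and finite entropy) is a welcome addition the paper omits. One correction to your closing aside, which the proof does not use: inserting the conditional marginals $\Pi_{b,Y},\Pi_{b,X}$ into the reverse projection only gives $\Delta_{\mathrm{fac}}(b)\leqslant\Ent(\Pi_{b,Y}\otimes\Pi_{b,X}\mid\Pi_b)$, not a bound by the conditional mutual information $\Ent(\Pi_b\mid\Pi_{b,Y}\otimes\Pi_{b,X})$, and in fact $\Delta_{\mathrm{fac}}(b)$ can exceed the latter (for the binary law $p(0,0)=p(1,1)=0.49$, $p(0,1)=p(1,0)=0.01$ one finds $\Delta_{\mathrm{fac}}\approx0.67>0.60\approx\CMI$).
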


\begin{proof}
From \eqref{gibbs-boundary-law-thermo-eq},
\[
        \log\frac{\dd q}{\dd\Pi_b}
        =\log\frac{\dd q}{\dd\Pi_b^0}+\beta A_b+\log Z_b.
\]
Integrating with respect to $q$ gives
\[
        \Ent(q\mid\Pi_b)
        =\Ent(q\mid\Pi_b^0)+\beta\bE_qA_b+\log Z_b.
\]
Multiplying by $k_BT_{\mathrm{bath}}=\beta^{-1}$ and using $G_b^*=-k_BT_{\mathrm{bath}}\log Z_b$ gives \eqref{gibbs-freeenergy-identity-thermo-eq}.  Taking the infimum over product laws gives \eqref{factorisation-freeenergy-gap-thermo-eq}.
\end{proof}

The following corollary gives the excess work needed to impose factorisation.

\begin{corollary}\label{excess-work-factorisation-cor}
Assume, in addition, the isothermal local-detailed-balance convention in which the dissipated work of a protocol taking the hidden path ensemble from $\Pi_b$ to a target law $q$ satisfies the nonequilibrium second-law inequality
\[
        \bE W_{\mathrm{diss}}\geqslant G_b(q)-G_b(\Pi_b).
\]
Then every protocol which enforces a product law $q_Y\otimes q_X$ obeys
\begin{equation}\label{work-bound-product-eq}
        \bE W_{\mathrm{diss}}\geqslant k_B T_{\mathrm{bath}}\Ent(q_Y\otimes q_X\mid\Pi_b),
\end{equation}
and the best factorised architecture satisfies
\begin{equation}\label{work-bound-factorisation-eq}
        \bE W_{\mathrm{diss}}^{\mathrm{arch}}\geqslant k_B T_{\mathrm{bath}}\Delta_{\mathrm{fac}}(b).
\end{equation}
\end{corollary}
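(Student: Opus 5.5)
The plan is to combine the assumed second-law inequality with the Gibbs free-energy identity \eqref{gibbs-freeenergy-identity-thermo-eq} of Theorem~\ref{factorisation-gap-excess-freeenergy-thm}, and then to optimise over product laws.

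\emph{Step 1: evaluate the free energy at the start of the protocol.} We apply \eqref{gibbs-freeenergy-identity-thermo-eq} with $q=\Pi_b$. This is admissible because $\Pi_b\ll\Pi_b^0$ by \eqref{gibbs-boundary-law-thermo-eq}. Since $\Ent(\Pi_b\mid\Pi_b)=0$, we obtain $G_b(\Pi_b)=G_b^*$, so the initial free energy is exactly the exact free energy.

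\emph{Step 2: evaluate the free energy at the target.} Take an arbitrary protocol whose target is a product law $q_Y\otimes q_X$. If $q_Y\otimes q_X\ll\Pi_b^0$, then \eqref{gibbs-freeenergy-identity-thermo-eq} gives
\[
G_b(q_Y\otimes q_X)-G_b(\Pi_b)=k_BT_{\mathrm{bath}}\Ent(q_Y\otimes q_X\mid\Pi_b).
\]
Inserting this into the assumed inequality $\bE W_{\mathrm{diss}}\geqslant G_b(q)-G_b(\Pi_b)$ yields \eqref{work-bound-product-eq}.

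If $q_Y\otimes q_X\not\ll\Pi_b^0$, the free energy $G_b(q_Y\otimes q_X)$ is $+\infty$ under the convention that relative entropy is infinite when absolute continuity fails. Because $\Pi_b\sim\Pi_b^0$ whenever the Gibbs factor is positive, the right side of \eqref{work-bound-product-eq} is then also infinite. This edge case is where the only real care is needed: we must fix the convention $G_b(q)=+\infty$ for $q\not\ll\Pi_b^0$ and check that it is consistent with the hypothesis. Given that convention, the bound holds trivially in this case.

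\emph{Step 3: optimise over product laws.} The bound \eqref{work-bound-product-eq} holds for every product law. Its right side is at least $k_BT_{\mathrm{bath}}\inf_{\mathsf{Prod}_b}\Ent(\cdot\mid\Pi_b)=k_BT_{\mathrm{bath}}\Delta_{\mathrm{fac}}(b)$ by the definition \eqref{factorisation-gap-boundary-thermo-eq}. Taking the best factorised architecture, or equivalently the infimum over protocols, gives \eqref{work-bound-factorisation-eq}.

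Equivalently, one can quote \eqref{factorisation-freeenergy-gap-thermo-eq} directly together with $G_b(\Pi_b)=G_b^*$. Since the second-law inequality is an assumption, there is no further analytic obstacle.
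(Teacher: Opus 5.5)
Your proposal is correct and is essentially the paper's argument: the paper also uses \eqref{gibbs-freeenergy-identity-thermo-eq} to get $G_b(\Pi_b)=G_b^*$ (it phrases this as $\Pi_b$ being the minimiser of $G_b$), substitutes this into the assumed second-law inequality to obtain \eqref{work-bound-product-eq}, and then takes the infimum over product laws. Your treatment of targets with $q_Y\otimes q_X\not\ll\Pi_b^0$ goes beyond the paper, which defines $G_b$ only for laws absolutely continuous with respect to $\Pi_b^0$; in that case you need only $\Pi_b\ll\Pi_b^0$, which holds automatically by \eqref{gibbs-boundary-law-thermo-eq}, not equivalence of the two laws.
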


\begin{proof}
By Theorem \ref{factorisation-gap-excess-freeenergy-thm},
\[
        G_b(q)-G_b(\Pi_b)=G_b(q)-G_b^*=k_BT_{\mathrm{bath}}\Ent(q\mid\Pi_b),
\]
because $\Pi_b$ is the minimiser of $G_b$.  This proves \eqref{work-bound-product-eq}.  Taking the infimum over product laws proves \eqref{work-bound-factorisation-eq}.
\end{proof}

The following remark discusses heat interpretation and convention dependence.

\begin{remark}
The bound \eqref{work-bound-factorisation-eq} is an excess work statement.  If the protocol is implemented quasistatically and the only reservoir is the isothermal bath, the dissipated work is transferred to the bath as heat, so the same quantity is read as an excess heat lower bound.  If the factorised state is maintained by continuous control rather than reached by a one-shot protocol, the same free energy gap becomes an integrated lower bound on the additional housekeeping expenditure.  These are not different mathematics; they are different thermodynamic realisations of the same relative-entropy gap.
\end{remark}

The maintenance version is most transparent from Girsanov.  Let $P^b$ be the natural boundary-conditioned hidden path law on $[0,T]$.  Suppose a control drift $u_t$ is inserted through an isothermal overdamped mobility $M$,
\begin{equation}\label{controlled-isothermal-diffusion-eq}
        \dd X_t=F_b(X_t,t)\dd t+M(X_t,t)u_t\dd t+\bigl(2k_BT_{\mathrm{bath}}M(X_t,t)\bigr)^{1/2}\dd W_t,
\end{equation}
with the usual Novikov condition.  Let $P^{b,u}$ be the controlled path law.

The following theorem establishes the Girsanov maintenance bound.

\begin{theorem}\label{girsanov-maintenance-bound-thm}
For the controlled diffusion \eqref{controlled-isothermal-diffusion-eq},
\begin{equation}\label{girsanov-control-entropy-eq}
        \Ent(P^{b,u}\mid P^b)
        =\frac{1}{4k_BT_{\mathrm{bath}}}\bE_{P^{b,u}}
        \int_0^T \|u_t\|_{M^{-1}}^2\,\dd t,
\end{equation}
whenever the two laws are equivalent.  If $\Phi$ is any observable or marginalisation map and $q=\Phi_*P^{b,u}$, while $\Phi_*P^b=\Pi_b$, then
\begin{equation}\label{control-energy-entropy-bound-eq}
        \frac14\bE_{P^{b,u}}\int_0^T \|u_t\|_{M^{-1}}^2\,\dd t
        \geqslant k_BT_{\mathrm{bath}}\Ent(q\mid\Pi_b).
\end{equation}
In particular, if $q$ is required to lie in the product family, then
\begin{equation}\label{control-energy-factorisation-bound-eq}
        \frac14\bE_{P^{b,u}}\int_0^T \|u_t\|_{M^{-1}}^2\,\dd t
        \geqslant k_BT_{\mathrm{bath}}\Delta_{\mathrm{fac}}(b).
\end{equation}
\end{theorem}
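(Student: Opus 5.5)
The plan has three steps: compute the Girsanov density between the controlled and uncontrolled laws, take its expectation under the controlled law to get \eqref{girsanov-control-entropy-eq}, then push forward through $\Phi$ and use data processing to get the two bounds.

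\textbf{Step 1: the Girsanov exponent.} Write the diffusion coefficient as $\sigma=(2k_BT_{\mathrm{bath}}M)^{1/2}$. The controlled equation \eqref{controlled-isothermal-diffusion-eq} differs from the uncontrolled one only by the extra drift $Mu_t$. This drift lies in the range of $\sigma$, since
\[
\sigma^{-1}Mu_t=(2k_BT_{\mathrm{bath}})^{-1/2}M^{1/2}u_t .
\]
Under Novikov's condition and equivalence of the two laws, Girsanov's theorem gives
\[
\log\frac{\dd P^{b,u}}{\dd P^b}=\int_0^T\langle\sigma^{-1}Mu_t,\dd W^{u}_t\rangle+\frac12\int_0^T\|\sigma^{-1}Mu_t\|^2\,\dd t
\]
under $P^{b,u}$, where $W^u$ is the $P^{b,u}$-Brownian motion. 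This is the same manipulation as in the proof of Theorem \ref{girsanov-control-identity-thm}, now with a general non-degenerate diffusion coefficient in place of the identity.

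\textbf{Step 2: the entropy identity.} I take $P^{b,u}$-expectations of the log-density. The stochastic integral has mean zero: I localise, then pass to the limit using finiteness of the control energy, exactly as in Theorem \ref{girsanov-control-identity-thm}. For the remaining term,
\[
\frac12\|\sigma^{-1}Mu\|^2=\frac{u^\top M u}{4k_BT_{\mathrm{bath}}}.
\]
This yields \eqref{girsanov-control-entropy-eq}, provided $\|u\|_{M^{-1}}^2$ is read as the quadratic form of $Mu$ in the $M^{-1}$ metric, namely $(Mu)^\top M^{-1}(Mu)=u^\top Mu$. I will state this convention explicitly, since it is what makes the constant come out exactly.

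\textbf{Step 3: the two bounds.} For \eqref{control-energy-entropy-bound-eq}, I apply monotonicity of relative entropy under measurable pushforward, as in the data-processing propositions above:
\[
\Ent(q\mid\Pi_b)=\Ent(\Phi_*P^{b,u}\mid\Phi_*P^b)\leqslant\Ent(P^{b,u}\mid P^b).
\]
Multiplying by $k_BT_{\mathrm{bath}}$ and substituting \eqref{girsanov-control-entropy-eq} gives the bound. For \eqref{control-energy-factorisation-bound-eq}, when $q=q_Y\otimes q_X$ lies in $\mathsf{Prod}_b$ we have $\Ent(q\mid\Pi_b)\geqslant\Delta_{\mathrm{fac}}(b)$ directly from the definition \eqref{factorisation-gap-boundary-thermo-eq} as an infimum.

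\textbf{Main obstacle.} The only delicate point is justifying that the stochastic integral has zero mean. The cleanest route is to use stopping times $\tau_n$ at which the accumulated energy reaches $n$. The stopped integral then has mean zero, so the identity holds on $[0,\tau_n]$. I then pass $n\to\infty$, using monotone convergence on the energy term and lower semicontinuity of relative entropy together with the martingale convergence of the stopped densities. If the energy is infinite, both sides of the inequalities are $+\infty$ and there is nothing to prove.
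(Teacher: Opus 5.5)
Your proposal is correct and follows essentially the same route as the paper's proof. Both arguments apply Girsanov to the drift difference $Mu_t$ in the noise metric of $2k_BT_{\mathrm{bath}}M$, use that the stochastic integral has zero mean, then invoke monotonicity of relative entropy under the pushforward $\Phi$ and the infimum definition of $\Delta_{\mathrm{fac}}(b)$. Your explicit reading $\|u_t\|_{M^{-1}}^2=(Mu_t)^\top M^{-1}(Mu_t)=u_t^\top Mu_t$ is what the paper's phrase ``noise metric'' implicitly requires for the constant $1/(4k_BT_{\mathrm{bath}})$; with the literal $u^\top M^{-1}u$ the identity would in general fail unless $M=I$. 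Your localisation argument also supplies detail that the paper only asserts.
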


\begin{proof}
The drift difference between the controlled and uncontrolled equations is $Mu_t$.  In the noise metric associated to $2k_BT_{\mathrm{bath}}M$, Girsanov's theorem gives the relative entropy \eqref{girsanov-control-entropy-eq}; the stochastic integral in the log-likelihood has zero expectation, leaving the quadratic compensator.  Relative entropy decreases under the measurable pushforward $\Phi$, hence
\[
        \Ent(q\mid\Pi_b)=\Ent(\Phi_*P^{b,u}\mid\Phi_*P^b)
        \leqslant \Ent(P^{b,u}\mid P^b).
\]
Rearranging \eqref{girsanov-control-entropy-eq} gives \eqref{control-energy-entropy-bound-eq}.  The product constraint gives \eqref{control-energy-factorisation-bound-eq} by the definition of $\Delta_{\mathrm{fac}}(b)$.
\end{proof}

The following corollary gives the architectural housekeeping bound.

\begin{corollary}\label{architectural-housekeeping-bound-cor}
Suppose the natural law $\Pi_b$ is the nonequilibrium steady law of the hidden dynamics conditioned on the boundary.  Suppose a controlled process holds the hidden ensemble in a product law $q_Y\otimes q_X$.  Let $\Sigma^{\mathrm{arch}}_{\mathrm{hk}}[0,T]$ denote the additional entropy production attributed to this channel.  Under \eqref{controlled-isothermal-diffusion-eq},
\begin{equation}\label{architectural-housekeeping-factorisation-bound-eq}
        k_BT_{\mathrm{bath}}\,\Sigma^{\mathrm{arch}}_{\mathrm{hk}}[0,T]
        \geqslant k_BT_{\mathrm{bath}}\Delta_{\mathrm{fac}}(b),
\end{equation}
for the optimally maintained factorised architecture.  Equivalently, after cancelling the common factor $k_BT_{\mathrm{bath}}$, the integrated architectural housekeeping entropy is bounded below by $\Delta_{\mathrm{fac}}(b)$.
\end{corollary}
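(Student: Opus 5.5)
The plan is to reduce the corollary to Theorem \ref{girsanov-maintenance-bound-thm} by identifying the additional entropy production of the architectural channel with the path-space relative entropy of the controlled law with respect to the natural law. I would first fix the meaning of $\Sigma^{\mathrm{arch}}_{\mathrm{hk}}[0,T]$. Under the isothermal overdamped convention of \eqref{controlled-isothermal-diffusion-eq}, the inserted drift $M u_t$ acts as a nonconservative force $u_t$. The extra power it dissipates into the bath, beyond the entropy production of the uncontrolled steady dynamics, is read as $(k_BT_{\mathrm{bath}})^{-1}$ times the mobility-weighted control energy. Concretely, I would take
\[
        k_BT_{\mathrm{bath}}\,\Sigma^{\mathrm{arch}}_{\mathrm{hk}}[0,T]
        \geqslant \frac14\bE_{P^{b,u}}\int_0^T\|u_t\|_{M^{-1}}^2\,\dd t,
\]
either as the defining attribution or as the consequence of the housekeeping-excess decomposition \eqref{hk-excess-decomp-eq}. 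The cross terms between the stationary current $J_\pi$ and the control contribution are handled as in the expansion \eqref{ep-control-expansion-eq}.

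Second, I would apply the theorem with $\Phi$ the identity or the marginalisation onto the hidden $(y,x)$ coordinates at the relevant times. The hypothesis that the controlled process holds the hidden ensemble in the product law $q_Y\otimes q_X$ means $\Phi_*P^{b,u}=q_Y\otimes q_X$. The hypothesis that $\Pi_b$ is the natural steady law means $\Phi_*P^b=\Pi_b$. Then \eqref{control-energy-entropy-bound-eq} gives a lower bound of $k_BT_{\mathrm{bath}}\Ent(q_Y\otimes q_X\mid\Pi_b)$. By the definition \eqref{factorisation-gap-boundary-thermo-eq} as an infimum, this bound is at least $k_BT_{\mathrm{bath}}\Delta_{\mathrm{fac}}(b)$. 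This is \eqref{control-energy-factorisation-bound-eq}.

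Chaining the two inequalities yields \eqref{architectural-housekeeping-factorisation-bound-eq}. The bound holds for every admissible maintaining control, and a fortiori for the optimally maintained architecture. Dividing by $k_BT_{\mathrm{bath}}>0$ gives the entropy form.

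The main obstacle is the first step: justifying that the attributed architectural housekeeping entropy dominates the Girsanov control energy. This requires that the control's contribution not be partially cancelled by the counter-rotational mechanism of the preceding theorem, since a control opposing $J_\pi$ could lower total entropy production. I would handle this by defining $\Sigma^{\mathrm{arch}}_{\mathrm{hk}}$ as the entropy produced by the control channel itself, namely the compensator $\frac{1}{4k_BT_{\mathrm{bath}}}\int\|u_t\|_{M^{-1}}^2\dd t$ in the path likelihood ratio $\log(\dd P^{b,u}/\dd P^b)$. With that definition the expected value is exactly $\Ent(P^{b,u}\mid P^b)$ by \eqref{girsanov-control-entropy-eq}, and the inequality becomes data processing. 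I would also check the equivalence of laws via Novikov, which is assumed in the setting.
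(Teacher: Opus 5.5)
Your proposal is correct and follows essentially the same route as the paper. The paper's proof also identifies the architectural housekeeping contribution with the quadratic control cost $\frac14\bE_{P^{b,u}}\int_0^T\|u_t\|_{M^{-1}}^2\,\dd t$ in bath units, and then applies Theorem \ref{girsanov-maintenance-bound-thm} (data processing plus the infimum defining $\Delta_{\mathrm{fac}}(b)$) to the optimal product target. You are right to make that identification definitional rather than derive it from \eqref{hk-excess-decomp-eq}: the cross term $2\int u^\top D^\dagger J_\pi\,\dd x$ in \eqref{ep-control-expansion-eq} can be negative, so that alternative route would not give the inequality.
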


\begin{proof}
The added housekeeping contribution is represented by the quadratic control cost in the same bath units as \eqref{girsanov-control-entropy-eq}.  Applying Theorem \ref{girsanov-maintenance-bound-thm} to the optimal product target gives the stated lower bound.  The assertion is an inequality rather than an equality because not every product law need be reachable by admissible ports, and because additional dissipation may be incurred by the mechanism implementing the control.
\end{proof}

The same idea also clarifies adaptive relaxation inside the factorised family.  Let $q_t=q_{Y,t}\otimes q_{X,t}$ be a dissipative relaxation constrained to $\mathsf{Prod}_b$, and suppose it decreases the factorised free energy according to
\[
        \frac{\dd}{\dd t}G_b(q_t)=-\mathcal R_t,
          \mathcal R_t\geqslant0.
\]
If $q_t\to q_{\mathrm{fac}}^b$, then integrating in time yields
\begin{equation}\label{adaptive-relaxation-plus-gap-eq}
        G_b(q_0)-G_b^*
        =\int_0^\infty \mathcal R_t\,\dd t
        +k_BT_{\mathrm{bath}}\Delta_{\mathrm{fac}}(b).
\end{equation}
The integral is transient relaxation dissipation within the factorised architecture.  The final term is the residual thermodynamic price that remains after the best possible factorised relaxation has already occurred.

The following remark discusses individuation as a thermodynamic statement.

\begin{remark}
A small $\Delta_{\mathrm{fac}}(b)$ means that the boundary-conditioned law is almost screened by the boundary, the best separated architecture has little residual coupling to ignore, and the extra free energy or heat price of maintaining the separation is small.  A large $\Delta_{\mathrm{fac}}(b)$ means that important exterior-interior modes remain genuinely joint.  In that regime, factorisation becomes both a poor statistical approximation by a modeller and an energetically expensive form of individuation.
\end{remark}

The following example illustrates the linear Gaussian heat cost.

\begin{example}
In the linear Gaussian case, let the conditional law $\Pi_b$ be Gaussian with covariance
\[
        C_b=\begin{pmatrix}C_{YY}^b&C_{YX}^b\\ C_{XY}^b&C_{XX}^b\end{pmatrix}.
\]
The inclusive relative entropy product projection is the product of the marginals, and
\begin{equation}\label{gaussian-factorisation-gap-logdet-thermo-eq}
        \Delta_{\mathrm{fac}}(b)
        =\frac12\log\frac{\det C_{YY}^b\det C_{XX}^b}{\det C_b}
        =-\frac12\log\det\bigl(I-\mathcal C_b\bigr),
\end{equation}
where
\[
	\mathcal C_b=(C_{YY}^b)^{-1/2}C_{YX}^b(C_{XX}^b)^{-1}C_{XY}^b(C_{YY}^b)^{-1/2}.
\]
The determinant is a Fredholm determinant in infinite-dimensional path space when $\mathcal C_b$ is trace class.  For weak cross-covariance,
\begin{equation}\label{gaussian-factorisation-gap-small-coupling-thermo-eq}
        k_BT_{\mathrm{bath}}\Delta_{\mathrm{fac}}(b)
        \simeq \frac{k_BT_{\mathrm{bath}}}{2}
        \operatorname{Tr}\!\bigl((C_{YY}^b)^{-1}C_{YX}^b(C_{XX}^b)^{-1}C_{XY}^b\bigr).
\end{equation}
Thus the excess heat/free energy cost is quadratic, to leading order, in the boundary-induced cross-covariance.
\end{example}

\subsection{Transferring thermodynamic predictions across the boundary approximation}

Small boundary distinguishability does not automatically imply small housekeeping heat.  It does, however, control the error made by computing bounded or exponentially integrable thermodynamic observables under the separated law.

The following proposition gives an entropy inequality for thermodynamic observables.

\begin{prop}
Let $P_T$ be the true path law and $Q_T$ the conditionally separated path law.  For every measurable functional $F$ and every $\lambda>0$ such that the exponential moment is finite,
\begin{equation}\label{entropy-ineq-thermo-eq}
        \bE_{P_T}F
        \leqslant \frac1\lambda\mathcal G_T
        +\frac1\lambda\log\bE_{Q_T}e^{\lambda F}.
\end{equation}
Applying the same estimate to $-F$ gives a two-sided control.  In particular, if $F$ is a finite-time entropy-production or heat functional with controlled cumulant-generating function under $Q_T$, then the error made by replacing the true coupled ensemble by the separated boundary ensemble is controlled by $\mathcal G_T$.
\end{prop}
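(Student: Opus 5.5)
The inequality \eqref{entropy-ineq-thermo-eq} is the Donsker--Varadhan (Gibbs) variational inequality applied to the pair $(P_T,Q_T)$, together with the identification $\mathcal G_T=\Ent(P_T\mid Q_T)$ from the definition of the separated law. The plan has three steps.

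First, reduce to the case $\mathcal G_T<\infty$, since otherwise there is nothing to prove. In that case $P_T\ll Q_T$. Set $R=\dd P_T/\dd Q_T$ and assume $\bE_{Q_T}e^{\lambda F}<\infty$.

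Second, introduce the tilted probability measure
\[
        Q^\lambda(\dd\omega)=\frac{e^{\lambda F(\omega)}}{\bE_{Q_T}e^{\lambda F}}\,Q_T(\dd\omega).
\]
Since $P_T\ll Q_T$ and $Q^\lambda$ is equivalent to $Q_T$, we have $P_T\ll Q^\lambda$. The chain rule for Radon--Nikodym derivatives gives, $P_T$-almost surely,
\[
        \log\frac{\dd P_T}{\dd Q^\lambda}=\log R-\lambda F+\log\bE_{Q_T}e^{\lambda F}.
\]
Integrating against $P_T$ and using $\Ent(P_T\mid Q^\lambda)\geqslant0$ yields
\[
        \lambda\,\bE_{P_T}F\leqslant \Ent(P_T\mid Q_T)+\log\bE_{Q_T}e^{\lambda F}.
\]
Dividing by $\lambda>0$ and substituting $\mathcal G_T$ gives \eqref{entropy-ineq-thermo-eq}. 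The two-sided control follows by replacing $F$ with $-F$, which produces a lower bound on $\bE_{P_T}F$ in terms of $\log\bE_{Q_T}e^{-\lambda F}$.

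Third, the main obstacle is integrability: the expression $\bE_{P_T}F$ must make sense, and the integration of the log-density identity must be legitimate. Write $F=F^+-F^-$.
\begin{itemize}
\item The positive part is handled by the Young-type inequality $\lambda F^+\leqslant e^{\lambda F^+}-1$ combined with the elementary bound $ab\leqslant e^a+b\log b-b$ applied to $a=\lambda F^+$, $b=R$. This shows $\bE_{P_T}F^+<\infty$ whenever both $\mathcal G_T$ and the exponential moment are finite.
\item The negative part may be infinite. In that case $\bE_{P_T}F=-\infty$ and the inequality is trivial.
\item Alternatively, I would first prove the result for bounded truncations $F_n=(F\wedge n)\vee(-n)$, where all the manipulations above are clearly justified. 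I would then pass to the limit: monotone convergence handles the upper truncation, and Fatou's lemma handles the lower truncation on the $Q_T$ side. The exponential moments converge because $e^{\lambda F_n}\leqslant e^{\lambda F}+1$.
\end{itemize}

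Finally, the interpretive clause for entropy-production or heat functionals needs no further argument: it is exactly the displayed bound read with a controlled cumulant-generating function $\lambda\mapsto\log\bE_{Q_T}e^{\lambda F}$. Optimising over $\lambda$ converts the bound into an error estimate governed by $\mathcal G_T$.
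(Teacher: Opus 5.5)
Your argument is correct and is essentially the paper's proof. The paper cites the Donsker--Varadhan variational formula $\Ent(P_T\mid Q_T)=\sup_G\{\bE_{P_T}G-\log\bE_{Q_T}e^{G}\}$ with $G=\lambda F$, and your tilted-measure computation $\Ent(P_T\mid Q^\lambda)\geqslant0$ is the standard proof of exactly the half of that formula that is needed. Your integrability step (the Young bound giving $\bE_{P_T}F^+<\infty$, or the truncation argument) justifies the unbounded case, which the paper's one-line proof leaves implicit.
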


\begin{proof}
The variational representation of relative entropy gives
\[
        \Ent(P_T\mid Q_T)
        =\sup_G\left\{\bE_{P_T}G-\log\bE_{Q_T}y^G\right\}.
\]
Taking $G=\lambda F$ and rearranging yields \eqref{entropy-ineq-thermo-eq}.  Replacing $F$ by $-F$ gives the corresponding lower bound.
\end{proof}

The following remark discusses the factorisation gap versus thermodynamic cost.

\begin{remark}
The factorisation gap $\mathcal G_T$ is the information lost by cutting the residual $Y$-$X$ path coupling after conditioning on $B$.  Housekeeping heat is the energetic cost of the stationary currents that orient the path law in time.  A good boundary approximation should therefore be judged by two numbers: the statistical error $T^{-1}\mathcal G_T$ and the thermodynamic maintenance cost $\ep$.  The first says how sharply the system is separated from its environment.  The second says how much irreversible activity is required to maintain the separated nonequilibrium organisation.
\end{remark}

\section{Examples}

The following example illustrates a static Gaussian boundary.

\begin{example}
Let $(Y,B,X)$ be a centred Gaussian vector with precision
\[
        K=\begin{pmatrix}
        K_{YY}&K_{YB}&0\\
        K_{BY}&K_{BB}&K_{BX}\\
        0&K_{XB}&K_{XX}
        \end{pmatrix}.
\]
Then $K_{YX}=0$, so $Y\ci X\mid B$.  If these are time-indexed Gaussian paths and the block zero holds for the path space precision kernel, then $Y_{[0,T]}\ci X_{[0,T]}\mid B_{[0,T]}$.
\end{example}

The following example illustrates sensor-actuator diffusion.

\begin{example}
The system \eqref{sensor-actuator-system-eq} gives a directed boundary.  Conditional on $(S_{[0,T]},A_{[0,T]})=(s,a)$, the exterior law is affected only by the sensory likelihood $L_S(s\mid y,a)$, while the interior law is affected only by the active likelihood $L_A(a\mid x,s)$.  Therefore the conditional law over paths factorises.
\end{example}

The following example illustrates an equilibrium potential split.

\begin{example}
Let
\[
        \Phi(y,b,x)=\frac12|y-b|^2+\frac12|x-b|^2+U(b).
\]
Then
\[
        \rho(y,b,x)\propto e^{-\Phi(y,b,x)}
\]
satisfies $Y\ci X\mid B$.  The exterior and interior are not independent unconditionally, because both are tied to $b$, but the residual dependence vanishes after $b$ is fixed.
\end{example}

\section{Resolvability, boundary variables, and estimation}

There exists a purely probabilistic formulation in which subsystems are not given as coordinate projections in advance.  Instead they are induced by measurable maps out of an underlying random element.

The following definition introduces induced subsystems.

\begin{defn}
Let $X$ be a random element in a standard Borel space $\mathcal X$.  Let
\[
        M=\psi(X),  B=\eta(X),  N=\phi(X)
\]
be random elements in standard Borel spaces.  We say that $B$ is a Markov boundary between the induced subsystems $M$ and $N$ when
\[
        M\ci N\mid B.
\]
\end{defn}

The following definition introduces resolvable two-source decomposition.

\begin{defn}
A pair of induced subsystems $(M,N)$ is resolvable through $B$ if the joint law admits a disintegration
\[
        P_{M,B,N}(\dd m,\dd b,\dd n)
        =P_B(\dd b)P_{M\mid B=b}(\dd m)P_{N\mid B=b}(\dd n).
\]
Equivalently, after the common variable $B$ has been fixed, the residual uncertainty in the two induced subsystems is product uncertainty.
\end{defn}

The following theorem shows that resolvable decomposition is equivalent to a Markov boundary.

\begin{theorem}
Let $M,B,N$ be standard Borel random elements.  Then $B$ is a Markov boundary between $M$ and $N$, meaning $M\ci N\mid B$, if and only if $(M,N)$ is resolvable through $B$, meaning
\[
        P_{M,B,N}=P_{M\mid B}P_{N\mid B}P_B.
\]
\end{theorem}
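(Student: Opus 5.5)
This follows almost directly from the kernel factorisation criterion (the lemma on conditional independence and disintegrations). The plan is to choose a regular conditional probability $P_{M,N\mid B=b}$ by the disintegration theorem, and from it extract the marginal kernels $P_{M\mid B=b}$ and $P_{N\mid B=b}$ as its pushforwards under the coordinate projections. These marginal kernels are again measurable in $b$ and are versions of the regular conditional laws of $M$ and $N$ given $B$. Both directions then reduce to comparing the two measures
\[
        P_B(\dd b)P_{M,N\mid B=b}(\dd m,\dd n)
        \quad\text{and}\quad
        P_B(\dd b)P_{M\mid B=b}(\dd m)P_{N\mid B=b}(\dd n)
\]
on the product space.

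For the forward direction, assume $M\ci N\mid B$. The kernel factorisation criterion gives
\[
        P_{M,N\mid B=b}=P_{M\mid B=b}\otimes P_{N\mid B=b}
\]
for $P_B$-a.e. $b$. Integrating against $P_B$ and using that modifying a kernel on a $P_B$-null set does not change the integrated measure, one obtains the resolvable disintegration on rectangles $A\times C\times E$, and hence, by the $\pi$-$\lambda$ theorem, on the full product $\sigma$-field.

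For the converse, suppose $P_{M,B,N}$ admits the product disintegration with some kernels $K^M_b$ and $K^N_b$. First I check that these are themselves regular conditional laws of $M$ and $N$ given $B$. Taking $E$ to be the whole space in the disintegration gives $P(M\in A,B\in C)=\int_C K^M_b(A)\,P_B(\dd b)$, and similarly for $N$. By essential uniqueness in the disintegration theorem, $K^M_b=P_{M\mid B=b}$ and $K^N_b=P_{N\mid B=b}$ for $P_B$-a.e. $b$. The product kernel $K^M_b\otimes K^N_b$ is then a disintegration of the joint law of $(M,N)$ over $B$. Applying essential uniqueness again, it agrees a.e. with $P_{M,N\mid B=b}$, and the kernel factorisation criterion yields $M\ci N\mid B$.

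The only mildly delicate point is measurability: $b\mapsto (K^M_b\otimes K^N_b)(D)$ must be measurable for general $D$ in the product $\sigma$-field. This holds for rectangles and extends by a monotone class argument. With that verified, the product is a genuine Markov kernel, and the uniqueness argument in the converse applies.
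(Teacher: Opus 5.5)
Your proposal is correct and follows the same route as the paper. The forward direction integrates the a.e.\ kernel factorisation against $P_B$, and the converse disintegrates the factorised law over $B$ and then uses essential uniqueness together with the kernel factorisation criterion; you also make explicit steps the paper leaves implicit, namely identifying the factor kernels with the conditional marginals, measurability of the product kernel, and the $\pi$-$\lambda$ extension from rectangles.
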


\begin{proof}
This is the kernel factorisation criterion written in the language of induced subsystems.  If $M\ci N\mid B$, then for $P_B$-almost every $b$,
\[
        P_{M,N\mid B=b}=P_{M\mid B=b}\otimes P_{N\mid B=b}.
\]
Integrating this identity against $P_B(\dd b)$ gives the displayed factorisation of the joint law.  Conversely, if the joint law has the displayed factorisation, then disintegrating over $B$ gives the product conditional kernel, hence conditional independence.
\end{proof}

The following theorem establishes the unconfounded estimator.

\begin{theorem}
Let $M,B,N$ be square-integrable random elements with $N$ taking values in a Hilbert space $H_N$.  Among all square-integrable estimators of $N$ that are measurable with respect to $\sigma(B)$, the unique minimiser of mean squared error is
\[
        \widehat N_B=\bE[N\mid B].
\]
If $M\ci N\mid B$, then $M$ adds no predictive improvement beyond $B$:
\[
        \bE[N\mid M,B]=\bE[N\mid B]
\]
almost surely.  Conversely, if the last equality holds for every bounded measurable transform of $N$, then $M\ci N\mid B$.
\end{theorem}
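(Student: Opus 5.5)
The plan has three parts, taken in the order of the statement: the projection property of conditional expectation in $L^2(\Omega;H_N)$, then the forward implication from conditional independence, then the converse.

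\emph{Optimality of $\widehat N_B$.} Work in the Hilbert space $L^2(\Omega;H_N)$ of square-integrable $H_N$-valued random elements. The $\sigma(B)$-measurable elements form a closed subspace $L^2(\sigma(B);H_N)$, and the Bochner conditional expectation $\bE[\,\cdot\mid B]$ is the orthogonal projection onto it. For any $\sigma(B)$-measurable $G\in L^2$, expand
\[
\bE\|N-G\|^2=\bE\|N-\widehat N_B\|^2+\bE\|\widehat N_B-G\|^2 .
\]
The cross term $\bE\langle N-\widehat N_B,\widehat N_B-G\rangle$ vanishes by the tower property, after pulling out the $\sigma(B)$-measurable factor; for Hilbert-valued elements I would verify this coordinatewise in an orthonormal basis and sum using dominated convergence. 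Uniqueness up to a.s.\ equality follows because the second term vanishes only if $G=\widehat N_B$ a.s.

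\emph{Forward implication.} Assume $M\ci N\mid B$ and use the kernel factorisation criterion with disintegration over $B$. For bounded measurable $f$ of $M$, $g$ of $B$ and $h$ of $N$, the identity
\[
\bE[f(M)g(B)h(N)]=\bE\bigl[f(M)g(B)\,\bE[h(N)\mid B]\bigr]
\]
follows by conditioning on $B$ and using the product form of the conditional kernel. Products $f(M)g(B)$ generate $\sigma(M,B)$, so a monotone class argument gives $\bE[h(N)\mid M,B]=\bE[h(N)\mid B]$. To pass to $N$ itself, take coordinates $h=\langle N,e_k\rangle$ truncated at level $n$, apply conditional dominated convergence as $n\to\infty$ (square integrability gives integrability), and then sum over $k$ in $L^2$.

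\emph{Converse.} Suppose $\bE[h(N)\mid M,B]=\bE[h(N)\mid B]$ for every bounded measurable $h$. Then for bounded $f$,
\[
\bE[f(M)h(N)\mid B]=\bE\bigl[f(M)\,\bE[h(N)\mid M,B]\mid B\bigr]=\bE[f(M)\mid B]\,\bE[h(N)\mid B],
\]
which is the definition of conditional independence.

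The main obstacle is the Hilbert-valued technicalities: the existence of the Bochner conditional expectation and its projection property, and the limit passage in the forward direction. Separability of $H_N$ lets me reduce everything to countably many real coordinates, so I would handle these steps coordinatewise.
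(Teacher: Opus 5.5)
Your proposal is correct and follows the same three-step route as the paper: the Pythagorean identity for the orthogonal projection $\bE[\,\cdot\mid B]$ in $L^2(\Omega;H_N)$, then the forward implication via $\bE[h(N)\mid M,B]=\bE[h(N)\mid B]$, then the converse by the tower property and pulling out $f(M)$. Your version is more careful at the two points the paper passes over quickly: you derive $\bE[h(N)\mid M,B]=\bE[h(N)\mid B]$ from the product-form definition of conditional independence by a monotone class argument, and you justify the unbounded choice $h(n)=n$ through coordinatewise truncation and conditional dominated convergence.
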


\begin{proof}
The first statement is the Hilbert projection theorem.  In $L^2(\Omega;H_N)$, the closed subspace of $\sigma(B)$-measurable variables has orthogonal projection $\bE[N\mid B]$.  Hence, for every estimator $Z$ measurable with respect to $\sigma(B)$,
\[
        \bE\|N-Z\|^2
        =\bE\|N-\bE[N\mid B]\|^2+
          \bE\|\bE[N\mid B]-Z\|^2,
\]
which is uniquely minimised at $Z=\bE[N\mid B]$.

If $M\ci N\mid B$, then for every bounded measurable $h$,
\[
        \bE[h(N)\mid M,B]=\bE[h(N)\mid B]
\]
by the definition of conditional independence.  Taking $h(n)=n$ componentwise gives the estimator identity.  Conversely, if the equality holds for every bounded measurable transform $h(N)$, then for bounded measurable $f(M)$ and $h(N)$,
\[
\begin{aligned}
        \bE[f(M)h(N)\mid B]
        &=\bE[f(M)\bE[h(N)\mid M,B]\mid B]\\
        &=\bE[f(M)\bE[h(N)\mid B]\mid B]\\
        &=\bE[f(M)\mid B]\bE[h(N)\mid B].
\end{aligned}
\]
This is conditional independence.
\end{proof}

The following corollary gives the pathwise unconfounded estimator.

\begin{corollary}
Let $N=X_{[0,T]}$, $M=Y_{[0,T]}$, and $B=B_{[0,T]}$, with square-integrability in a Hilbert path space.  If $B_{[0,T]}$ is a path space Markov boundary, then the optimal path estimator of $X_{[0,T]}$ from $(Y_{[0,T]},B_{[0,T]})$ is already the estimator from the boundary path alone:
\[
        \bE[X_{[0,T]}\mid Y_{[0,T]},B_{[0,T]}]
        =\bE[X_{[0,T]}\mid B_{[0,T]}].
\]
\end{corollary}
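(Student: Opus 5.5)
The corollary is a specialisation of the unconfounded estimator theorem: take $M=Y_{[0,T]}$, $B=B_{[0,T]}$ and $N=X_{[0,T]}$, with $H_N$ the Hilbert path space in which $X_{[0,T]}$ is assumed square-integrable. Everything else is checking that the hypotheses of that theorem hold.

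The first step is the measurability set-up. We need $M$, $B$ and $N$ to be standard Borel random elements. The path spaces $\Path_T(\X_Y)$ and $\Path_T(\X_B)$ are Polish, so this is immediate for $M$ and $B$. For $N$, we regard $X_{[0,T]}$ as taking values in a separable Hilbert space such as $L^2([0,T];\R^{d_X})$. The inclusion of $C([0,T];\R^{d_X})$ into $L^2([0,T];\R^{d_X})$ is continuous and injective. By Lusin--Souslin, its image is Borel and the Borel structures agree there. Hence $\sigma(X_{[0,T]})$ is the same whether $X_{[0,T]}$ is viewed as a continuous path or as a Hilbert-space element, and the definition of the path space Markov boundary transfers unchanged.

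The second step applies the theorem. By hypothesis $Y_{[0,T]}\ci X_{[0,T]}\mid B_{[0,T]}$, so for every bounded measurable $h$ on $H_N$,
\[
\bE[h(N)\mid M,B]=\bE[h(N)\mid B].
\]
The theorem then gives $\bE[N\mid M,B]=\bE[N\mid B]$, and $\bE[N\mid M,B]$ is the $L^2$-optimal estimator from $(Y_{[0,T]},B_{[0,T]})$ by the Hilbert projection statement.

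The main obstacle is the step in the theorem's proof that takes $h(n)=n$, which is not bounded. I would handle it in one of two ways.

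\emph{Weak formulation.} For a fixed $e\in H_N$, truncate $h_k(n)=\langle n,e\rangle\one_{\{|\langle n,e\rangle|\leqslant k\}}$. Pass to the limit $k\to\infty$ using conditional dominated convergence with dominating variable $\|N\|\,\|e\|\in L^2$. This gives
\[
\langle \bE[N\mid M,B],e\rangle=\langle \bE[N\mid B],e\rangle \quad\text{a.s.}
\]
Letting $e$ range over a countable orthonormal basis and intersecting the null sets yields equality of the $H_N$-valued conditional expectations.

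\emph{Disintegration.} Alternatively, use the kernel factorisation criterion directly. By that criterion, the regular conditional law of $N$ given $(M,B)=(y,b)$ is $P_{X\mid B=b}$, which does not depend on $y$. Integrating the identity map against this kernel (Bochner integral, finite for almost every $b$ by square-integrability) gives the claim.
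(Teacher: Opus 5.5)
Your proposal is correct and takes essentially the same approach as the paper, whose proof is the one-line application of the unconfounded estimator theorem to the Hilbert-valued path $X_{[0,T]}$. Your measurability check and your truncation (or disintegration) argument for the unbounded choice $h(n)=n$ are both sound, and they supply the justification that the paper's proof of that theorem leaves implicit.
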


\begin{proof}
Apply the unconfounded estimator theorem to the Hilbert-valued path random variable $X_{[0,T]}$.
\end{proof}

\section{Initial conditions, hidden common causes, and necessity}

The following lemma shows that initial conditional independence is necessary at time zero.

\begin{lemma}
If
\[
        Y_{[0,T]}\ci X_{[0,T]}\mid B_{[0,T]},
\]
then
\[
        Y_0\ci X_0\mid B_{[0,T]}.
\]
If, in addition, the conditional law of $(Y_0,X_0)$ given the boundary path depends on the boundary path only through $B_0$, then
\[
        Y_0\ci X_0\mid B_0.
\]
\end{lemma}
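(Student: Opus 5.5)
The plan has two parts, matching the two assertions of the lemma.

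\emph{First assertion.} The evaluation maps $y\mapsto y(0)$ on $\Path_T(\X_Y)$ and $x\mapsto x(0)$ on $\Path_T(\X_X)$ are continuous, hence Borel. So $Y_0=\pi_0(Y_{[0,T]})$ and $X_0=\pi_0(X_{[0,T]})$ are measurable functions of the respective paths. Given bounded measurable $f,g$, put $\tilde f=f\circ\pi_0$ and $\tilde g=g\circ\pi_0$; these are bounded measurable functionals on path space. The definition of $Y_{[0,T]}\ci X_{[0,T]}\mid B_{[0,T]}$ then gives
\[
        \bE[f(Y_0)g(X_0)\mid B_{[0,T]}]
        =\bE[f(Y_0)\mid B_{[0,T]}]\,\bE[g(X_0)\mid B_{[0,T]}]
\]
almost surely. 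This is exactly the first claim. In kernel language, we push the product kernel \eqref{pathboundary-eq} forward under $\pi_0\times\pi_0$. A pushforward of a product measure by a product map is again a product, so the kernel factorisation criterion gives the same conclusion.

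\emph{Second assertion.} I read the hypothesis as saying that there is a Markov kernel $\kappa$ from $\X_B$ to $\X_Y\times\X_X$ such that
\[
        P_{Y_0,X_0\mid B_{[0,T]}=\beta}=\kappa_{\beta(0)}
\]
for $P_B$-almost every $\beta$. Since $\sigma(B_0)\subset\sigma(B_{[0,T]})$, the tower property shows that $\kappa_{B_0}$ is also a version of the conditional law of $(Y_0,X_0)$ given $B_0$. Its marginals $\kappa^Y_{B_0}$ and $\kappa^X_{B_0}$ are versions of the conditional laws of $Y_0$ and $X_0$ given $B_0$, and equally given $B_{[0,T]}$. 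By the first part, $\kappa_{\beta(0)}=\kappa^Y_{\beta(0)}\otimes\kappa^X_{\beta(0)}$ for $P_B$-a.e.\ $\beta$. Because the set of $b_0$ where $\kappa_{b_0}$ factorises is measurable (checked on a countable generating $\pi$-system), this becomes $\kappa_{b_0}=\kappa^Y_{b_0}\otimes\kappa^X_{b_0}$ for $P_{B_0}$-a.e.\ $b_0$. The kernel factorisation criterion then yields $Y_0\ci X_0\mid B_0$.

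An alternative route is the mixture obstruction theorem at $t=0$. The functions $m_f$ and $n_g$ there are $\sigma(B_0)$-measurable by hypothesis, so the collapse corollary applies directly.

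\emph{Main obstacle.} The hard step is the null-set bookkeeping when transferring ``almost every boundary path'' to ``almost every $b_0$''. Here the measurability of the factorisation set, and the identification of the marginals of $\kappa$ with the conditional marginals given $B_0$, need care. The cleanest fix is to route through conditional expectations of $f(Y_0)$, $g(X_0)$ and $f(Y_0)g(X_0)$ and use the tower property, which avoids ever comparing kernels pointwise.
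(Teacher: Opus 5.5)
Your proof is correct and follows the paper's argument. The first part is the same transfer of conditional independence under the time-zero evaluation maps. For the second part, the paper uses exactly your ``alternative route'': the mixture/collapse argument at $t=0$, where no residual covariance appears because the conditional kernels are already $\sigma(B_0)$-measurable, and your kernel-level bookkeeping via the measurable factorisation set is just a more explicit rendering of that idea.
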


\begin{proof}
The first statement follows by applying measurable maps to conditionally independent random elements.  If $X\ci Y\mid Z$, then $f(X)\ci g(Y)\mid Z$ for measurable $f,g$.  Here take $f(y_{[0,T]})=y_0$ and $g(x_{[0,T]})=x_0$.  For the second statement, write the conditional kernel of $(Y_0,X_0)$ given $B_{[0,T]}$ as a kernel depending only on $B_0$.  The product factorisation over the larger conditioning sigma-field then descends to the smaller one by the same mixture argument used in the fixed-time theorem, with no residual mixture covariance because the conditional kernels are already $\sigma(B_0)$-measurable.
\end{proof}

The following proposition gives the hidden common cause obstruction.

\begin{prop}
Let $U$ be a random element independent of $B$, and suppose that conditional on $(B,U)$ the exterior and interior paths factorise:
\[
        Y_{[0,T]}\ci X_{[0,T]}\mid (B_{[0,T]},U).
\]
Then they factorise conditional on $B_{[0,T]}$ if and only if for every bounded measurable $f,g$,
\[
        \Cov\bigl(
        \bE[f(Y_{[0,T]})\mid B_{[0,T]},U],
        \bE[g(X_{[0,T]})\mid B_{[0,T]},U]
        \mid B_{[0,T]}
        \bigr)=0.
\]
In particular, a common unobserved driver $U$ generally destroys the boundary property unless its effect is itself screened off by the observed boundary path.
\end{prop}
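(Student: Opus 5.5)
The argument is the mixture obstruction theorem again, with the larger $\sigma$-field $\sigma(B_{[0,T]},U)$ in place of $\sigma(B_{[0,T]})$ and the smaller $\sigma(B_{[0,T]})$ in place of $\sigma(B_t)$. Write $\mathcal G=\sigma(B_{[0,T]},U)$ and $\mathcal H=\sigma(B_{[0,T]})\subset\mathcal G$. For bounded measurable $f,g$ on the path spaces, set
\[
        m_f=\bE[f(Y_{[0,T]})\mid\mathcal G],\qquad n_g=\bE[g(X_{[0,T]})\mid\mathcal G].
\]
By the hypothesis $Y_{[0,T]}\ci X_{[0,T]}\mid (B_{[0,T]},U)$ and the definition of conditional independence,
\[
        \bE[f(Y_{[0,T]})g(X_{[0,T]})\mid\mathcal G]=m_fn_g
\]
almost surely.

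First apply the tower property with $\mathcal H\subset\mathcal G$. This gives $\bE[f(Y_{[0,T]})g(X_{[0,T]})\mid\mathcal H]=\bE[m_fn_g\mid\mathcal H]$. It also gives $\bE[f(Y_{[0,T]})\mid\mathcal H]=\bE[m_f\mid\mathcal H]$ and $\bE[g(X_{[0,T]})\mid\mathcal H]=\bE[n_g\mid\mathcal H]$. Subtracting the product of the last two from the first yields, almost surely,
\[
        \bE[f g\mid\mathcal H]-\bE[f\mid\mathcal H]\bE[g\mid\mathcal H]
        =\Cov(m_f,n_g\mid\mathcal H).
\]
Conditional independence given $B_{[0,T]}$ means, by definition, that the left-hand side vanishes for all bounded measurable $f,g$. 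That is exactly the stated covariance condition, which proves the equivalence in both directions. No disintegration is needed: since $f,g$ are bounded, the conditional covariance is well defined and all the manipulations are in $L^1$.

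For the final qualitative claim, note that the covariance vanishes when at least one of $m_f$, $n_g$ is already $\mathcal H$-measurable. This is the case where $U$'s influence on that side is screened off by the boundary path, just as in the sufficient collapse corollary. When both $m_f$ and $n_g$ genuinely depend on $U$ in a correlated way, the conditional covariance is generically nonzero. The independence of $U$ from $B$ is not needed for the equivalence itself. It only serves the interpretation, because it makes the conditional law of $U$ given $B_{[0,T]}$ equal to $\Law(U)$, so the covariance is an average over an unobserved driver that is independent of the boundary.

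The only delicate step is confirming that conditional independence given $\mathcal G$ gives the product identity for path-valued random elements with arbitrary bounded $f,g$. That is immediate from the definition used in the paper, so I expect no real obstacle.
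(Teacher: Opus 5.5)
Your proposal is correct and follows the paper's proof, which reruns the mixture obstruction argument: the product identity holds given $\sigma(B_{[0,T]},U)$, and the tower property down to $\sigma(B_{[0,T]})$ turns the conditional-independence defect into $\Cov(m_f,n_g\mid B_{[0,T]})$. Your side remark is also accurate: the independence of $U$ from $B$ plays no role in the equivalence itself.
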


\begin{proof}
The proof is identical to the mixture obstruction theorem.  Conditional on $(B,U)$, the product identity holds.  Taking conditional expectation down to $B$ produces a mixture of products.  This mixture is a product exactly when the conditional covariance of the two conditional means vanishes for all test functions.
\end{proof}

The following theorem gives necessary and sufficient conditions across regimes.

\begin{theorem}
For the decomposition $(Y,B,X)$, the following are necessary and sufficient in their respective regimes.
\begin{enumerate}
\item General standard Borel path laws: the conditional kernel $P^b_{Y,X}$ factorises for $P_B$-almost every $b$.
\item Dominated path laws: the conditional Radon--Nikodym derivative $\dd P^b_{Y,X}/\dd(Q_Y^b\otimes Q_X^b)$ is multiplicatively separable.
\item Brownian innovation dominated It\=o laws: the Girsanov log-likelihood of the boundary path has no irreducible mixed exterior-interior term after the boundary path is fixed.
\item Conditionally Markov clamped laws with unique martingale problem: the conditional generator splits into exterior and interior generators.
\item Gaussian path laws: the conditional covariance block $\Sigma_{YX\mid B}$ vanishes, equivalently the precision block $\mathcal K_{YX}$ vanishes.
\item Smooth positive fixed-time densities: the conditional density factorises, equivalently the mixed exterior-interior Hessian of the conditional log-density vanishes on connected fibres.
\end{enumerate}
\end{theorem}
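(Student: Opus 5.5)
This is a summary theorem, so the plan is to check each of the six items against the earlier result that already proves it. In each case we note the hypotheses under which that earlier result holds, and we treat those hypotheses as what ``in their respective regimes'' means. All items share one backbone: the kernel factorisation criterion, which says $Y_{[0,T]}\ci X_{[0,T]}\mid B_{[0,T]}$ holds exactly when $P^b_{Y,X}=P^b_Y\otimes P^b_X$ for $P_B$-a.e. $b$. That criterion is item (1) itself, on the Polish path spaces, where regular conditional probabilities exist by the disintegration theorem.

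Item (2) follows from the path space factorisation theorem, which was obtained by applying the Radon--Nikodym separability lemma fibrewise. Item (3) is the It\=o factorisation theorem in Bayes--Girsanov form, together with the corollary for the Bayes form. The density is $L_B/Z(b)$; the constant $Z(b)$ does not affect multiplicative separability, and where $L_B>0$ we may take logarithms. We then make precise that ``no irreducible mixed term'' means $\log L_B=A_b(y)+C_b(x)+K(b)$ off a null set. In the Girsanov observation model, this reduces further to additive separability of the mixed energy term, by the proposition on mixed likelihood terms. Item (4) is the generator splitting theorem, under uniqueness of the martingale problem for the clamped conditional generator.

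Item (5) is the Gaussian conditional independence theorem applied to the Gaussian path law viewed as a Gaussian element of a separable Hilbert space. On finite-dimensional time grids, the three conditions (conditional independence, $\Sigma_{YX\mid B}=0$, $K_{YX}=0$) are equivalent. The theorem asserts the Hilbert version under non-degeneracy on the relevant support, and we invoke that form. Item (6) is the fixed-time density proposition: factorisation of $p_t(y,x\mid b)$ is equivalent to vanishing mixed Hessian on connected fibres, via integration along $y$ and then $x$.

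The main obstacle is item (5) in genuinely infinite dimensions. There the precision operator $\Sigma^{-1}$ is typically unbounded, and the block $\mathcal K_{YX}$ must be interpreted on the Cameron--Martin space. I would handle this first through conditional covariance. By the Gaussian conditioning theorem, the conditional law of $(Y,X)$ given $B=b$ is Gaussian with covariance given by the Schur complement, so conditional independence is exactly $\Sigma_{YX\mid B}=0$. This needs only bounded operators and the Moore--Penrose inverse on the support. For the precision statement, I would then pass through finite-dimensional cylindrical projections. There $K_{YX}=0$ if and only if the conditional covariance is block diagonal, by block inversion of the Schur complement. Consistency of the projections then gives the path space statement, since product structure of Gaussian laws is determined by the finite-dimensional marginals.

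A secondary point to state explicitly is the domination hypothesis. In items (2) and (3) it must hold for $P_B$-a.e. $b$, since the Radon--Nikodym derivative is only defined under that assumption; otherwise the criteria are vacuous.
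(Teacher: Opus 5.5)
\paragraph{Overall.} Your item-by-item reduction is the paper's proof. The paper simply notes that each item is the earlier theorem in its regime, with item (1) as the invariant content. Up to the point where you invoke the Hilbert form of the Gaussian conditional independence theorem, there is nothing to add.

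\paragraph{The gap in item (5).} The problem is the extra argument you give for item (5), which you single out as the main obstacle. You pass to finite-dimensional cylindrical projections $\pi_n$ of the whole triple, use $K_{YX}=0$ there, and conclude by consistency. This fails because projecting the boundary does not commute with conditioning on it.

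Marginalising a Gaussian vector replaces its precision by a Schur complement. If $\perp$ indexes the discarded coordinates, the retained mixed block is $K_{YX}-K_{Y\perp}K_{\perp\perp}^{-1}K_{\perp X}$. This fills in whenever discarded boundary directions couple to both sides, even when $K_{YX}=0$. Equivalently, the finite-dimensional laws you would compute are conditioned on $\pi_nB$, so they are not finite-dimensional marginals of $P^b_{Y,X}$. Consistency of Gaussian finite-dimensional distributions therefore says nothing about them. This is exactly the mixture obstruction of the paper's fixed-time theorem.

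For a concrete failure, let $B$ be a standard Brownian motion, let $\xi,\eta$ be independent standard normals independent of $B$, and set
\[
Y_t\equiv\int_0^TB_s\,\dd s+\xi,\qquad X_t\equiv\int_0^TB_s\,\dd s+\eta.
\]
Then $Y_{[0,T]}\ci X_{[0,T]}\mid B_{[0,T]}$. Yet for every grid $0=t_0<\dots<t_n=T$, the Brownian bridge pieces give
\[
\Cov(Y_0,X_0\mid B_{t_1},\dots,B_{t_n})=\sum_i(t_{i+1}-t_i)^3/12>0 .
\]
So the mixed entry of the precision of $(Y_0,B_{t_1},\dots,B_{t_n},X_0)$ is nonzero on every grid. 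Projected precision zeros are therefore sufficient but not necessary; sufficiency holds because the conditional covariances converge by martingale convergence.

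\paragraph{How to repair it.} Do not project $B$. Keep the full-path conditioning from the Gaussian conditioning theorem, so that $P^b_{Y,X}$ is Gaussian with covariance $C=\Sigma_{(Y,X)\mid B}$. Read $\mathcal K_{YX}$ as the off-diagonal block of the conditional precision $C^{-1}$, that is, the cross term of the Cameron--Martin quadratic form of $C$. In finite dimensions this is the $(Y,X)$-principal block of the joint precision. Then project only $(Y,X)$, using block-diagonal projections.

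These are genuine finite-dimensional marginals of $P^b_{Y,X}$. At each stage, block-diagonality of a covariance and of its inverse are equivalent. Vanishing of all the finite cross-covariances is equivalent to $C_{YX}=0$, hence to factorisation of $P^b_{Y,X}$.
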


\begin{proof}
Each item is the theorem already proved in its appropriate category.  The point of collecting them is to emphasise that there exists no single coordinate-level test independent of the probabilistic regime.  The invariant content is always the first item: factorisation of a regular conditional probability.  The other items are representations of that same factorisation under domination, stochastic calculus, Markov uniqueness, Gaussianity, or smooth density assumptions.
\end{proof}

\section{Block Gaussian calculations}

The Gaussian path criterion is useful because it reduces conditional independence to a Schur complement or a precision block.  We spell out the finite-dimensional calculation because the path space statement is its cylindrical limit.

The following lemma gives the Schur complement conditional covariance.

\begin{lemma}
Let
\[
        Z=\begin{pmatrix}Y\\ B\\ X\end{pmatrix}
\]
be centred Gaussian with covariance
\[
        \Sigma=\begin{pmatrix}
        \Sigma_{YY}&\Sigma_{YB}&\Sigma_{YX}\\
        \Sigma_{BY}&\Sigma_{BB}&\Sigma_{BX}\\
        \Sigma_{XY}&\Sigma_{XB}&\Sigma_{XX}
        \end{pmatrix}.
\]
Then the conditional covariance of $(Y,X)$ given $B$ is
\[
        \Sigma_{(Y,X)\mid B}
        =\begin{pmatrix}
        \Sigma_{YY}&\Sigma_{YX}\\
        \Sigma_{XY}&\Sigma_{XX}
        \end{pmatrix}
        -
        \begin{pmatrix}
        \Sigma_{YB}\\ \Sigma_{XB}
        \end{pmatrix}
        \Sigma_{BB}^{-1}
        \begin{pmatrix}
        \Sigma_{BY}&\Sigma_{BX}
        \end{pmatrix}.
\]
Thus the mixed conditional covariance block is
\[
        \Sigma_{YX\mid B}=\Sigma_{YX}-\Sigma_{YB}\Sigma_{BB}^{-1}\Sigma_{BX}.
\]
\end{lemma}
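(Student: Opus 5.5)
This is a direct instance of the Gaussian conditioning theorem proved earlier, with $B$ as the observed block and $(Y,X)$ as the unobserved block. Implicitly we assume $\Sigma_{BB}$ is invertible, which the displayed formula with $\Sigma_{BB}^{-1}$ requires. If it is only invertible on the support, the Moore--Penrose version from that theorem applies verbatim.

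First, regroup the coordinates as $(W,B)$ with $W=(Y,X)$. Under this permutation the covariance becomes
\[
        \begin{pmatrix}\Sigma_{WW}&\Sigma_{WB}\\ \Sigma_{BW}&\Sigma_{BB}\end{pmatrix},
        \qquad
        \Sigma_{WW}=\begin{pmatrix}\Sigma_{YY}&\Sigma_{YX}\\ \Sigma_{XY}&\Sigma_{XX}\end{pmatrix},
        \qquad
        \Sigma_{WB}=\begin{pmatrix}\Sigma_{YB}\\ \Sigma_{XB}\end{pmatrix},
\]
and $\Sigma_{BW}=\Sigma_{WB}^\top=(\Sigma_{BY}\ \Sigma_{BX})$ by symmetry of $\Sigma$. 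The Gaussian conditioning theorem then gives that $W\mid B=b$ is Gaussian with covariance $\Sigma_{WW}-\Sigma_{WB}\Sigma_{BB}^{-1}\Sigma_{BW}$, which is exactly the displayed matrix.

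To make this self-contained rather than only a citation, repeat the innovation argument. Set $A=\Sigma_{WB}\Sigma_{BB}^{-1}$ and $\xi=W-AB$. Then $\Cov(\xi,B)=\Sigma_{WB}-A\Sigma_{BB}=0$. Joint Gaussianity of $(\xi,B)$ upgrades zero correlation to independence. The conditional covariance of $W$ given $B$ is therefore $\Cov(\xi)=\Sigma_{WW}-A\Sigma_{BW}-\Sigma_{WB}A^\top+A\Sigma_{BB}A^\top$, and the last three terms collapse to $-\Sigma_{WB}\Sigma_{BB}^{-1}\Sigma_{BW}$.

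Finally, read off the off-diagonal block of the product
\[
        \begin{pmatrix}\Sigma_{YB}\\ \Sigma_{XB}\end{pmatrix}\Sigma_{BB}^{-1}\bigl(\Sigma_{BY}\ \ \Sigma_{BX}\bigr).
\]
Its $(Y,X)$ entry is $\Sigma_{YB}\Sigma_{BB}^{-1}\Sigma_{BX}$, so the mixed block is $\Sigma_{YX\mid B}=\Sigma_{YX}-\Sigma_{YB}\Sigma_{BB}^{-1}\Sigma_{BX}$.

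There is no real obstacle here. The only points needing care are the bookkeeping of the coordinate permutation, which must preserve the block structure, and the invertibility caveat on $\Sigma_{BB}$.
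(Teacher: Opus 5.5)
Your proposal is correct and follows essentially the same route as the paper. Both arguments subtract the best linear predictor $\Sigma_{WB}\Sigma_{BB}^{-1}B$, observe that the Gaussian residual is uncorrelated with $B$ and hence independent of it, identify its covariance as the Schur complement, and read off the $(Y,X)$ block. Your explicit expansion of $\Cov(\xi)$ and your caveat about invertibility of $\Sigma_{BB}$ simply fill in details the paper leaves implicit.
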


\begin{proof}
The conditional mean of $(Y,X)$ given $B=b$ is
\[
        \begin{pmatrix}
        \Sigma_{YB}\\\Sigma_{XB}
        \end{pmatrix}
        \Sigma_{BB}^{-1}b.
\]
Subtracting this best linear predictor gives a Gaussian residual uncorrelated with $B$, hence independent of $B$.  Its covariance is exactly the displayed Schur complement.  The off-diagonal block gives $\Sigma_{YX\mid B}$.
\end{proof}

The following lemma gives the precision block criterion.

\begin{lemma}
Let $K=\Sigma^{-1}$ and write it in blocks corresponding to $(Y,B,X)$.  Then
\[
        Y\ci X\mid B
\]
if and only if $K_{YX}=0$.
\end{lemma}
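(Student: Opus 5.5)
The plan is to combine the preceding Schur complement lemma with block inversion of $\Sigma$, under the standing non-degeneracy assumption that $\Sigma$ is invertible. Group the coordinates as $W=(Y,X)$ and $B$, reorder the blocks so that $B$ comes last, and apply the standard two-block inversion formula. It states that the $W$-principal block of $K=\Sigma^{-1}$ equals the inverse of the Schur complement of $\Sigma_{BB}$:
\[
        \begin{pmatrix} K_{YY}&K_{YX}\\ K_{XY}&K_{XX}\end{pmatrix}
        =\Sigma_{(Y,X)\mid B}^{-1}.
\]
This is a permutation-invariant statement, because reordering coordinates conjugates both $\Sigma$ and $K$ by the same permutation matrix. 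It therefore does not matter that $B$ sits in the middle of the ordering $(Y,B,X)$. By the preceding lemma, the right-hand side is the inverse of the conditional covariance of $(Y,X)$ given $B$.

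Next, reduce conditional independence to a block condition on this conditional covariance. By the Gaussian conditioning theorem, the law of $(Y,X)$ given $B=b$ is Gaussian with covariance $\Sigma_{(Y,X)\mid B}$, which does not depend on $b$, and with a mean that is linear in $b$. A Gaussian vector has independent subvectors if and only if its cross-covariance block vanishes. By the kernel factorisation criterion, $Y\ci X\mid B$ is therefore equivalent to $\Sigma_{YX\mid B}=0$. This is exactly the equivalence of (1) and (2) in the Gaussian conditional independence theorem, which I would simply cite.

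The remaining step, which is the only real content, is to show that for a symmetric positive definite $2\times2$ block matrix $S$, the off-diagonal block $S_{YX}$ vanishes if and only if the off-diagonal block of $S^{-1}$ vanishes. If $S_{YX}=0$, then $S$ is block diagonal, so its inverse is block diagonal. Conversely, apply the same argument to $S^{-1}$, whose inverse is $S$. Taking $S=\Sigma_{(Y,X)\mid B}$, which is positive definite as a principal Schur complement of the positive definite matrix $\Sigma$, we get $\Sigma_{YX\mid B}=0$ if and only if $K_{YX}=0$, which is the claim.

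I expect the main obstacle to be bookkeeping rather than ideas: justifying the block-inverse identity for the middle-placed $B$ block, and ensuring $\Sigma_{BB}$ and the Schur complement are invertible. Both follow from positive definiteness of $\Sigma$. As a cross-check, I would also include the density argument already used in the Gaussian conditional independence theorem. For fixed $b$, the term coupling $y$ and $x$ in the quadratic form $z^\top Kz$ is $2y^\top K_{YX}x$. The conditional density therefore factorises if and only if this bilinear form vanishes identically, which happens if and only if $K_{YX}=0$.
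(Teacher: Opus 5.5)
Your proof is correct, but your main argument is not the one the paper uses for this lemma.

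\textbf{The paper's route.} The paper's proof is exactly your closing cross-check. It writes the Gaussian density in terms of the precision $K$. For fixed $b$, every term in the exponent is a function of $(y,b)$, of $(x,b)$, or of $b$ alone, except the bilinear term $y^\top K_{YX}x$. Hence $p(y,x\mid b)$ factorises if and only if $K_{YX}=0$.

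\textbf{Your route.} Your primary argument has three steps. You identify the $(Y,X)$-principal block of $K$ with $\Sigma_{(Y,X)\mid B}^{-1}$ by block inversion. You reduce $Y\ci X\mid B$ to $\Sigma_{YX\mid B}=0$ through the Gaussian conditioning theorem. You then use that a positive definite two-block matrix is block diagonal if and only if its inverse is. This is essentially the argument the paper gives earlier for (2)$\Leftrightarrow$(3) in the Gaussian conditional independence theorem, not its proof of this lemma.

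\textbf{What the density argument buys.} It is shorter. It is also the Gaussian instance of the paper's log-likelihood separation theme: the mixed Hessian $\partial_{y_a}\partial_{x_r}\log p(y,x\mid b)$ is the constant $-(K_{YX})_{ar}$, so the fixed-time mixed-Hessian criterion applies directly. Its ``only if'' direction, however, tacitly uses that an additively separable bilinear form must vanish. You should state this step if you rely on that argument, for instance via that mixed Hessian.

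\textbf{What your route buys.} It avoids densities altogether and handles both directions by pure linear algebra. It also gives more: it identifies the whole conditional precision of $(Y,X)$ given $B$, and proves the equivalence with the conditional covariance criterion $\Sigma_{YX\mid B}=0$ at the same time. The cost is reliance on the Gaussian conditioning theorem, the preceding Schur complement lemma, and the block-inversion formula.

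The bookkeeping you flag is handled correctly. Permutation conjugation handles the middle position of $B$, and positive definiteness of $\Sigma$ gives invertibility of $\Sigma_{BB}$ and of the Schur complement.
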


\begin{proof}
The Gaussian density is
\[
\begin{aligned}
        p(y,b,x)\propto\exp\bigl[-\tfrac12(&y^\top K_{YY}y+b^\top K_{BB}b+x^\top K_{XX}x\\
        &+2y^\top K_{YB}b+2b^\top K_{BX}x+2y^\top K_{YX}x)\bigr].
\end{aligned}
\]
For fixed $b$, all terms except $y^\top K_{YX}x$ are either functions of $(y,b)$, functions of $(x,b)$, or functions of $b$ alone.  Therefore $p(y,x\mid b)$ factorises exactly when the mixed bilinear term vanishes, namely $K_{YX}=0$.
\end{proof}

\begin{remark}
In graphical Gaussian models this is the familiar precision-zero criterion.  On path space the same statement holds with $K$ replaced by a covariance inverse or precision kernel.  For Markov diffusions this precision may be a differential operator in time.
\end{remark}

\section{Convergence theorems for conditional kernels}

One can now show that convergence in total variation preserves factorisation.

\begin{theorem}
Let $P_n\in\Prob(X\times B\times Y)$ and $P\in\Prob(X\times B\times Y)$ be probability measures on standard Borel spaces.  Suppose
\[
        \|P_n-P\|_{\mathrm{TV}}\to0
\]
and suppose each $P_n$ satisfies $X\ci Y\mid B$.  Then every subsequential limit of the joint laws is conditionally independent; in particular $P$ satisfies $X\ci Y\mid B$.
\end{theorem}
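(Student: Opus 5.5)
We work with a characterisation of conditional independence that involves only integrals against the joint law, because such a characterisation passes to limits under total variation. For bounded measurable $f$ on $X$, $g$ on $Y$ and $h$ on $B$, the relation $X\ci Y\mid B$ under a law $R$ is equivalent to
\[
        \int f(x)g(y)h(b)\,\dd R=\int \bE_R[f\mid B](b)\,\bE_R[g\mid B](b)\,h(b)\,\dd R.
\]
The conditional expectations appearing on the right depend on $R$, so we cannot take limits termwise. We therefore reformulate the property so that all the $n$-dependence sits inside integrals against $P_n$. A convenient form is the following: $X\ci Y\mid B$ under $R$ holds if and only if $R=R_{X\mid B}\otimes R_{Y\mid B}\otimes R_B$, which is the kernel factorisation criterion together with the resolvability theorem. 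We then show that the map $R\mapsto \tilde R:=R_{X\mid B}\otimes R_{Y\mid B}\otimes R_B$ is continuous in total variation. Granting this, $P_n=\tilde P_n\to\tilde P$ and $P_n\to P$ force $P=\tilde P$, so $P$ satisfies $X\ci Y\mid B$. Total variation limits are unique, so the statement about subsequential limits reduces to the statement about $P$.

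The main step is continuity of $R\mapsto\tilde R$, and we plan to prove it using densities with respect to a common dominating measure. Set $\mu=P+\sum_n 2^{-n}P_n$. Then $P$ and every $P_n$ are absolutely continuous with respect to $\mu$. We also use the $B$-marginal $\mu_B$, the conditional kernels $\mu(\dd x\mid b)$ and $\mu(\dd y\mid b)$, and the product measure $\bar\mu=\mu(\dd x\mid b)\mu(\dd y\mid b)\mu_B(\dd b)$. The aim is a direct bound of the form
\[
        \|\tilde R-\tilde R'\|_{\mathrm{TV}}\leqslant C\|R-R'\|_{\mathrm{TV}}.
\]
We would avoid densities altogether and argue by a coupling-type estimate. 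Write $\tilde R(\dd x,\dd y,\dd b)=R_{X\mid b}(\dd x)R_{Y\mid b}(\dd y)R_B(\dd b)$. Since $R_{X\mid b}(\dd x)R_B(\dd b)=R_{X,B}$, we obtain, for $|\phi|\leqslant 1$,
\[
        \int\phi\,\dd\tilde R=\int \Bigl(\int\phi(x,y,b)R_{Y\mid b}(\dd y)\Bigr)R_{X,B}(\dd x,\dd b).
\]
We then telescope:
\[
        \tilde R-\tilde R'=\bigl[R_{X,B}R_{Y\mid B}-R'_{X,B}R_{Y\mid B}\bigr]+\bigl[R'_{X,B}R_{Y\mid B}-R'_{X,B}R'_{Y\mid B}\bigr].
\]
The first bracket is bounded by $\|R_{X,B}-R'_{X,B}\|_{\mathrm{TV}}\leqslant\|R-R'\|_{\mathrm{TV}}$. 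The second bracket equals
\[
        \int\|R_{Y\mid b}-R'_{Y\mid b}\|_{\mathrm{TV}}\,R'_B(\dd b).
\]
We expect this to be the real obstacle, because conditional kernels are not continuous functions of the joint law in general. To handle it, we replace $R'_B$ by $R_B$ at a cost of $\|R_B-R'_B\|$. We then use the identity
\[
        R_{Y\mid b}(\dd y)R_B(\dd b)-R'_{Y\mid b}(\dd y)R_B(\dd b)=(R_{Y,B}-R'_{Y,B})+R'_{Y\mid b}(\dd y)\,(R'_B-R_B)(\dd b).
\]
Taking total variations gives
\[
        \int\|R_{Y\mid b}-R'_{Y\mid b}\|_{\mathrm{TV}}\,R_B(\dd b)\leqslant 2\|R-R'\|_{\mathrm{TV}}.
\]
The point is that the total variation of a measure of the form $\int\nu_b\,\rho(\dd b)$ on $Y\times B$, with $\nu_b$ signed measures on $Y$, equals $\int\|\nu_b\|\,\rho(\dd b)$. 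We would verify this via a measurable selection of Hahn sets, or equivalently via densities with respect to $\bar\mu$. Combining the estimates yields Lipschitz continuity with an absolute constant, roughly $C\leqslant 4$.

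With this estimate we conclude as follows:
\[
        \|P-\tilde P\|_{\mathrm{TV}}\leqslant\|P-P_n\|_{\mathrm{TV}}+\|\tilde P_n-\tilde P\|_{\mathrm{TV}}\leqslant(1+C)\|P_n-P\|_{\mathrm{TV}}\to0.
\]
Hence $P=\tilde P$, and the kernel factorisation criterion gives $X\ci Y\mid B$ under $P$. The same argument applies to any subsequence of $(P_n)$, which settles the claim about subsequential limits.
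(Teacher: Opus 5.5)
Your proof is correct, but it takes a different route from the paper. The paper argues fibrewise. It asserts, without proof, that along a subsequence the disintegration kernels of $P_n$ converge to those of $P$ on a countable determining class for $P_B$-almost every $b$. It then passes the rectangle identity $P_{n,(X,Y)\mid b}=P_{n,X\mid b}\otimes P_{n,Y\mid b}$ to the limit. You instead show that the conditional-product map $R\mapsto\tilde R=R_{X\mid B}\otimes R_{Y\mid B}\otimes R_B$ is Lipschitz in total variation, and conclude from $P_n=\tilde P_n$ and uniqueness of limits. Your key estimate $\int\|R_{Y\mid b}-R'_{Y\mid b}\|_{\mathrm{TV}}\,R_B(\dd b)\leqslant 2\|R-R'\|_{\mathrm{TV}}$ is exactly the ingredient the paper's sketch leaves out. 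Applied with $R=P$, $R'=P_n$, and also with $Y$ replaced by $X$ or by $X\times Y$, it gives $L^1(P_B)$-convergence of the kernels in total variation, hence $P_B$-a.e. convergence along a subsequence. Your packaging buys more than that:
\begin{itemize}
\item no subsequence extraction is needed;
\item you use $P_n=\tilde P_n$ only as an identity of measures, so you never have to handle the fibrewise factorisation of $P_n$ failing on $P_{n,B}$-null sets that $P_B$ charges;
\item you get the stability bound $\|P-\tilde P\|_{\mathrm{TV}}\leqslant(1+C)\|P-P_n\|_{\mathrm{TV}}+\|P_n-\tilde P_n\|_{\mathrm{TV}}$, so the factorisation defect itself is continuous in total variation.
\end{itemize}
The paper's route, once completed, has the merit of staying in the fibrewise kernel language used throughout. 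Two small points. In your mixture identity only the inequality $\int\|\nu_b\|\,\rho(\dd b)\leqslant\|\int\nu_b\,\rho(\dd b)\|$ is non-trivial. It needs a Hahn set for $\nu_b$ chosen measurably in $b$, for instance from a jointly measurable version of $\dd R_{Y\mid b}/\dd(R_{Y\mid b}+R'_{Y\mid b})$, which exists because the spaces are standard Borel. Also, the dominating measures $\mu$ and $\bar\mu$ you set up at the start are never used and can be dropped.
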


\begin{proof}
For bounded measurable $f,g,h$, conditional independence under $P_n$ gives
\[
        \int f(x)g(y)h(b)\,\dd P_n
        =\int h(b)
        \bE_{P_n}[f(X)\mid B=b]\bE_{P_n}[g(Y)\mid B=b]P_{n,B}(\dd b).
\]
A more invariant way is to use conditional mutual information.  Since $P_n$ factorises conditionally, the relative entropy
\[
        \Ent(P_n\mid P_{n,X\mid B}P_{n,Y\mid B}P_{n,B})=0.
\]
Total variation convergence gives convergence of integrals against bounded test functions.  Passing to a subsequence, disintegration kernels may be chosen to converge weakly on a countable determining class.  The product identity on rectangles passes to the limit, giving
\[
        P_{X,Y\mid B=b}=P_{X\mid B=b}P_{Y\mid B=b}
\]
for $P_B$-almost every $b$.  Hence $P$ is conditionally independent.
\end{proof}

\begin{theorem}
Let $(X_t)_{t\geqslant0}=(Y_t,B_t,X_t)$ be a Markov process with invariant law $\mu$.  Suppose there exists a rate $r(t)\to0$ such that for every initial law $\nu$ in a class $\mathcal C$,
\[
        \|\nu P_t-\mu\|_{\mathrm{TV}}\leqslant r(t).
\]
If $\mu$ satisfies $Y\ci X\mid B$, then the time-$t$ law is within $r(t)$ in total variation of a conditionally factorised law, namely $\mu$.  Consequently, for bounded observables $|f|,|g|,|h|\leqslant 1$,
\[
\left|
\bE_\nu[f(Y_t)g(X_t)h(B_t)]
-
\int h(b)\bE_\mu[f(Y)\mid B=b]\bE_\mu[g(X)\mid B=b]\mu_B(\dd b)
\right|
\leqslant2r(t).
\]
\end{theorem}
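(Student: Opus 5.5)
The plan is to treat the statement as a direct consequence of the total variation hypothesis, applied to a single bounded test function on $\X_Y\times\X_B\times\X_X$. First, the law $\nu P_t$ of $Z_t=(Y_t,B_t,X_t)$ started from $\nu\in\mathcal C$ satisfies $\|\nu P_t-\mu\|_{\mathrm{TV}}\leqslant r(t)$ by hypothesis. Since $\mu$ satisfies $Y\ci X\mid B$, it is a conditionally factorised law, and this already gives the first assertion.

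For the quantitative bound, I would take $F(y,b,x)=f(y)g(x)h(b)$, which is bounded by $1$. For probability measures $\alpha,\beta$, the standard estimate is
\[
        \Bigl|\int F\,\dd\alpha-\int F\,\dd\beta\Bigr|\leqslant 2\|F\|_\infty\|\alpha-\beta\|_{\mathrm{TV}}
\]
when $\|\cdot\|_{\mathrm{TV}}$ is normalised as $\sup_A|\alpha(A)-\beta(A)|$. Equivalently, $|\int F\,\dd(\alpha-\beta)|\leqslant\|F\|_\infty|\alpha-\beta|(\text{whole space})$ and $|\alpha-\beta|(\text{whole space})=2\sup_A|\alpha(A)-\beta(A)|$. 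This accounts for the factor $2$. Applying it with $\alpha=\nu P_t$ and $\beta=\mu$ gives
\[
        \bigl|\bE_\nu[f(Y_t)g(X_t)h(B_t)]-\bE_\mu[f(Y)g(X)h(B)]\bigr|\leqslant 2r(t).
\]

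It then remains to rewrite $\bE_\mu[f(Y)g(X)h(B)]$ in the displayed factorised form. To do this I would condition on $B$ and pull out $h(B)$. I would then apply the kernel factorisation criterion (equivalently, the definition of conditional independence) to get $\bE_\mu[f(Y)g(X)\mid B]=\bE_\mu[f(Y)\mid B]\,\bE_\mu[g(X)\mid B]$ $\mu_B$-a.s. Integrating against $\mu_B$ via the regular conditional probabilities from the disintegration theorem yields exactly the integral in the statement.

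The only point needing care is the TV normalisation convention, since the paper does not fix one. I would state the convention $\sup_A$, so that the constant $2$ is sharp and correct. Under the convention that uses the full variation norm, the same argument gives the bound with constant $1$, which is stronger, so the stated inequality holds either way. No other obstacle is expected.
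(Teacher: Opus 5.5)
Your proposal is correct and matches the paper's proof: you apply the total variation bound, with the $\sup_A$ convention that produces the factor $2$, to the bounded test function $f(y)g(x)h(b)$, and then rewrite $\bE_\mu[f(Y)g(X)h(B)]$ using the conditional factorisation of $\mu$ over $B$. Your extra remark that the full-variation convention gives constant $1$ is also correct, and it only strengthens the stated bound.
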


\begin{proof}
The total variation bound implies that expectations of bounded functions differ by at most $2r(t)$ under the convention $\|P-Q\|_{\mathrm{TV}}=\sup_A|P(A)-Q(A)|$.  Apply this to the bounded function $f(y)g(x)h(b)$.  Under $\mu$, conditional factorisation gives the integral expression.  This proves the displayed estimate.
\end{proof}

\begin{corollary}
In an equilibrium diffusion with a factorised Gibbs invariant density, convergence to equilibrium implies asymptotic instantaneous boundary factorisation.  In a NESS diffusion with factorised invariant density and divergence-free stationary current, convergence to the NESS implies the same instantaneous boundary limit.  The two cases differ by the presence or absence of stationary current and entropy production, not by the limiting conditional independence statement.
\end{corollary}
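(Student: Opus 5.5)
The corollary is a direct specialisation of the preceding total-variation convergence theorem. The plan is to check its two hypotheses in each regime: that the invariant law $\mu$ satisfies $Y\ci X\mid B$, and that some rate $r(t)\to0$ controls $\|\nu P_t-\mu\|_{\mathrm{TV}}$. Once both hold, the displayed estimate of that theorem gives
\[
\bE_\nu[f(Y_t)g(X_t)h(B_t)]\longrightarrow\int h(b)\,\bE_\mu[f(Y)\mid B=b]\,\bE_\mu[g(X)\mid B=b]\,\mu_B(\dd b)
\]
for all bounded $f,g,h$. This limit is exactly what asymptotic instantaneous boundary factorisation means. The earlier proposition on weak limits of time-slice laws also supplies the qualitative version directly.

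In the equilibrium case, assume the invariant density is $\rho_*\propto e^{-\Phi}$ with $\Phi$ split as in \eqref{potential-split-eq}. The equilibrium potential-splitting theorem then gives $Y\ci X\mid B$ under $\mu$. In the NESS case, assume the invariant density factorises as in \eqref{ness-density-factor-eq}. The NESS boundary criterion gives the same conditional independence, and the divergence-free current $J_*$ plays no role in that fixed-time statement. In both cases I would read ``convergence to equilibrium/NESS'' as supplying the hypothesis $\|\nu P_t-\mu\|_{\mathrm{TV}}\leqslant r(t)\to0$ on the relevant class of initial laws. If only weak convergence is available, I would invoke the earlier convergence proposition instead, giving a qualitative rather than quantitative conclusion.

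The final sentence of the corollary follows from the observation that both cases reach their conclusion through the same fixed-time factorisation of $\mu$. The stationary current enters only through the entropy production formula \eqref{epr-current-eq}, where $J_\pi=0$ gives $\ep=0$ and $J_\pi\neq0$ (with the range condition) gives $\ep>0$. It does not enter the limiting conditional law.

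The only delicate point is the interpretation of ``convergence''. The quantitative bound requires total-variation convergence; under mere weak convergence, conditional kernels need not converge, so only the statement about the limit law itself survives. I would state the corollary under the TV hypothesis of the preceding theorem and note the weak-convergence variant separately.
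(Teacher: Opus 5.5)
Your proposal is correct and follows the paper's route. The paper's proof likewise applies the total-variation mixing theorem directly, noting that conditional independence of the invariant law depends only on the factorised invariant density, while equilibrium and NESS differ only in whether the stationary current, and hence entropy production, vanishes. Your explicit reading of ``convergence'' as the total-variation hypothesis, with the weak-convergence proposition as a qualitative fallback, is a sensible clarification that the paper leaves implicit.
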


\begin{proof}
Both statements are direct applications of the mixing theorem.  Equilibrium and NESS differ in whether the stationary current vanishes.  Conditional independence of the invariant law only concerns the invariant density, while irreversibility concerns the stationary current.
\end{proof}

\section{Summary remarks}

Overall we have shown that for a joint stochastic system $Z=(Y,B,X)$ on a Polish path space, the following are the central, equivalent or sufficient manifestations of a path space boundary.

\begin{enumerate}
\item The disintegration kernel factorises:
\[
        P(\dd y,\dd x\mid b)=P(\dd y\mid b)P(\dd x\mid b).
\]
\item Under domination by clamped product references, the conditional Radon--Nikodym density separates:
\[
        H_b(y,x)=U_b(y)V_b(x).
\]
\item In an It\=o model driven by independent Brownian innovations, the boundary-path likelihood contains no irreducible mixed exterior-interior term after the boundary path is fixed.
\item In a large-deviation model-selection formulation, boundary-compatible sparse coupling structures are selected whenever their penalised free action is strictly minimal; direct exterior-interior couplings are then exponentially suppressed at the rate given by the action gap.
\item In a conditionally Markov clamped model, the conditional generator splits as
\[
        \LL_t^b=\LL_{Y,t}^b\otimes1+1\otimes\LL_{X,t}^b.
\]
\item In a Gaussian linear model, the path space conditional covariance block $\Sigma_{YX\mid B}$ vanishes, equivalently the path space precision block $\mathcal K_{YX}$ vanishes.
\item In an equilibrium steady state, the potential splits as
\[
        \Phi(y,b,x)=\Phi_Y(y,b)+\Phi_X(x,b)+\Phi_B(b).
\]
\item In a nonequilibrium steady state, the invariant density factorises conditionally over $B$, and the stationary current/generator introduces no residual exterior-interior coupling after conditioning on the boundary path.
\item The factorisation gap vanishes:
\[
        \CMI(Y_{[0,T]};X_{[0,T]}\mid B_{[0,T]})=0.
\]
\end{enumerate}
When exact equality fails, the conditional mutual information is the canonical error.  It is at once a relative entropy between coupled and separated path ensembles, a distinguishability bound for observations, and the free energy excess of the best product approximation.

\bibliographystyle{amsalpha}
\bibliography{main-new-revised}

\end{document}